\definecolor{dark-gray}{gray}{0.45}
\newcommand{\val}{\mbox{\rm val}}
\newcommand{\cost}{\mbox{\rm r}}
\newcommand{\SP}{\mbox{\rm SP}}
\newcommand{\RS}{R_S}
\DeclareRobustCommand*\circled[1]{\tikz[baseline=(char.base)]{ 
            \node[shape=circle,draw,inner sep=2pt] (char) {\text{#1}};}}
\newcommand*{\textoverline}[1]{$\overline{\hbox{#1}}\m@th$}
\newcommand{\LimitReach}{\mathsf{LPre}}
\newcommand{\N}{\ensuremath{{\rm \mathbb N}}}
\newcommand{\R}{\ensuremath{{\rm \mathbb R}}}
\newcommand{\E}{\ensuremath{{\rm \mathbb E}}}
\newcommand{\EXP}{\mathsf{LimAvgPre}}
\newcommand{\ExpRew}{\mathsf{ExpRew}}
\newcommand{\act}{A}
\newcommand{\mov}{\Gamma}
\newcommand{\trans}{\delta}
\newcommand{\stra}{\sigma}
\newcommand{\bigstra}{\Sigma}
\newcommand\distr{{\mathcal D}}
\newcommand\pat{\omega}
\newcommand\pats{\Omega}
\newcommand{\game}{G}
\newcommand{\supp}{\mathrm{Supp}}
\newcommand{\dest}{\mathrm{Succ}}
\newcommand{\cala}{{\mathcal A}}
\newcommand{\LimInfAvg}{\mathsf{LimInfAvg}}
\newcommand{\LimSupAvg}{\mathsf{LimSupAvg}}
\def\set#1{\{ #1 \}}
\newcommand{\coBuchi}{\mathsf{coBuchi}}
\newcommand{\Safe}{\mathsf{Safe}}
\newcommand{\Reach}{\mathsf{Reach}}
\newcommand{\ov}{\overline}
\newcommand{\Mem}{\mathsf{Mem}}
\newcommand{\stay}{\mbox{\rm Stay}}
\newcommand{\cover}{\mbox{\rm Cover}}
\def\@comment{\let\do\@makeother \dospecials\catcode`\^^M=10\def\par{}}
\def\begincomment{\@comment\@xcomment}
\newenvironment{comment}{\begincomment}{}
 \newtheorem{theorem}{Theorem}
 \newtheorem{lemma}[theorem]{Lemma}
 \newtheorem{remark}[theorem]{Remark}
\begin{document}

\title{The Value~1 Problem Under Finite-memory Strategies for Concurrent Mean-payoff Games}

\author{
Krishnendu Chatterjee (IST Austria) 
%\thanks{IST Austria. Email: {\tt krish.chat@ist.ac.at}}
\and Rasmus Ibsen-Jensen (IST Austria)
%\thanks{Department of Computer Science, Aarhus University, Denmark. E-mail:{\tt rij@cs.au.dk}
}
%%\institute{IST Austria}
%}

\date{}
\maketitle

\begin{abstract}
We consider concurrent mean-payoff games, a very well-studied class of two-player (player~1 vs player~2) 
zero-sum games on finite-state graphs where every transition is assigned a reward between~0 and~1, 
and the payoff function is the long-run average of the rewards. 
%%%with limit-average (or mean-payoff) function.   
%%where  
%%played on a graph 
%by two-players, where in each round the players simultaneously choose a move,
%and the current state along with the joint moves determine the successor state.
%We study the most fundamental objective for concurrent games, namely, 
%mean-payoff or limit-average objective, where a reward between~0 and~1 
%is associated to every transition, and the goal of player~1 is to maximize the 
%long-run average of the rewards, and the objective of player~2 is strictly 
%the opposite (i.e., the games are zero-sum).
The value is the maximal expected payoff that player~1 can guarantee against 
all strategies of player~2.
We consider the computation of the set of states with value~1 under 
finite-memory strategies for player~1, and our main results for the 
problem are as follows: 
(1)~we present a polynomial-time algorithm;
(2)~we show that whenever there is a finite-memory strategy,
there is a stationary strategy that does not need memory at all; and 
(3)~we present an optimal bound (which is double exponential) 
on the patience of stationary strategies 
(where patience of a distribution is the inverse of the smallest
positive probability and represents a complexity measure of a stationary 
strategy).
\end{abstract}

\clearpage
\setcounter{page}{1}

\section{Introduction}

\noindent{\bf Concurrent mean-payoff games.}
Concurrent mean-payoff games are played on finite-state graphs 
by two players (player~1 and player~2) for infinitely many rounds.
In each round, the players simultaneously choose moves (or actions), 
and the current state along with the two chosen moves determine a
probability distribution over the successor states. 
The outcome of the game (or a \emph{play}) is an infinite sequence of states
and action pairs.
Every transition is associated with a reward between~$0$ and~$1$, 
and the mean-payoff (or limit-average payoff) of a play is the limit-inferior 
(or limit-superior) average of the rewards of the play. 
Concurrent games were introduced in a seminal work of Shapley~\cite{Sha53},
where \emph{discounted} sum objectives (or games that halt with probability~1) 
were considered.
The generalization to concurrent games with mean-payoff objectives 
(or games that have zero stop probabilities) was presented by Gillette in~\cite{Gil57}.
The player-1 \emph{value} $\val(s)$ of the game at a state $s$ is the 
supremum value of the expectation that player~1 can guarantee for the 
mean-payoff objective against all strategies of player~2.
The games are zero-sum where the objective of player~2 is the opposite.

\smallskip\noindent{\bf Important previous results.}
Many celebrated results have been established for concurrent mean-payoff 
games and its sub-classes:
(1)~the existence of values (or determinacy or equivalence of switching of 
strategy quantifiers for the players as in von-Neumann's min-max theorem) for 
concurrent discounted games was established in~\cite{Sha53};
(2)~the result of Blackwell and Ferguson established existence of values for
the celebrated game of Big-Match~\cite{BF68} (the celebrated Big-Match example is 
from~\cite{Gil57})\footnote{note that even showing existence of a value for the 
specific Big-Match game was open for years, which shows the hardness of analysis of such 
games}; and
(3)~developing on the results of~\cite{BF68} and of Bewley and Kohlberg on 
Puisuex series~\cite{BK76} the existence of values for concurrent mean-payoff 
games was established in~\cite{MN81}.
The decision problem of whether the value $\val(s)$ is at least a rational 
constant $\lambda$ can be decided in PSPACE~\cite{CMH08,HKLMT11}; and 
the results of~\cite{HKLMT11} present an algorithm for approximation 
which is polynomial in the number of actions and double exponential in the size 
of the state space (hence if the number of states is constant then the value can be 
approximated in polynomial time).
Several special cases of concurrent mean-payoff games have been widely studied, for 
example,
(a)~concurrent reachability games~\cite{dAHK98} where reachability objectives are the 
very special case of mean-payoff objectives where reward zero is assigned to all transitions
other than a set of sink terminal states which are assigned reward~1;
(b)~turn-based deterministic mean-payoff games~\cite{EM79,ZP96}, where in each state at most 
one of the players have the choice of more than one action and the transition function is
deterministic; and
(c)~turn-based (stochastic) reachability games~\cite{Con92}. 
The decision problem of whether the value $\val(s)$ is at least a rational constant $\lambda$
is \emph{square-root sum} hard even for concurrent reachability 
games~\cite{EY06}, and 
even for the special case of turn-based stochastic reachability games~\cite{Con92}
or turn-based deterministic mean-payoff games~\cite{ZP96} the existence of a 
polynomial-time algorithm is a major and long-standing open problem.

\smallskip\noindent{\bf Value~1 problem and its potential significance.}
While the decision problem for value computation is notoriously hard 
for concurrent mean-payoff games, 
an important special case of the problem is to compute the set of 
states with value~1.
We refer to this problem as the value-1 set computation problem.
We discuss the potential significance of the value~1 problem for mean-payoff objectives.
It was shown in~\cite{CGHIKPS08} that reliability requirements can be specified as a mean-payoff
condition, where in every step a computation is done, and if the computation succeeds
a reward~1 is assigned, and if the computation might fail, then reward~0 is assigned.
The reliability is the long-run average reward. 
The value~1 problem asks whether there exists a strategy to ensure that reliability 
arbitrarily close to~1 can be achieved.
Note that this problem cannot naturally be modeled as a reachability objective.

\smallskip\noindent{\bf Strategies.}
A strategy in a concurrent game, considers the past 
history of the game (the finite sequence of states and actions played so 
far), and specifies a probability distribution over the next moves.
Thus a strategy requires memory to remember the past history of the 
game. 
A strategy is \emph{stationary} if it is independent of the past history
and only depends on the current state.  
The complexity of a stationary strategy is described by its \emph{patience}
which is the inverse of the minimum non-zero probability assigned to a move.
The notion of patience was introduced in~\cite{Eve57} and also studied in the
context of concurrent reachability games~\cite{HKM09,HIM11}.
A strategy is \emph{finite-memory} if the memory set used by the strategy is 
finite. 
Note that for implementability of a strategy (such as by an automata), we need 
a finite-memory strategy.

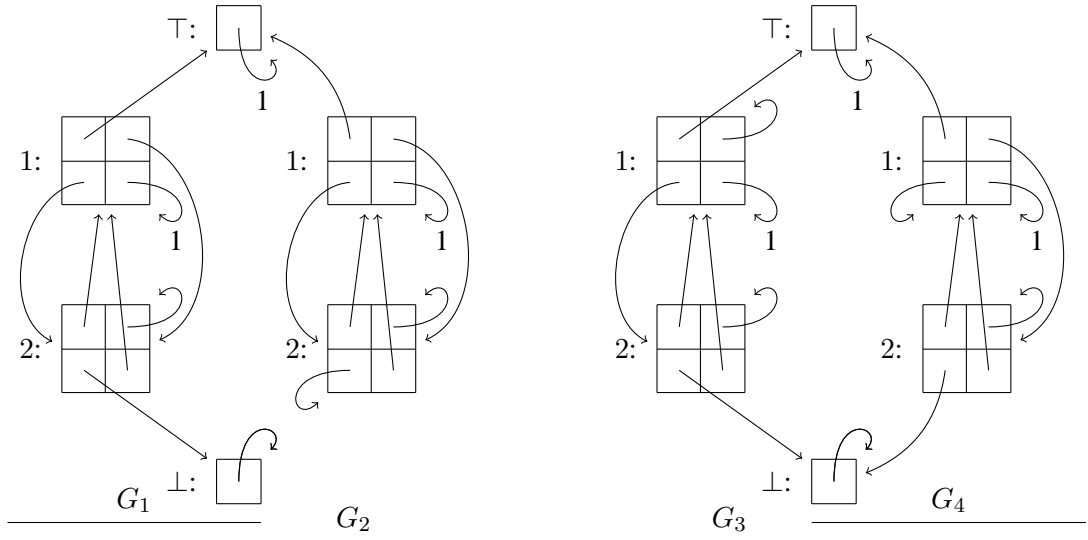
\begin{figure}
\centering
\begin{subfigure}[r]{0.45\textwidth}
\begin{tikzpicture}[node distance=2.5cm]

\matrix (s1) [label=left:$1$:,minimum height=1.5em,minimum width=1.5em,matrix of math nodes,nodes in empty cells, left delimiter={.},right delimiter={.}]
{
&\\
&\\
};
\draw[black] (s1-1-1.north west) -- (s1-1-2.north east);
\draw[black] (s1-1-1.south west) -- (s1-1-2.south east);
\draw[black] (s1-2-1.south west) -- (s1-2-2.south east);
\draw[black] (s1-1-1.north west) -- (s1-2-1.south west);
\draw[black] (s1-1-2.north west) -- (s1-2-2.south west);
\draw[black] (s1-1-2.north east) -- (s1-2-2.south east);
\matrix (s2) [label=left:$2$:,below of=s1,minimum height=1.5em,minimum width=1.5em,matrix of math nodes,nodes in empty cells, left delimiter={.},right delimiter={.}]
{
&\\
&\\
};
\draw[black] (s2-1-1.north west) -- (s2-1-2.north east);
\draw[black] (s2-1-1.south west) -- (s2-1-2.south east);
\draw[black] (s2-2-1.south west) -- (s2-2-2.south east);
\draw[black] (s2-1-1.north west) -- (s2-2-1.south west);
\draw[black] (s2-1-2.north west) -- (s2-2-2.south west);
\draw[black] (s2-1-2.north east) -- (s2-2-2.south east);
\matrix (v0) [label=left:$\bot$:,below right of=s2,minimum height=1.5em,minimum width=1.5em,matrix of math nodes,nodes in empty cells, left delimiter={.},right delimiter={.}]
{
\\
};
\draw[black] (v0-1-1.north west) -- (v0-1-1.north east);
\draw[black] (v0-1-1.south west) -- (v0-1-1.south east);
\draw[black] (v0-1-1.north west) -- (v0-1-1.south west);
\draw[black] (v0-1-1.north east) -- (v0-1-1.south east);

\matrix (v1) [label=left:$\top$:,above right of=s1,minimum height=1.5em,minimum width=1.5em,matrix of math nodes,nodes in empty cells, left delimiter={.},right delimiter={.}]
{
\\
};
\draw[black] (v1-1-1.north west) -- (v1-1-1.north east);
\draw[black] (v1-1-1.south west) -- (v1-1-1.south east);
\draw[black] (v1-1-1.north west) -- (v1-1-1.south west);
\draw[black] (v1-1-1.north east) -- (v1-1-1.south east);

\draw[->](s1-1-1.center) to (v1);
\draw[->](s1-1-2.center) to[bend left=75] (s2);
\draw[->](s1-2-1.center) to[bend right=75]  (s2);
\draw[->](s1-2-2.center) .. controls +(0:1) and +(-45:1.5) ..  node[near end,below] (x) {1} (s1);
\draw[->](s2-1-1.center) to  (s1);
\draw[->](s2-1-2.center) .. controls +(0:1) and +(45:1.5) ..   (s2);
\draw[->](s2-2-1.center) to  (v0);
\draw[->](s2-2-2.center) to  (s1);
\draw[->](v0-1-1.center) .. controls +(90:1) and +(45:1) ..   (v0);

\matrix (s3) [label=left:$1$:,below right of=v1,minimum height=1.5em,minimum width=1.5em,matrix of math nodes,nodes in empty cells, left delimiter={.},right delimiter={.}]
{
&\\
&\\
};
\draw[black] (s3-1-1.north west) -- (s3-1-2.north east);
\draw[black] (s3-1-1.south west) -- (s3-1-2.south east);
\draw[black] (s3-2-1.south west) -- (s3-2-2.south east);
\draw[black] (s3-1-1.north west) -- (s3-2-1.south west);
\draw[black] (s3-1-2.north west) -- (s3-2-2.south west);
\draw[black] (s3-1-2.north east) -- (s3-2-2.south east);
\matrix (s4) [label=left:$2$:,above right of=v0,minimum height=1.5em,minimum width=1.5em,matrix of math nodes,nodes in empty cells, left delimiter={.},right delimiter={.}]
{
&\\
&\\
};
\draw[black] (s4-1-1.north west) -- (s4-1-2.north east);
\draw[black] (s4-1-1.south west) -- (s4-1-2.south east);
\draw[black] (s4-2-1.south west) -- (s4-2-2.south east);
\draw[black] (s4-1-1.north west) -- (s4-2-1.south west);
\draw[black] (s4-1-2.north west) -- (s4-2-2.south west);
\draw[black] (s4-1-2.north east) -- (s4-2-2.south east);

\draw[->](v1-1-1.center) .. controls +(-90:1) and +(-45:1) .. node[below,midway] {1}   (v1);

\draw[->](s3-1-1.center) to[bend right] (v1);
\draw[->](s3-1-2.center) to[bend left=75] (s4);
\draw[->](s3-2-1.center) to[bend right=75] (s4);
\draw[->](s3-2-2.center) .. controls +(0:1) and +(-45:1.5) ..  node[near end,below] (x) {1} (s3);
\draw[->](s4-1-1.center) to  (s3);
\draw[->](s4-1-2.center) .. controls +(0:1) and +(45:1.5) ..   (s4);
\draw[->](s4-2-1.center) .. controls +(180:1) and +(225:1.5) ..   (s4);
\draw[->](s4-2-2.center) to  (s3);
\draw[->](v0-1-1.center) .. controls +(90:1) and +(45:1) ..   (v0);
\path let \p1 = (v0-1-1.south west) , \p2 =(s4-2-2.south east) in  (\x1,\y1-.5cm) edge node[above] {$G_2$} (\x2+1 cm,\y1-.5cm);
\path let \p1 = (v0-1-1.south east) , \p2 =(s2-2-2.south west) in  (\x1,\y1-.25cm) edge node[above] {$G_1$} (\x2-1.3 cm,\y1-.25cm);

\end{tikzpicture}
\end{subfigure}
\quad
\begin{subfigure}[r]{0.45\textwidth}
\begin{tikzpicture}[node distance=2.5cm]

\matrix (s1) [label=left:$1$:,minimum height=1.5em,minimum width=1.5em,matrix of math nodes,nodes in empty cells, left delimiter={.},right delimiter={.}]
{
&\\
&\\
};
\draw[black] (s1-1-1.north west) -- (s1-1-2.north east);
\draw[black] (s1-1-1.south west) -- (s1-1-2.south east);
\draw[black] (s1-2-1.south west) -- (s1-2-2.south east);
\draw[black] (s1-1-1.north west) -- (s1-2-1.south west);
\draw[black] (s1-1-2.north west) -- (s1-2-2.south west);
\draw[black] (s1-1-2.north east) -- (s1-2-2.south east);
\matrix (s2) [label=left:$2$:,below of=s1,minimum height=1.5em,minimum width=1.5em,matrix of math nodes,nodes in empty cells, left delimiter={.},right delimiter={.}]
{
&\\
&\\
};
\draw[black] (s2-1-1.north west) -- (s2-1-2.north east);
\draw[black] (s2-1-1.south west) -- (s2-1-2.south east);
\draw[black] (s2-2-1.south west) -- (s2-2-2.south east);
\draw[black] (s2-1-1.north west) -- (s2-2-1.south west);
\draw[black] (s2-1-2.north west) -- (s2-2-2.south west);
\draw[black] (s2-1-2.north east) -- (s2-2-2.south east);
\matrix (v0) [label=left:$\bot$:,below right of=s2,minimum height=1.5em,minimum width=1.5em,matrix of math nodes,nodes in empty cells, left delimiter={.},right delimiter={.}]
{
\\
};
\draw[black] (v0-1-1.north west) -- (v0-1-1.north east);
\draw[black] (v0-1-1.south west) -- (v0-1-1.south east);
\draw[black] (v0-1-1.north west) -- (v0-1-1.south west);
\draw[black] (v0-1-1.north east) -- (v0-1-1.south east);
\matrix (v1) [label=left:$\top$:,above right of=s1,minimum height=1.5em,minimum width=1.5em,matrix of math nodes,nodes in empty cells, left delimiter={.},right delimiter={.}]
{
\\
};
\draw[black] (v1-1-1.north west) -- (v1-1-1.north east);
\draw[black] (v1-1-1.south west) -- (v1-1-1.south east);
\draw[black] (v1-1-1.north west) -- (v1-1-1.south west);
\draw[black] (v1-1-1.north east) -- (v1-1-1.south east);

\draw[->](s1-1-1.center) to (v1);
\draw[->](s1-1-2.center) .. controls +(0:1) and +(45:1.5) ..  (s1);

\draw[->](s1-2-1.center) to[bend right=75]  (s2);
\draw[->](s1-2-2.center) .. controls +(0:1) and +(-45:1.5) ..  node[near end,below] (x) {1} (s1);
\draw[->](s2-1-1.center) to  (s1);
\draw[->](s2-1-2.center) .. controls +(0:1) and +(45:1.5) ..   (s2);
\draw[->](s2-2-1.center) to  (v0);
\draw[->](s2-2-2.center) to  (s1);
\draw[->](v0-1-1.center) .. controls +(90:1) and +(45:1) ..   (v0);

\matrix (s3) [label=left:$1$:,below right of=v1,minimum height=1.5em,minimum width=1.5em,matrix of math nodes,nodes in empty cells, left delimiter={.},right delimiter={.}]
{
&\\
&\\
};
\draw[black] (s3-1-1.north west) -- (s3-1-2.north east);
\draw[black] (s3-1-1.south west) -- (s3-1-2.south east);
\draw[black] (s3-2-1.south west) -- (s3-2-2.south east);
\draw[black] (s3-1-1.north west) -- (s3-2-1.south west);
\draw[black] (s3-1-2.north west) -- (s3-2-2.south west);
\draw[black] (s3-1-2.north east) -- (s3-2-2.south east);
\matrix (s4) [label=left:$2$:,below of=s3,minimum height=1.5em,minimum width=1.5em,matrix of math nodes,nodes in empty cells, left delimiter={.},right delimiter={.}]
{
&\\
&\\
};
\draw[black] (s4-1-1.north west) -- (s4-1-2.north east);
\draw[black] (s4-1-1.south west) -- (s4-1-2.south east);
\draw[black] (s4-2-1.south west) -- (s4-2-2.south east);
\draw[black] (s4-1-1.north west) -- (s4-2-1.south west);
\draw[black] (s4-1-2.north west) -- (s4-2-2.south west);
\draw[black] (s4-1-2.north east) -- (s4-2-2.south east);

\draw[->](v1-1-1.center) .. controls +(-90:1) and +(-45:1) .. node[below,midway] {1}   (v1);

\draw[->](s3-1-1.center) to[bend right] (v1);
\draw[->](s3-1-2.center) to[bend left=75]  (s4);
\draw[->](s3-2-1.center) .. controls +(180:1) and +(225:1.5) ..   (s3);
\draw[->](s3-2-2.center) .. controls +(0:1) and +(-45:1.5) ..  node[near end,below] (x) {1} (s3);
\draw[->](s4-1-1.center) to  (s3);
\draw[->](s4-1-2.center) .. controls +(0:1) and +(45:1.5) ..   (s4);
\draw[->](s4-2-1.center) to[bend left]  (v0);
\draw[->](s4-2-2.center) to  (s3);
\draw[->](v0-1-1.center) .. controls +(90:1) and +(45:1) ..   (v0);

\path let \p1 = (v0-1-1.south west) , \p2 =(s4-2-2.south east) in  (\x1,\y1-.25cm) edge node[above] {$G_4$} (\x2+1 cm,\y1-.25cm);
\path let \p1 = (v0-1-1.south east) , \p2 =(s2-2-2.south west) in  (\x1,\y1-.5cm) edge node[above] {$G_3$} (\x2-1.3 cm,\y1-.5cm);

\end{tikzpicture}
\end{subfigure}

\caption{The games $G_1$ to $G_4$}
\label{fig:g1-g4}
%%\vspace{-1em}
\end{figure}

\smallskip\noindent{\bf Examples.}
We now illustrate concurrent mean-payoff games with a few examples. 
Consider the four games ($G_1, G_2, G_3,$ and $G_4$) shown in Figure~\ref{fig:g1-g4}: the transition 
functions are deterministic and shown as arrows; and transition with rewards~1
are annotated, and all other rewards are~0.
Each game has four states, namely,~1,~2, $\top$ and $\bot$; and since 
$\top$ and $\bot$ remain the same, in the figures $G_1$ and $G_2$ 
(also $G_3$ and $G_4$) are drawn such that they share $\top$ and $\bot$.
The state $\top$ has value~1 and state $\bot$ has value~0.
In the first game $G_1$, both state~1 and state~2 have value $1/2$ (because of symmetry).
The other three example games, $G_2$, $G_3$ and $G_4$, are minor variants of $G_1$ 
(only one successor is changed). 
\begin{compactenum}
\item In $G_2$, the edge from state $2$ to $\bot$ is changed to a self-loop.
In $G_2$, there exists an infinite-memory strategy to ensure that the mean-payoff
is~1, and for every $\epsilon>0$ there is a stationary strategy to ensure
mean-payoff $1-\epsilon$.
The witness stationary strategy is as follows: in state~1 
play the action pairs with probability $(\epsilon/4, 1-\epsilon/4)$ 
and in state~2 with probability $(1/2,1/2)$.

\item In $G_3$, the top edge from state~1 to state~2 is changed to a self-loop.
In $G_3$, there is  no strategy to ensure that the mean-payoff
is~1, but for every $\epsilon>0$ there is a stationary strategy to ensure
mean-payoff $1-\epsilon$.
The witness stationary strategy is as follows: in state~1 
play the action pairs with probability $(\epsilon/2, 1-\epsilon/2)$ 
and in state~2 with probability $(1-\epsilon^2/2,\epsilon^2/2)$.

\item In $G_4$,  the bottom edge from state~1 to state~2 is changed to a self-loop.
In $G_4$, there exists no stationary %%(or even finite-memory) 
strategy that can ensure positive mean-payoff value; however, for every $\epsilon>0$ 
there exists an infinite-memory strategy to ensure mean-payoff $1-\epsilon$.
\end{compactenum}
Details regarding the analysis of the values of the above games and in depth
discussion on the strategy constructions for them are available in~\cite[Section~1.6.2]{RasmusThesis}.

\smallskip\noindent{\bf Our contributions.} Our main contributions are 
related to the computation of the value~1 problem for concurrent mean-payoff
games where player~1 is restricted to finite-memory strategies\footnote{note that once 
a finite-memory strategy for player~1 is fixed, then there always exists a finite-memory 
optimal counter-strategy for player~2, and thus the strategies for player~2 are not restricted}. 
Our main results are as follows:
(1)~We present a polynomial-time algorithm to compute the value~1 set.
(2)~We show that stationary strategies are sufficient, i.e., whenever 
finite-memory strategies exist, then there is a stationary strategy.
(3)~We establish an optimal double exponential patience bound for the 
witness stationary strategies (our contribution for patience is the upper
bound, and the matching lower bound follows from~\cite{HIM11,HKM09} for the 
special case of reachability objectives).
A key and novel insight of our polynomial-time algorithm is that we 
establish that we can use local operators and iterate them to compute the value~1 set; 
this is perhaps counter-intuitive for concurrent mean-payoff games as no strategy-iteration 
algorithm is known to exist. 
In addition we also establish a robustness result, which shows that for concurrent
games, if the support of the transition probabilities match (but the 
precise transition probabilities may differ), then the value~1 set 
remains unchanged.

\smallskip\noindent{\bf Related works.}
The problem of value-1 set computation has been extensively studied in many different 
contexts; such as, concurrent games with reachability objectives~\cite{dAHK98} as well as
with $\omega$-regular and prefix independent objectives~\cite{CdAH11,Cha07a,Cha14}, 
probabilistic automata~\cite{CT12,FGO12},
and probabilistic systems with counters~\cite{BBEKW10}.
However, the value-1 set computation was not considered for concurrent mean-payoff games
which we consider in this work.
A related problem of computing the set of states where there exists an optimal strategy that ensures 
mean-payoff~1 (almost-sure winning) has been considered 
in~\cite{CI14a}.

\newcommand{\dist}{\xi}
\newcommand{\C}{\mathcal{C}}
\newcommand{\Exp}{\mathbb{E}}
\newcommand{\vare}{\varepsilon}
\newcommand{\patc}{\mathsf{pat}}

\section{Definitions}

In this section we present the definitions of game structures, strategies, 
mean-payoff objectives, the value and value~1 problem, and other basic notions.

\smallskip\noindent{\bf Probability distributions.}
For a finite set~$A$, a {\em probability distribution\/} on $A$ is a
function $\trans\!:A\to[0,1]$ such that $\sum_{a \in A} \trans(a) = 1$.
We denote the set of probability distributions on $A$ by $\distr(A)$. 
Given a distribution $\trans \in \distr(A)$, we denote by $\supp(\trans) = 
\{x\in A \mid \trans(x) > 0\}$ the {\em support\/} of the distribution 
$\trans$.
For a distribution, the \emph{patience} of the distribution is the inverse of 
the minimum non-zero probability assigned to an element: formally,
the patience $\patc(\trans)$ is  
$\max_{a \in A} \set{ \frac{1}{\trans(a)} \mid \trans(a)>0}$.

\smallskip\noindent{\bf Concurrent game structures.} 
A (two-player) {\em concurrent stochastic game structure\/} 
$\game =  (S, \act,\mov_1, \mov_2, \trans)$ consists of the 
following components.

\begin{itemize}

\item A finite state space $S$ and a finite set $\act$ of actions (or moves).

\item Two move assignments $\mov_1, \mov_2 \!: S\to 2^{\act}
	\setminus \emptyset$.  For $i \in \{1,2\}$, assignment
	$\mov_i$ associates with each state $s \in S$ the non-empty
	set $\mov_i(s) \subseteq \act$ of moves available to player $i$
	at state $s$.  
	For technical convenience, we assume that $\mov_i(s) \cap \mov_j(t) = \emptyset$ 
	unless $i=j$ and $s=t$, for all $i,j \in \{1,2\}$ and $s,t \in S$. 
	If this assumption is not met, then the moves can be trivially renamed to satisfy the assumption.

\item A probabilistic transition function
	$\trans\!:S\times\act\times\act \to \distr(S)$, which
	associates with every state $s \in S$ and moves $a_1 \in
	\mov_1(s)$ and $a_2 \in \mov_2(s)$ a probability
	distribution $\trans(s,a_1,a_2) \in \distr(S)$ for the
	successor state.
\end{itemize}
For a set $Q \subseteq S$ of states we will denote by $\ov{Q} =S\setminus Q$ 
the complement of $Q$.
We will denote by $\trans_{\min}$ the minimum non-zero transition 
probability, i.e., $\trans_{\min}=\min_{s,t \in S} \min_{a_1 \in \mov_1(s),a_2\in \mov_2(s)}
\set{\trans(s,a_1,a_2)(t) \mid \trans(s,a_1,a_2)(t)>0}$.
We will denote by $n$ the number of states (i.e., $n=|S|$), and by 
$m$ the maximal number of actions available for a player at a state 
(i.e., $m=\max_{s\in S} \max\set{|\mov_1(s)|,|\mov_2(s)|}$). We will later define Markov chains as games where $m=1$. Since finding the mean-payoff of Markov chains can be done in polynomial time, we will only consider the case where $m\geq 2$.
For all states $s \in S$, moves $a_1 \in
\mov_1(s)$ and $a_2 \in \mov_2(s)$, let 
$\dest(s,a_1,a_2) = \supp(\trans(s,a_1,a_2))$ denote
the set of possible successors of $s$ when moves $a_1$ and $a_2$
are selected. 
The size of the transition relation of a game structure is defined as
$|\delta|=\sum_{s\in S}\sum_{a_1 \in \mov_1(s)} \sum_{a_2 \in \mov_2(s)} |\dest(s,a_1,a_2)|$.

\medskip\noindent{\bf One step probabilities.} 
Given a concurrent game structure  $\game$, a state $s$, 
two distributions $\dist_1 \in \distr(\mov_1(s))$ and $\dist_2 \in \distr(\mov_2(s))$,
the one step probability transition for a set $U$ of states, 
denoted as $\trans(s,\dist_1,\dist_2)(U)$ is 
$\sum_{a_1 \in \mov_1(s), a_2 \in \mov_2(s),t \in U} \trans(s,a_1,a_2)(t)\cdot \dist_1(a_1) \cdot \dist_2(a_2)$.
Often we will consider the distribution of player~2 to be a single action, i.e.,
$\dist_2(a_2)=1$ for an action $a_2$, and then use 
the notation $\trans(s,\dist_1,a_2)$.
We will also write $\dest(s,\dist_1,\dist_2) =
\bigcup_{a_1 \in \supp(\dist_1), a_2\in \supp(\dist_2)} \dest(s,a_1,a_2)$
for the set of possible successors under the distributions.

\smallskip\noindent{\bf Turn-based stochastic games, turn-based deterministic games and MDPs.} 
A game structure $\game$ is {\em turn-based stochastic\/} if at every
state at most one player can choose among multiple moves; that is, for
every state $s \in S$ there exists at most one $i \in \{1,2\}$ with
$|\mov_i(s)| > 1$. 
A turn-based stochastic game with a deterministic transition function is 
a turn-based deterministic game.
A game structure is a player-2 \emph{Markov decision process (MDP)} if for all 
$s \in S$ we have $|\mov_1(s)|=1$, i.e., only player~2 has choice of 
actions in the game, and player-1 MDPs are defined analogously.

\medskip\noindent{\bf Plays.}
At every state $s\in S$, player~1 chooses a move $a_1\in\mov_1(s)$,
and simultaneously and independently
player~2 chooses a move $a_2\in\mov_2(s)$.  
The game then proceeds to the successor state $t$ with probability
$\trans(s,a_1,a_2)(t)$, for all $t \in S$. 
A {\em path\/} or a {\em play\/} of $\game$ is an infinite sequence
$\pat =\big( (s_0,a^0_1, a^0_2), (s_1, a^1_1, a^1_2), (s_2,a_1^2,a_2^2)\ldots\big)$ of states and action pairs such that for all 
$k\ge 0$ we have (1)~$s_{k+1} \in \dest(s_k,a^k_1,a^k_2)$; and (2)~$a_1^k\in \mov_1(s_k)$; and (3)~$a_2^k\in \mov_2(s_k)$.
We denote by $\pats$ the set of all paths.

%%\medskip\noindent{\bf Size of a game.} The \emph{size} of a concurrent 
%game is the sum of the size of the state space and the number of the 
%entries of the transition function. Formally the size of a game is 
%$|S| + \sum_{s\in S,a \in \mov_1(s), b\in \mov_2(s)} |\dest(s,a,b)|$. 

\smallskip\noindent{\bf Strategies.}
A {\em strategy\/} for a player is a recipe that describes how to 
extend prefixes of a play.
Formally, a strategy for player $i\in\{1,2\}$ is a mapping 
$\stra_i\!:(S\times \act \times \act)^* \times S \to\distr(\act)$ that associates with every 
finite sequence $x \in (S\times \act \times \act)^*$  of state and action pairs, and the 
current state $s$ in $S$, representing the past history of the game, 
a probability distribution $\stra_i(x \cdot s)$ used to select
the next move. 
The strategy $\stra_i$ can prescribe only moves that are available to player~$i$;
that is, for all sequences $x\in (S \times \act \times \act)^*$ and states $s\in S$, we require that
$\supp(\stra_i(x\cdot s)) \subseteq \mov_i(s)$.  
We denote by $\bigstra_i$ the set of all strategies for player $i\in\{1,2\}$.
Once the starting state $s$ and the strategies $\stra_1$ and $\stra_2$
for the two players have been chosen, 
%%the game is reduced to an ordinary stochastic  process.
%Hence, 
the probabilities of events are uniquely defined~\cite{VardiP85}, where an {\em
event\/} $\cala\subseteq\pats$ is a measurable set of
paths.
For an event $\cala\subseteq\pats$, we denote by $\Pr_s^{\stra_1,\stra_2}(\cala)$ 
the probability that a path belongs to $\cala$ when the game starts from $s$ and 
the players use the strategies $\stra_1$ and~$\stra_2$.
We denote by $\Exp_s^{\stra_1,\stra_2}[\cdot]$ the associated expectation measure.
We will consider the following special classes of strategies:

\begin{enumerate}

\item \emph{Stationary (memoryless) and positional strategies.} 
A strategy $\sigma_i$ is \emph{stationary} (or memoryless) if it is independent of the history 
but only depends on the current state, i.e., for all $x,x'\in(S\times A\times A)^*$ and all $s\in S$, 
we have $\sigma_i(x\cdot s)=\sigma_i(x'\cdot s)$, and thus can be expressed as a function 
$\stra_i: S \to \distr(\act)$. 
For stationary strategies, the complexity of the strategy is described by the 
\emph{patience} of the strategy, which is the inverse of the minimum non-zero 
probability assigned to an action~\cite{Eve57}. 
Formally, for a stationary strategy  $\stra_i:S\to\distr(\act)$ for player~$i$, 
the patience is  $\max_{s \in S} \patc(\stra_i(s))$, where $\patc(\stra_i(s))$
is the patience of the distribution $\stra_i(s)$.
A strategy is \emph{pure (deterministic)} if it does not use randomization, i.e., for any history there is always
some unique action $a$ that is played with probability~1.
A pure stationary strategy $\stra_i$ is also called a {\em positional} strategy, and 
represented as a function $\stra_i: S \to \act$.
We denote by $\bigstra_i^S$ the set of stationary strategies for player~$i$.

\item \emph{Strategies with memory and finite-memory strategies.} 
A strategy $\stra_i$ can be equivalently defined as a pair of functions $(\stra_i^u,\stra_i^n)$,
along with a set $\Mem$ of memory states, such that 
(i)~the next move function $\stra_i^n: S \times \Mem \to \distr(\act)$ given the current state of 
the game and the current memory state specifies the probability distribution over the actions; and
(ii)~the memory update function $\stra_i^u: S \times \act \times \act \times \Mem \to \Mem$ given the current state of the game, the action pairs, 
and the current memory state updates the memory state. 
Any strategy can be expressed with an infinite set $\Mem$ of memory states, and a strategy is 
a {\em finite-memory} strategy if the set $\Mem$ of memory states is finite, otherwise it is an 
{\em infinite-memory} strategy.
We denote by $\bigstra_i^F$ the set of finite-memory strategies for player~$i$.

%\item \emph{Markov strategies.}
%A strategy $\stra_i$ is a \emph{Markov strategy} if it only depends on the length of
%the play and current state. 
%Formally, for all finite prefixes $x, x' \in (S\times \act \times \act)^*$ such that $|x|=|x'|$ (i.e., the 
%length of $x$ and $x'$ are the same, where the length of $x$ and $x'$ are the number of 
%states that appear in $x$ and $x'$, respectively) and all $s \in S$ 
%we have $\stra_i(x\cdot s)=\stra_i(x' \cdot s)$.

%\item \emph{Time-dependent memory.} 
%Consider a strategy $\stra_i$ with memory $\Mem$. 
%For every finite sequence $x \in (S \times \act \times \act)^*$ 
%there is an unique memory element $t(x)=m \in \Mem$ 
%such that after the finite sequence $x$ the current memory state is $m$
%(note that the memory update function is a deterministic function).
%For a time bound $T$, the time-dependent memory of the strategy $\stra_i$, 
%is the size of the set of memory elements used for histories upto length
%$T$, i.e., 
%$|\set{m \in \Mem \mid \exists x \in (S\times \act \times \act)^*, |x|\leq T, 
%t(x) =m }|$.
%Note that a Markov strategy can be played with time-dependent memory of 
%size $T$, for all $T>0$.
%A trivial upper bound on the time-dependent memory of a strategy is 
%$(|S|\cdot |A|\cdot |A|)^T$, for all $T \geq 0$.

\end{enumerate}

\smallskip\noindent{\bf Absorbing states.}
A state $s$ is {\em absorbing} if for all actions $a_1\in \Gamma_1(s)$ and all actions $a_2\in \Gamma_2(s)$ we have $\dest(s,a_1,a_2)=\{s\}$. In the present paper we will also require that $|\Gamma_1(s)|=|\Gamma_2(s)|=1$ if $s$ is absorbing.

\smallskip\noindent{\bf Objectives.} 
A quantitative objective $\Phi: \pats \to \R$ is a measurable 
function.
In this work we will consider \emph{limit-average} (or mean-payoff) objectives.
We will consider concurrent games with a reward function $\cost: S \times \act \times 
\act \to [0,1]$ that assigns a reward value $\cost(s,a_1,a_2)$ for all 
$s\in S$, $a_1 \in \mov_1(s)$ and $a_2 \in \mov_2(s)$.
%(see Remark~\ref{rem:rew} for 
%general real-valued reward functions\footnote{We consider boolean rewards for simplicity in presentation,
%and in Remark~\ref{rem:rew} we argue how the results extend to general rewards}.).
For a path $\pat= \big((s_0, a^0_1, a^0_2), (s_1, a^1_1,a^1_2), \ldots\big)$, 
the limit-inferior average (resp. limit-superior average) is defined as 
follows:
$\LimInfAvg(\pat)= \lim\inf_{n \to \infty} \frac{1}{n} \sum_{i=0}^{n-1} \cost(s_i,a^i_1,a^i_2)$
(resp. $\LimSupAvg(\pat)= \lim\sup_{n \to \infty} \frac{1}{n} \sum_{i=0}^{n-1} \cost(s_i,a^i_1,a^i_2)$).
%For a threshold $\lambda \in [0,1]$ we consider the following objectives:
%\begin{align*}
%&\LimInfAvg(\lambda) &= \set{\pat \mid \LimInfAvg(\pat) \geq \lambda}; 
%&\quad \LimSupAvg(\lambda) &= \set{\pat \mid \LimSupAvg(\pat) \geq \lambda};\\
%& \overline{\LimInfAvg}(\lambda) &= \set{\pat \mid \LimInfAvg(\pat) < \lambda}; 
%&\quad \overline{\LimSupAvg}(\lambda) &= \set{\pat \mid \LimSupAvg(\pat) < \lambda};\\
%& \overline{\LimInfAvg}_{\leq}(\lambda) &= \set{\pat \mid \LimInfAvg(\pat) \leq \lambda}; 
%&\quad \overline{\LimSupAvg}_{\leq}(\lambda) &= \set{\pat \mid \LimSupAvg(\pat) \leq \lambda}.\\
%\end{align*}
For the analysis of concurrent games with Boolean limit-average objectives 
(with rewards~0 and~1 only) we will also need \emph{reachability} and \emph{safety} objectives.
Given a target set $U\subseteq S$, the reachability objective $\Reach(U)$ requires 
some state in $U$ be visited at least once, i.e., defines the set
\[
\Reach(U)= \set{\pat=\big((s_0, a^0_1, a^0_2), (s_1, a^1_1,a^1_2), \ldots\big) 
\mid \exists i \geq 0.\ s_i \in U}
\] of paths.
The dual safety objective for a set $F\subseteq S$ of safe states requires that the 
set $F$ is never left, i.e., 
\[
\Safe(F)= \set{\pat=\big((s_0, a^0_1, a^0_2), (s_1, a^1_1,a^1_2), \ldots\big)
\mid \forall i \geq 0.\ s_i \in F}.
\]
We also consider the eventual safety objective, namely \emph{coB\"uchi} 
objective, that requires for a given set $F$ that ultimately only states 
in $F$ are visited, i.e.,
\[
\coBuchi(F)= \set{\pat=\big((s_0, a^0_1, a^0_2), (s_1, a^1_1,a^1_2), 
\ldots\big) \mid \exists j \geq 0. \forall i \geq j. \ s_i \in F}.
\]
Observe that reachability objectives are a very special case of 
Boolean reward limit-average objectives where states in $U$ 
are absorbing and are exactly the states with reward~1, and similarly
for safety objectives.

%%{\bf KRISH: Remark 0-1 rewards, to maximal rewards later.}

\smallskip\noindent{\bf Markov chains.}
A game structure $\game$ is a {\em Markov chain} if $m=1$.  We will in that case write $\trans(s)$ for the distribution $\trans(s,a_1,a_2)$, where $a_1$ is the unique action in $\Gamma_1(s)$ and $a_2$ is the unique action in $\Gamma_2(s)$. 
Markov chains defines a weighted graph $(S,E,w)$, where $(s,s')\in E$ iff $\trans(s)(s')>0$ and for all $(s,s')\in E$ we have that $w((s,s'))=\trans(s)(s')$. 
For an event $\cala\subseteq\pats$, we denote by $\Pr_s(\cala)$ the probability $\Pr_s^{\stra_1,\stra_2}(\cala)$, where $\stra_1$ and $\stra_2$ are the unique strategies for player~1  and player~2, respectively.
A state $s$ is reachable from another state $s'$ iff $s'$ is reachable from $s$ in $(S,E,w)$. A set of states $Z$ is reachable from a state $s$ iff a state in $Z$ is reachable from $s$.
For any set of states $Z$ in a Markov chain, let $\RS(Z)$, be the set of states from which $Z$ is not reachable. Clearly, $\RS(Z)\subseteq (S\setminus Z)$. A set of states $L$ is called a {\em recurrent class} if for each pair of states $s,s'\in L$ we have that $s'$ is reachable from $s$ and for each pair of states $s\in L$ and $s''\in (S\setminus L)$ we have that $s''$ is not reachable from $s$.
A \emph{recurrent class} in a Markov chain is a bottom scc (strongly connected component) in the graph of the Markov chain, where a bottom scc $L$ is an scc with no 
edges leaving the scc.

\smallskip\noindent{\bf Properties of Markov chains to be explicitly used in proofs.}
We will use several basic properties of Markov chains in our proof and we explicitly state them here.
Let us fix a Markov chain with state space $S$.
\begin{enumerate}
\item Given a set $Z \subseteq S$, for all $s \in S$, with probability~1 either $Z$ is visited infinitely often or $\RS(Z)$ is reached.
\label{markov property RSinf}

\item Given $Z \subseteq S$, for all $s\in S$, with probability~1 $\RS(Z)$ or $Z$ is reached, i.e.,
$\Pr_s(\Reach(\RS(Z)\cup Z))=1$.\label{markov property RS}

\item Given sets $Z\subseteq S$ and $Z' \subseteq S$, such that $Z$ can only be left from $(Z'\cap Z)$, 
then for all $s\in Z$ with probability~1 $(\RS(Z')\cap Z)$ or $(Z'\cap Z)$ is reached,
i.e., $\Pr_s(\Reach((\RS(Z')\cap Z)\cup (Z'\cap Z)))=1$.
Note the similarity with the previous property, only intersection with $Z$ is taken.
\label{markov property rs2new}

\item  Given sets $Z\subseteq S$ and $Z' \subseteq S$, such that $Z$ can only be left from $(Z'\cap Z)$ and 
from each state in $(Z'\cap Z)$ there is a positive probability to leave $Z$, then for all $s \in Z$ with 
probability~1  $(\RS(Z')\cap Z)$ or $(S\setminus Z)$ is reached, i.e., 
$\Pr_s(\Reach((\RS(Z')\cap Z)\cup (S\setminus Z)))=1$.\label{markov property rs2new2}

\item From every state $s\in S$, with probability~1 some recurrent class $L$ is reached; 
and given a recurrent class $L$ is reached, with probability~1 every state in $L$ is 
reached. \label{markov property recurrent class}

\item Consider $Z\subseteq S$ and $Z' \subseteq S$ such that for all $z\in Z$ the set $Z'$ is reachable. 
Then for all $s \in S$ with probability~1 either $\RS(Z)$ or $Z'$ is reached, i.e.,
$\Pr_s(\Reach(\RS(Z)\cup Z'))=1$.\label{markov property rs2}

\item Consider $Z\subseteq S$ and $Z' \subseteq S$ such that for all $s\in (S\setminus (Z\cup Z'))$, 
we have that $\trans(s)(Z)\cdot \epsilon\geq \trans(s)(Z')$, for $\epsilon>0$. 
Then, for all  $s\in (S\setminus (Z\cup Z'))$ the probability to reach $Z$ or $\RS(Z \cup Z')$ 
is at least $1-\epsilon$, i.e., $\Pr_s(\Reach(Z\cup \RS(Z\cup Z')))\geq 1-\epsilon$.
\label{markov property rs3}

\item Consider $Z\subseteq S$ and $Z' \subseteq S$ such that for all $s\in Z$ 
the set $Z'$ is reachable. 
Then for all $s \in Z$ with probability~1 $(S\setminus Z)$ or $Z'$ is reached, i.e., 
$\Pr_s(\Reach((S\setminus Z) \cup Z'))=1$. \label{markov property reachability}
\end{enumerate}
We will refer to these properties as Markov property~1 to Markov property~\ref{markov property reachability}, respectively. 
%%We argue about their correctness in the technical appendix (INSERT).

\smallskip\noindent{\bf $\mu$-calculus.} %%, complementation, and levels.} 
%\label{sec-levels}
Consider a $\mu$-calculus expression $\Psi = \mu X. \psi(X)$ over a
finite set $S$, where $\psi: 2^S \mapsto 2^S$ is monotonic.
The least fixpoint $\Psi = \mu X. \psi(X)$ is equal
to the limit $\lim_{k \to \infty} X_k$, where $X_0 = \emptyset$,
and $X_{k+1} = \psi(X_k)$.  
For every state $s \in \Psi$, we define the {\em level\/} $k \geq 0$
of $s$ to be the integer such that $s \not\in X_k$ and $s \in X_{k+1}$.  
The greatest fixpoint $\Psi = \nu X. \psi(X)$ is
equal to the limit $\lim_{k \to \infty} X_k$, where $X_0 = S$, and
$X_{k+1} = \psi(X_k)$.   
For every state $s \not\in \Psi$, we define the {\em level\/} $k \geq 0$ of
$s$ to be the integer such that $s \in X_k$ and $s \not\in X_{k+1}$.  
The {\em height\/} of a $\mu$-calculus expression 
$\gamma X. \psi(X)$, where $\gamma \in \set{\mu, \nu}$, 
is the least integer $h$ such that $X_h = \lim_{k \to \infty} X_k$.
An expression of height $h$ can be computed in $h+1$ iterations. 
%Given a $\mu$-calculus expression $\Psi=\gamma X. \psi(X)$, where 
%$\gamma \in \set{\mu, \nu}$, the complement $\neg \Psi = S \setminus \Psi$ of
%$\gamma$ is given by 
%$\overline{\gamma} X. \neg \psi (\neg X)$, 
%where $\overline{\gamma} = \mu$ if $\gamma = \nu$, and 
%$\overline{\gamma} = \nu$ if $\gamma= \mu$.
A $\mu$-calculus formula with nested $\mu$ and $\nu$ operators is a very 
succinct description of a nested iterative algorithm.

\smallskip\noindent{\bf Interpretation of $\mu$-calculus formula.}
Consider a $\mu$-calculus formula 
\[
\nu Y. \mu X. [f(Y,X)],
\]
where $f$ is pointwise monotonic. 
The intuitive way to read the formula is as $\nu Y. ( \mu X. [f(Y,X)] )$, i.e., 
given a value of $Y$ (say $Y_i$) we compute the inner least fixpoint with 
function $f(Y_i,X)$ which has only one free variable $X$. 
Thus for every $Y_i$, $\mu X. [f(Y_i,X)]$ assigns a value for $Y_i$.
In other words, the function $\mu X. [f(Y,X)]$ can be interpreted as a function 
$g(Y)$ on $Y$, and the outer fixpoint computes the greatest fixpoint of $g$.
The interpretation for computation of $\mu Y. \nu X. [f(Y,X)]$ is similar,
and is extended straightforwardly to more nested $\mu$-calculus formula.

\smallskip\noindent{\bf The value problem.}
Given an objective $\Phi$, and a class $\C$ of strategies for player~1,
the value for player~1 under the class $\C$ of strategies is the maximal
payoff that player~1 can guarantee with a strategy in class $\C$.
Formally,
$\val(\Phi,\C)(s) =\sup_{\sigma_1 \in \C} \inf_{\sigma_2 \in \bigstra_2} 
\Exp_s^{\sigma_1,\sigma_2}[\Phi]$.
In this work we will consider the computation of the \emph{value~1 set} 
under finite-memory strategies, i.e., the computation of the set 
$\set{s \in S \mid \val(\LimInfAvg(\cost),\bigstra_1^F)(s) =1}$.
Observe that to ensure value~1, player~1 must ensure that for all 
$\vare>0$, the probability to visit reward~1 is at least $1-\vare$,
and hence it follows if all rewards less than~1 are decreased to~0 
the value~1 set still remains the same, and hence for 
simplicity for the value~1 set computation we will consider
Boolean reward functions.

\newcommand{\algopred}{{\sc AlgoPred}}
\newcommand{\algo}{{\sc RankAlgo}}
\newcommand{\rank}{\mathsf{rk}}
\newcommand{\CC}{C}
\newcommand{\cale}{\mathcal{E}}

\section{The Value~1 Set Computation}
In this section we will present a polynomial-time algorithm to compute
the value~1 set, $\val_1(\Phi,\bigstra^F_1)$, for mean-payoff objectives 
$\Phi$. 
We start with a very basic and informal overview of the algorithm.

\smallskip\noindent{\bf Basic overview of the algorithm.}
The algorithm will compute the value~1 set $W$ by iteratively 
adding chunks of states that are guaranteed to be in the value~1
set, and the iteration will finally converge to $W$.
Let $U \subseteq W$ be the set of states that are already guaranteed
to be in the value~1 set (already identified as subset of $W$ in 
some previous iteration).
Then a new chunk $X$ of states are added such that $U \subseteq X \subseteq W$,
and the new chunk of states are also added iteratively (the algorithm is a nested 
iterative algorithm).
For the set $X$, let $U \subseteq Y \subseteq X$ be the subset that 
is already added, and then a new chunk $Y \subseteq Z \subseteq X$ is added 
such that player~1 can ensure that one of the following three conditions hold:
(1)~the probability to reach $U$ in one step can be made arbitrarily large as compared
to the probability to leave $W$ in one step (then $U$ can be reached with probability 
arbitrarily close to~1);
or (2)~the probability to stay in $X$ in one step is~1 and the probability to reach $Y$
in one step is positive (then $Y$ can be reached with probability~1);
or (3)~the probability to stay in $X$ in one step is~1, the one step expected reward
and the probability to stay in $Z$ in one step can be made arbitrarily close to~1.
Figure~\ref{fig:limit reach}, Figure~\ref{fig:almost reach}, and Figure~\ref{fig:get 1}
illustrate the above three conditions, respectively, pictorially.
Very informally, if always one of the the last two conditions is satisfied, then 
then the mean-payoff can be made arbitrarily close to~1; and the first condition 
ensures that the already computed value~1 set can be reached with probability arbitrarily 
close to~1. 
The initialization of the sets are as follows: $U$ and $Y$ are initialized to the empty 
set, and $W$, $X$, and $Z$ are initialized to the set of all states.
Note that the above three conditions are \emph{local} (one-step) 
conditions and we will first define an one-step predecessor operator to 
capture the above conditions.
We will then show how to compute the one-step predecessor operator 
in polynomial time, and finally show how to use the one-step predecessor
operator in a nested iterative algorithm to compute the value~1 set in
polynomial time.

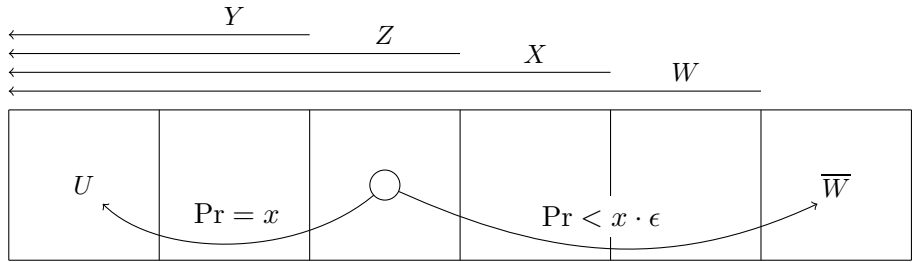
\begin{figure}
\begin{center}

\begin{tikzpicture}[node distance=3cm]
\tikzstyle{every state}=[fill=white,draw=black,text=black,font=\small , inner sep=0pt,minimum size=4mm]
\tikzstyle{txt}=[state,draw=white]
\foreach \x in {0,2,...,12} {\draw (\x,0) -- (\x,-2);  }
\draw (0,0) -- (12,0); 
\draw (12,-2) -- (0,-2);
\foreach \x / \y / \z in {$U$/1/-1,$Y$/3/1.25,$Z$/5/1,$X$/7/0.75,$W$/9/0.5,$\ov{W}$/11/-1} {\node[txt]  at (\y,\z) {\x};  }
\foreach \y / \z in {4/1,6/0.75,8/0.50,10/0.25} {\draw[<-] (0,\z) -- (\y,\z); }
\draw[<-] (1.25,-1.25).. controls (2,-2) and (4,-2) ..node[above=4pt] {$\Pr=x$} (5,-1);
\draw[<-,out=-155,in=-25] (10.75,-1.25) to node[above=4pt,fill=white] {$\Pr<x\cdot \epsilon$} (5,-1);
\node[state] (m1) at (5,-1){};

\end{tikzpicture}
\end{center}
\caption{Pictorial illustration of Equation~\ref{exp:limit reach}.}\label{fig:limit reach}
\end{figure}

\begin{figure}
\begin{center}
\begin{tikzpicture}[node distance=3cm]
\tikzstyle{every state}=[fill=white,draw=black,text=black,font=\small , inner sep=0pt,minimum size=4mm]
\tikzstyle{txt}=[state,rectangle,draw=white]
\node[txt] (text) at (5,-1.6){$x+y=1$};
\foreach \x in {0,2,...,12} {\draw (\x,0) -- (\x,-2);  }
\draw (0,0) -- (12,0); 
\draw (12,-2) -- (0,-2);
\foreach \x / \y / \z in {$U$/1/-1,$Y$/3/1.25,$Z$/5/1,$X$/7/0.75,$W$/9/0.5,$\ov{W}$/11/-1} {\node[txt]  at (\y,\z) {\x};  }
\foreach \y / \z in {4/1,6/0.75,8/0.50,10/0.25} {\draw[<-] (0,\z) -- (\y,\z); }
\node[txt,yshift=15] (text2) at (3.25,-1.25)  {$x=\Pr>0$};
\node[txt,yshift=15] (text3) at (6.75,-1.25)  {$y=\Pr<1$};
\node[state] (m1) at (5,-1){};

\draw[<-,out=-25,in=-155] (3.25,-1.25) to  (m1);
\draw[<-,out=-155,in=-25] (6.75,-1.25) to (m1);

\end{tikzpicture}
\end{center}
\caption{Pictorial illustration of Equation~\ref{exp:almost reach}.}\label{fig:almost reach}
\end{figure}
\begin{figure}
\begin{center}
\begin{tikzpicture}[node distance=3cm]
\tikzstyle{every state}=[fill=white,draw=black,text=black,font=\small , inner sep=0pt,minimum size=4mm]
\tikzstyle{txt}=[state,draw=white]
\foreach \x in {0,2,...,12} {\draw (\x,0) -- (\x,-2);  }
\draw (0,0) -- (12,0); 
\draw (12,-2) -- (0,-2);
\foreach \x / \y / \z in {$U$/1/-1,$Y$/3/1.25,$Z$/5/1,$X$/7/0.75,$W$/9/0.5,$\ov{W}$/11/-1} {\node[txt]  at (\y,\z) {\x};  }
\foreach \y / \z in {4/1,6/0.75,8/0.50,10/0.25} {\draw[<-] (0,\z) -- (\y,\z); }
\draw[<-,out=-155,in=-25] (6.75,-1.25) to node[above right, very near start] {$\Pr\leq \epsilon$} (5,-1);
\node[state] (m1) at (5,-1){};
\path[<-] (m1) edge[loop left] node[above left,fill=white] {$\Pr\geq 1-\epsilon$} node[below left,fill=white] {$\ExpRew\geq 1-\epsilon$} (m1);

\end{tikzpicture}
\end{center}
\caption{Pictorial illustration of Equation~\ref{exp:get 1}.}\label{fig:get 1}
\end{figure}
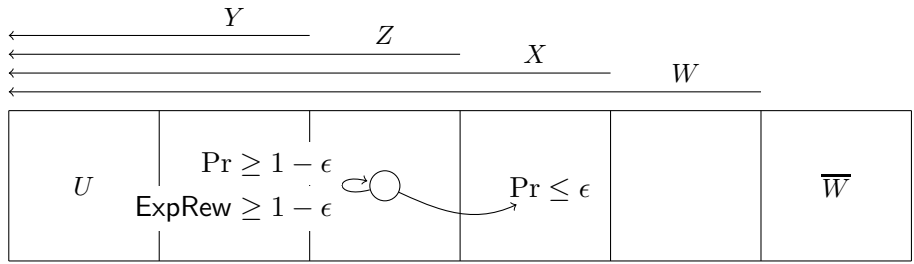

%%If we consider CMPG $G$ with only one state it is easy to see that one can simply play the stationary strategy for the matrix game with reward in entry $i,j$ being $\cost(i,j)$. This will in such games ensure the value precisely and that we can easily find the value using linear programming. We know from bla bla (insert ref to paper on patience of matrix games) that the patience is at most BLA to play optimally and from bla bla that the patience is at most BLA to play $\epsilon$-optimally. Note though that if we, like in the present paper, restrict us to games of where the value is equal to the maximal reward then any such matrix game has patience only 1 (since some action for player~1 must be played with probability greater than $\frac{1}{m}$ and therefore ensure against all actions of player~2 that maximal reward is achived).

\subsection{One-step predecessor operator\label{sec:one-step}}
We first formally define the one-step predecessor operator that was described 
informally in the basic overview of the algorithm. 
Given a state $s$ and two distributions $\dist_1 \in \distr(\mov_1(s))$ and 
$\dist_2 \in \distr(\mov_2(s))$, the expected one-step reward 
$\ExpRew(s,\dist_1,\dist_2)$ is defined as follows:
$\sum_{a_1 \in \mov_1(s), a_2 \in \mov_2(s)} \dist_1(a_1) \cdot 
\dist_2(a_2) \cdot \cost(s,a_1,a_2)$.
We often use distributions for player~2 that plays a single action $a_2$
with probability~1, and use $a_2$ to denote such a distribution.
For sets $U\subseteq Y\subseteq Z\subseteq X\subseteq W$, 
the one-step predecessor operator for limit-average (mean-payoff) objectives,
denoted as $\EXP(W,U,X,Y,Z)$, is the set of states $s$ such that for all $0<\epsilon<\frac{1}{2}$, 
there exists a distribution $\dist_1^{\epsilon}$ over $\mov_1(s)$ such that for all 
actions $a_2$ in $\mov_2(s)$, we have that
\begin{align}
&\ \big(\epsilon \cdot \trans(s,\dist_1^\epsilon,a_2)(U) >  \trans(s,\dist_1^\epsilon,a_2)(\overline{W})\big) & \label{exp:limit reach}\\[2ex] 
\vee &\ \big( \trans(s,\dist_1^\epsilon,a_2)(X)=1 \wedge  \trans(s,\dist_1^\epsilon,a_2)(Y) > 0 \big) & \label{exp:almost reach}\\[2ex]
\vee &\ \big( \trans(s,\dist_1^\epsilon,a_2)(X)=1 \wedge 
\ExpRew(s,\dist_1^{\epsilon},a_2) \geq 1-\epsilon   \wedge  \trans(s,\dist_1^\epsilon,a_2)(Z) \geq 1- \epsilon\big) \label{exp:get 1}\enspace .
\end{align}
We denote the above conditions as Equation~\ref{exp:limit reach},
Equation~\ref{exp:almost reach}, and Equation~\ref{exp:get 1}, 
respectively. 
Also our nested iterative algorithm (as informally described) 
that uses the $\EXP(W,U,X,Y,Z)$ operator will ensure the required 
inclusion $U\subseteq Y\subseteq Z\subseteq X\subseteq W$.
% will be guaranteed by by the nested iterated algorithm for the {\em RASMUS: I AM UNSURE WHAT YOU WROTE HERE}.
Before presenting the algorithm for the computation of the $\EXP$ set, we first
discuss the special case when we only have the first condition 
Equation~\ref{exp:limit reach}, then describe some key properties of witness 
distributions, and finally present an iterative algorithm to compute $\EXP$.

\smallskip\noindent{\bf The $\LimitReach$ operator and witness parametrized 
distribution.} 
An algorithm for the computation of the predecessor operator (called the $\LimitReach$
operator) for reachability games was presented in~\cite{dAHK98} where only Equation~\ref{exp:limit reach} 
is required to be satisfied.
We extend the results of~\cite{dAHK98,CdAH08} to obtain the following properties 
(details presented in technical appendix):

\begin{itemize}

\item \emph{(Input and output).} 
The algorithm takes as input a state $s$, two sets $U \subseteq W$ of states, 
two sets of action sets $A_1 \subseteq \mov_1(s)$ and $A_2 \subseteq \mov_2(s)$,
and either rejects the input or returns the largest set $A_3 \subseteq A_2$ 
such that the following conditions hold: for every $0<\epsilon<\frac{1}{2}$ there exists a 
witness distribution 
$\dist_1^\epsilon \in \distr(A_1)$, with patience at most 
$\left(\frac{\epsilon \cdot \trans_{\min}}{2}\right)^{-(|A_1|-1)}$,
such that 
(i)~for all actions $a_2 \in A_3$ Equation~\ref{exp:limit reach}
is satisfied; and (ii)~for all actions $a_2' \in (A_2 \setminus A_3)$ 
we have $\dest(s,\dist_1^\epsilon,a_2') \subseteq W$.
The set $A_3$ is largest in the sense that if $A_4 \subseteq A_2$ and 
$A_4$ satisfies the above conditions, then $A_4 \subseteq A_3$. Notice that this indicates that for all $a_2 \in (A_2 \setminus A_3)$ 
we have $\dest(s,\dist_1^\epsilon,a_2) \cap U=\emptyset$, because otherwise $a_2$ would be in $A_3$. 
Moreover, the distribution $\dist_1^\epsilon$ has the largest possible
support, i.e., for all actions $a_1 \in (A_1\setminus \supp(\dist_1^\epsilon))$,
there exists an action $a_2$ in $(A_2 \setminus A_3)$ such that
$\dest(s,a_1,a_2) \cap \ov{W} \neq \emptyset$.
An input would only be rejected if for each action $a_1\in A_1$ there exists an action $a_2\in A_2$ such that $\dest(s,a_1,a_2) \cap \ov{W} \neq \emptyset$.

\item \emph{(Parametrized distribution).}
Finally, the witness family of distributions $\dist_1^\epsilon$, for $0<\epsilon<\frac{1}{2}$, is presented in a parametrized fashion as follows: 
the support $\supp(\dist_1^\epsilon)$ for all $0<\epsilon<\frac{1}{2}$ is 
the same (denoted as $A^*$),
and the algorithm gives the support set $A^*$, and a ranking function 
that assigns a number from $0$ to at most $|A^*|$ to every action in $A^*$, 
and for any $0<\epsilon<\frac{1}{2}$, the witness distribution $\dist_1^\epsilon$ 
plays actions with rank $i$ with probability proportional to $\epsilon^i$.
In other words, the support set $A^*$ and the ranking number of the actions in 
$A^*$ is a polynomial witness for the parametrized family of witness 
distributions $\dist_1^\epsilon$, for all $0<\epsilon<\frac{1}{2}$.
\end{itemize}
We summarize the important properties which we explicitly use later:
$\LimitReach(s,W,U,A_1,A_2)$ for $U \subseteq W$ returns the following 
(see Technical Appendix for correctness proof):
\begin{enumerate}
\item \emph{(Reject property of $\LimitReach$).} 
Reject and then for all $a_1 \in A_1$ there exists $a_2 \in A_2$ such 
that $\dest(s,a_1,a_2) \cap \ov{W} \neq \emptyset$ 

\item \emph{(Accept properties of $\LimitReach$).}
Accepts and returns the set $A_3 \subseteq A_2$ and 
a parametrized distribution $\dist_1^\epsilon$, for $0<\epsilon<\frac{1}{2}$, with support $\supp(\dist_1^\epsilon)\subseteq A_1$, such that the following properties
hold:
\begin{itemize}
\item (Accept property~a). For all $a_2 \in A_3$, the distribution 
$\dist_1^\epsilon$ satisfies Equation~\ref{exp:limit reach} for $a_2$.

\item (Accept property~b). For all $a_2 \in (A_2 \setminus A_3)$, 
we have $\dest(s,\dist_1^\epsilon,a_2) \cap \ov{W}=\emptyset$ and 
$\dest(s,\dist_1^\epsilon,a_2) \cap U=\emptyset$.

\item (Accept property~c). For all 
$a_1 \in (A_1\setminus \supp(\dist_1^\epsilon))$, there exists an action $a_2$ 
in $(A_2 \setminus A_3)$ such that $\dest(s,a_1,a_2)\cap\ov{W}\neq \emptyset$.

\item (Accept property~d). The set $A_3$ is largest in the sense that for all
$a_2 \in (A_2 \setminus A_3)$ and for all parametrized distributions $\dist_1^{\epsilon}$ 
over $A_1$, the Equation~\ref{exp:limit reach} cannot be satisfied,
while satisfying actions in $A_2$ using Equation~\ref{exp:limit reach}, 
or Equation~\ref{exp:almost reach}, or Equation~\ref{exp:get 1}, for any $X,Y,Z$ 
such that $U \subseteq Y \subseteq Z \subseteq X \subseteq W$. 

\end{itemize}
 
\end{enumerate}

\smallskip\noindent{\bf One action with large probability property.}
We will now show that if a state belongs to $\EXP$, then there is a family
of witness distributions where one action $a$ is played with very large 
probability.

\begin{lemma}\label{lemm:one_action}
Given $U\subseteq Y\subseteq Z\subseteq X\subseteq W$, if 
$s \in \EXP(W,U,X,Y,Z)$, then for all $0<\epsilon \leq \frac{\delta_{\min}}{m}$ 
there is a witness distribution to satisfy at least one of the three conditions 
(Equation~\ref{exp:limit reach}, Equation~\ref{exp:almost reach}, 
or Equation~\ref{exp:get 1}) of $\EXP$ where an action $a \in \mov_1(s)$ is played with 
probability at least  $1-\epsilon\cdot \delta_{\min}$.
\end{lemma}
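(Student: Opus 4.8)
The plan is to start from an arbitrary witness family guaranteed by $s \in \EXP(W,U,X,Y,Z)$, extract a single ``heavy'' action, and then re-concentrate essentially all the mass onto it. Concretely, fix $0 < \epsilon \le \delta_{\min}/m$ and pick a witness distribution $\dist_1^{\epsilon'}$ for a much smaller parameter $\epsilon' < \min(\epsilon/2,\ \delta_{\min}/m)$ (for instance $\epsilon' = \epsilon/4$). Let $A^\ast = \supp(\dist_1^{\epsilon'})$ and let $a \in A^\ast$ be an action of maximal probability, so that $\dist_1^{\epsilon'}(a) \ge 1/|A^\ast| \ge 1/m$. I would then define $\wh\dist_1$ to play $a$ with probability $1-\eta$, where $\eta = \epsilon\cdot\delta_{\min}$, and to spread the remaining mass $\eta$ over $A^\ast \setminus \{a\}$ proportionally to $\dist_1^{\epsilon'}$; if $|A^\ast|=1$ or already $\dist_1^{\epsilon'}(a) \ge 1-\eta$, then $\dist_1^{\epsilon'}$ itself is the desired witness and nothing needs to be done. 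By construction $\supp(\wh\dist_1)=A^\ast$ and $a$ is played with probability $\ge 1-\epsilon\cdot\delta_{\min}$, as required, so the entire content is to check that $\wh\dist_1$ is again a witness, i.e. that some one of Equations~(\ref{exp:limit reach})--(\ref{exp:get 1}) still holds against every $a_2 \in \mov_2(s)$.

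The key enabling insight, and the reason the heavy action behaves so well, is a rounding argument exploiting Boolean rewards together with the lower bound $\delta_{\min}$ on nonzero transition probabilities. I would partition $\mov_2(s)$ according to which equation each $a_2$ satisfied under $\dist_1^{\epsilon'}$. In each relevant case the ``bad mass'' of the whole distribution is small: for a condition-(\ref{exp:limit reach}) action $a_2$ we have $\trans(s,\dist_1^{\epsilon'},a_2)(\ov W) < \epsilon'$, and for a condition-(\ref{exp:get 1}) action $a_2$ we have $\trans(s,\dist_1^{\epsilon'},a_2)(\ov Z) \le \epsilon'$ while the total probability of reward-$0$ action pairs is $\le \epsilon'$. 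Since $a$ carries weight $\ge 1/m$, its own transition probability to $\ov W$ (resp.\ to $\ov Z$) is at most $\epsilon' m < \delta_{\min}$, and as every nonzero transition probability is $\ge \delta_{\min}$ this forces it to be exactly $0$; likewise $a$ cannot be a reward-$0$ action, since its weight $1/m$ exceeds the total reward-$0$ mass $\epsilon'$, so $\cost(s,a,a_2)=1$. Hence $a$ never leaves $W$ against any condition-(\ref{exp:limit reach}) action, and $\dest(s,a,a_2) \subseteq Z$ with $\cost(s,a,a_2)=1$ against every condition-(\ref{exp:get 1}) action.

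With the cleanness of $a$ in hand, verifying the conditions for $\wh\dist_1$ splits into an easy part and the main obstacle. The easy part is Equations~(\ref{exp:almost reach}) and~(\ref{exp:get 1}): the clause $\trans(s,\wh\dist_1,a_2)(X)=1$ survives because $\supp(\wh\dist_1)=A^\ast$ and every action of $A^\ast$ already stayed in $X$ against these $a_2$; the positive-$Y$ clause survives because the support is unchanged; and for Equation~(\ref{exp:get 1}) the reward bound $\ExpRew(s,\wh\dist_1,a_2)\ge 1-\epsilon$ and $\trans(s,\wh\dist_1,a_2)(Z)\ge 1-\epsilon$ follow at once from $a$ being clean and carrying weight $1-\eta$. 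The main obstacle is Equation~(\ref{exp:limit reach}), where re-concentrating changes numerator $\trans(s,\wh\dist_1,a_2)(U)$ and denominator $\trans(s,\wh\dist_1,a_2)(\ov W)$ simultaneously. Here I would use that $a$ contributes nothing to $\ov W$, so all of $\wh\dist_1$'s $\ov W$-mass is exactly the $\ov W$-mass of $\dist_1^{\epsilon'}$ rescaled by $\eta/(1-\dist_1^{\epsilon'}(a))$; expanding the target inequality $\epsilon\cdot\trans(s,\wh\dist_1,a_2)(U) > \trans(s,\wh\dist_1,a_2)(\ov W)$ and grouping by $a$'s transition probability to $U$ versus that of the remaining actions reduces it to showing two coefficients are nonnegative. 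This holds because $\epsilon > \epsilon'$ controls the remaining part and because $\epsilon(1-\eta) > \epsilon'$ whenever $\epsilon' < \epsilon/2$ controls $a$'s part, with strictness supplied by the original strict inequality forcing $\trans(s,\dist_1^{\epsilon'},a_2)(U) > 0$. I expect this inequality-chase to be the only delicate computation, and the parameter choices $\epsilon'=\epsilon/4$ and $\eta=\epsilon\cdot\delta_{\min}$ are tuned precisely so that it goes through.
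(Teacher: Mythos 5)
Your proposal is correct and takes essentially the same route as the paper's proof: identify an action $a$ carrying weight at least $1/m$ in a witness distribution, use the lower bound $\delta_{\min}$ on nonzero transition probabilities to conclude that $a$'s transitions are ``clean'' for every action of player~2 satisfying Equation~\ref{exp:limit reach} or Equation~\ref{exp:get 1} (no mass to $\ov{W}$, resp.\ stays in $Z$ with reward~1), and then re-concentrate probability $1-\epsilon\cdot\delta_{\min}$ on $a$ while scaling the rest, checking each of the three equations case by case. The only cosmetic difference is that you start from a witness for a smaller parameter $\epsilon'\approx\epsilon/4$ and close Equation~\ref{exp:limit reach} by an explicit inequality chase, whereas the paper uses the witness for $\epsilon$ itself and simply observes that, since $a$ contributes nothing to $\ov{W}$ and the remaining actions are uniformly scaled down, the ratio of $U$-mass to $\ov{W}$-mass can only increase.
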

\begin{proof}
Given $0<\epsilon \leq \frac{\delta_{\min}}{m}$, let $\dist_1^\epsilon$ be a witness
distribution such that for all actions in $\mov_2(s)$ at least one of the three conditions 
for $\EXP$ is satisfied.
Let $C_1$ be the set of actions $a_2$ in $\mov_2(s)$ such that $\dist_1^\epsilon$ 
and $a_2$ satisfy Equation~\ref{exp:limit reach}; 
respectively, $C_2$ for Equation~\ref{exp:almost reach}, and $C_3$ for Equation~\ref{exp:get 1}.
Let $a$ be some action such that $\dist_1^\epsilon(a)\geq \frac{1}{m}$ 
(note that such an action must exist). 
If $\dist_1^\epsilon(a) \geq  1-\epsilon\cdot \delta_{\min}$, then 
we already have the desired action $a$; and we are done.
Otherwise, we consider the distribution $\dist'_1$ defined as follows:
\[
\dist_1'(a_1)=\begin{cases}
1-\epsilon\cdot \delta_{\min} & \text{if }a=a_1\\
\epsilon\cdot \delta_{\min}\cdot \frac{\dist_1^\epsilon(a_1)}{1-\dist_1^\epsilon(a)} & \text{otherwise} \enspace .
\end{cases}
\]
We now consider three cases to show $\dist_1'$ is also a witness distribution 
to satisfy at least one of the three conditions of $\EXP$ for $\epsilon$.
\begin{enumerate}
\item Consider an action $a_2$ in $C_1$. 
Since $a_2$ in $C_1$ and $\epsilon< \frac{\trans_{\min}}{m}$, we must have 
that $\dest(s,a,a_2)\cap \overline{W} \neq \emptyset$,
because otherwise given $\dist_1^\epsilon$ and $a_2$ the set $\ov{W}$ is reached
with probability at least $\frac{\trans_{\min}}{m}$ (as $a$ is played with 
probability at least $\frac{1}{m}$ by $\dist_1^\epsilon$), i.e.,
$\trans(s,\dist_1^\epsilon,a_2)(\ov{W}) \geq \frac{\trans_{\min}}{m}>\epsilon$.
This contradicts that $a_2$ satisfies Equation~\ref{exp:limit reach} 
for $\dist_1^\epsilon$ for the given $\epsilon < \frac{\trans_{\min}}{m}$.
Hence given $a$ and $a_2$, the probability to leave the set $W$ is~0;
and since all the other actions are only scaled in $\dist_1'$ as compared
to $\dist_1^\epsilon$ we have 
\[
\frac{\trans(s,\dist_1^\epsilon,a_2)(U)}{\trans(s,\dist_1^\epsilon,a_2)(\ov{W})}
\leq \frac{\trans(s,\dist_1',a_2)(U)}{\trans(s,\dist_1',a_2)(\ov{W})}
\]
Hence, given $\dist_1'$ the action $a_2$ must also satisfy Equation~\ref{exp:limit reach} for $\epsilon$. 

\item Consider an action $a_2$ in $C_2$. 
Since $a_2$ in $C_2$ (i.e., satisfies Equation~\ref{exp:almost reach}) we must have 
$\dest(s,\dist_1^\epsilon,a_2) \subseteq X$ (stay in $X$ with probability~1) and 
$\dest(s,\dist_1^\epsilon,a_2) \cap Y \neq \emptyset$ (next state in $Y$ with positive
probability).
Since $\dist'_1$ assigns positive probability to precisely the same set of actions as $\dist_1^\epsilon$, 
i.e., $\supp(\dist'_1)=\supp(\dist_1^\epsilon)$, 
we have that $\dest(s,\dist_1',a_2) = \dest(s,\dist_1^\epsilon,a_2) \subseteq X$ (stay in $X$ with probability~1) and 
$\dest(s,\dist_1',a_2) \cap Y = \dest(s,\dist_1^\epsilon,a_2) \cap Y  \neq \emptyset$ (next state in $Y$ with positive
probability).
Hence we have that $\dist_1'$ and $a_2$ must also satisfy Equation~\ref{exp:almost reach}.

\item Finally consider an action $a_2$ in $C_3$. 
We must have that (i)~$\dest(s,a,a_2) \subseteq Z$ and 
(ii)~$\cost(s,a,a_2)=1$; 
because otherwise we would either not end up in $Z$ or not get reward~1 with probability at 
least $\frac{\trans_{\min}}{m}$ when $a_2$ is played against $\dist_1^\epsilon$ 
(contradicting that $a_2$ satisfies  Equation~\ref{exp:get 1}).
Since $\dist_1'$ plays $a$ with larger probability than $\dist_1^\epsilon$, and
all other actions are scaled with probabilities of $\dist_1^\epsilon$, it
follows that for every $a_2$ in $C_3$ we must have that $\dist_1'$ and $a_2$ 
satisfy Equation~\ref{exp:get 1}. 
%%by construction.
\end{enumerate}
The desired result follows.
\end{proof}

\smallskip\noindent{\bf The action with large probability.}
In Lemma~\ref{lemm:one_action} we showed that some action is played with large
probability.
In the lemma the action was chosen depending on $\epsilon$, but since there
are only finitely many actions and if an action satisfies for some $0<\epsilon<\frac{1}{2}$,
then it also satisfies for all $\epsilon'$ such that $\epsilon\leq \epsilon' <\frac{1}{2}$, and
thus it follows that there is an action that is played with large probability.
We will call a parametrized distribution $\dist_1^\epsilon$, for $0<\epsilon<\frac{1}{2}$, an
\emph{$a$-large} distribution if the distribution plays action $a$ with probability 
at least $1-\epsilon\cdot \trans_{\min}$.
Thus the existence of witness $a$-large distributions, if such distributions 
exist, follows from Lemma~\ref{lemm:one_action}.
The main crux of the algorithm would be to find an action $a$ and 
a parametrized distribution that is $a$-large as a witness distribution for 
$\EXP$.
Our algorithm  will use the $\LimitReach$ operator iteratively.
The key information we need is encoded as a matrix as follows.

\smallskip\noindent{\bf The matrix for action sets.}
Given a state $s$, and the sets $U\subseteq Y\subseteq Z\subseteq X\subseteq W$, 
we define an $|\mov_1(s)|\times |\mov_2(s)|$-matrix $M$, such that 
$M_{a_1,a_2}\in \{\ov{W},W,U,X,Y,Z^0,Z^1\}$, that corresponds to the type of successor encountered 
if player~1 plays action $a_1$ and player~2 plays action $a_2$.
Let 
\[M_{a_1,a_2}=\begin{cases}
\overline{W} & \text{if }\dest(s,a_1,a_2)\cap \overline{W}\neq \emptyset \\
U & \text{if }\dest(s,a_1,a_2)\cap U\neq \emptyset \text{ and }\dest(s,a_1,a_2)\cap \overline{W}= \emptyset\\
W & \text{if }\dest(s,a_1,a_2)\cap (W\setminus X)\neq \emptyset \text{ and }\dest(s,a_1,a_2)\cap (\overline{W}\cup U)= \emptyset\\
Y & \text{if }\dest(s,a_1,a_2)\cap (Y\setminus U)\neq \emptyset \text{ and }\dest(s,a_1,a_2)\cap (\overline{W}\cup U\cup (W\setminus X))= \emptyset\\
X & \text{if }\dest(s,a_1,a_2)\cap (X\setminus Z)\neq \emptyset \\ &\qquad \text{and }\dest(s,a_1,a_2)\cap (\overline{W}\cup U\cup (W\setminus X)\cup (Y\setminus U))= \emptyset\\
Z^\ell & \text{if }\dest(s,a_1,a_2)\cap (Z\setminus Y)\neq \emptyset \\ &\qquad \text{and }\dest(s,a_1,a_2)\cap (\overline{W}\cup U\cup (W\setminus X) \cup (Y\setminus U)\cup (X\setminus Z))= \emptyset \\
    & \qquad \text{and } \cost(s,a_1,a_2)=\ell, \text{ for } \ell\in \set{0,1} \enspace .
\end{cases}
\]
The matrix uses that $U\subseteq Y\subseteq Z\subseteq X\subseteq W$, to ensure that the matrix is well-defined. Notice that $M$ encodes all the information needed by $\LimitReach$ (the entries equal to $W,Y,X,Z^1,Z^0$ all ensures both $\ov{W}$ and $U$ are not reached, $U$ ensures that $U$ is reached with probability at least $\delta_{\min}$ and $\ov{W}$ is not reached. The entries $\ov{W}$ ensures that $\ov{W}$ is reached with probability between $\delta_{\min}$ and $1$). Hence, we could alternatively give $M$ as input to $\LimitReach$.

\smallskip\noindent{\bf Intuitive description of the algorithm.} 
We first present an intuitive description of our algorithm and then present it formally.
The basic idea of the algorithm is to use $\LimitReach$ iteratively and the existence of $a$-large witness distributions. 
Given a candidate action $a$, we reject $a$ or accept $a$ using the following procedure.
First, given the action $a$, if there is an action $a_2$ such that $W$ is left with positive probability given $a$ and $a_2$
(i.e., $M_{a,a_2}=\ov{W}$), then we reject $a$.
Second, we check if playing $a$ with probability 1 satisfies all actions (by either of the three conditions), 
and if so we accept. 
If neither of the first two conditions hold, then we use an iterative procedure.
Let $C$ be the set of actions which are guaranteed to be satisfied (by Equation~\ref{exp:limit reach}) 
by playing an $a$-large distribution ($C$ consists of each action $a_2$ such that $M_{a,a_2}=U$). 
We run $\LimitReach$, and start with $(\Gamma_1(s)\setminus \{a\})$ as \emph{available actions} for player~1 
(we are only interested in $a$-large distributions and we do not consider $a$ for $\LimitReach$) and 
$(\Gamma_2(s)\setminus C)$ as available actions for player~2. 
If $\LimitReach$ rejects, we also reject: this is because no matter which action $a_1\neq a$ is played with the largest 
probability (and we could not play $a$ alone) there is an action $a_2$, such that $M_{a_1,a_2}=\ov{W}$ and $M_{a,a_2}\neq U$, 
which ensures that all three equations are violated.
If $\LimitReach$ accepts, then we obtain a %parametrized 
witness distribution $\dist_1$ and a set $A_3$ of actions of player~2
such that $\dist_1$ satisfies Equation~\ref{exp:limit reach} for all actions in $A_3$. 
We then create $\dist_1'$, which is $\dist_1$ scaled so that it plays an $a$-large distribution 
(note that $\dist_1$ plays $a$ with probability 0).
Afterwards we check if all actions for player~2 are satisfied by $\dist_1'$. 
If so, we accept. Otherwise, we check that whether for each action $a_2$ outside $(A_3\cup C)$ 
we can satisfy either Equation~\ref{exp:almost reach} or Equation~\ref{exp:get 1}: 
%(such actions must be satisfied using either of those two equations, by Accept property~d of $\LimitReach$): that is, we use that 
for $a_2$ to be satisfied using Equation~\ref{exp:get 1}, we must have that $M_{a,a_2}=Z^1$; and 
for $a_2$ to be satisfied using Equation~\ref{exp:almost reach}, the distribution $\dist_1'$ must play some 
action $a_1$ with positive probability such that $M_{a_1,a_2}=Y$. 
If for some $a_2$ outside $(A_3\cup C)$, neither 
$M_{a,a_2}=Z^1$, nor $M_{a_1,a_2}=Y$, for some $a_1$ played with positive probability, we reject. 
Otherwise, if we did not reject, we remove each action $a_1$ for player~1 from available actions, 
for which there exists an $a_2\in (A_3\cup C)$, such that $M_{a_1,a_2}=W$. 
Note that if $M_{a_1,a_2}=W$, then we cannot satisfy $a_2$ using either Equation~\ref{exp:almost reach} 
or Equation~\ref{exp:get 1}, if we play $a_1$ with positive probability. 
If the set of available actions does not contain $a$, then we cannot play $a$ with 
positive probability in an $a$-large distribution, which clearly means that no $a$-large distribution exists and thus we reject. 
If this new, smaller set of actions for player~1 contains $a$, we iterate on with the new set as 
the set of available actions for player~1, and the available set for player~2 always remains 
as $(\Gamma_2(s)\setminus C)$.
Since, in every iteration, we get a smaller set of actions for player~1, we terminate at some point.

\smallskip\noindent{\bf The algorithm \algopred.}
We now describe the steps of the algorithm which we refer as \algopred\ 
(algorithm for predecessor computation). 
For a state $s$, we consider every action $a \in \mov_1(s)$ as a candidate for the existence of an $a$-large witness distribution.
For each action $a$ we execute the following steps:
\begin{enumerate} 

\item \emph{(Reject~1).} Reject the choice of $a$ if there exists $a_2 \in \mov_2(s)$ such that $M_{a,a_2}=\ov{W}$.

\item \emph{(Accept~1).} Accept $a$ if for all $a_2 \in \mov_2(s)$ we have $M_{a,a_2}\in \{U,Y,Z^1\}$, and then return the distribution that plays 
$a$ with probability 1, and return ``Accept" for state $s$.

\item Let $C$ be the set of actions $a_2$ in $\mov_2(s)$ such that $M_{a,a_2}\neq U$. 
Initialize $B_1^0$ and $A_1^0$ as $(\mov_1(s)\setminus \{a\})$. 
The remainder of the algorithm will be done in iterations. 

\item \emph{(Iteration).}\label{algo:iteration}
In iteration $i\geq 1$, run $\LimitReach(s,W,U,((A_1^{i-1}\cap B_1^{i-1})\setminus \{a\}),C)$. \\
(Reject~2): if $\LimitReach(s,W,U,((A_1^{i-1}\cap B_1^{i-1})\setminus \{a\}),C)$ rejects the input, then reject this choice of $a$.
Otherwise let $A_2^i$ be the returned set; and let $\dist_{1}^{\epsilon,i}$ be a witness parametrized 
distribution (parametrized by $0<\epsilon<\frac{1}{2}$ which is obtained by the support of $\dist_1^{\epsilon,i}$ and
the ranking of the actions in the support). 
We will now define some sets of actions.
\begin{enumerate}
\item \label{algo:label_1} Let $A_1^i=\supp(\dist_1^{\epsilon,i}) \cup \set{a}$.
\item Let $B_1^i$ be all actions $a_1$ in $\mov_1(s)$ such that for all $a_2\in (C\setminus A_2^i)$ we have $M_{a_1,a_2}\neq W$.
\item Let $B_2^i$ be all actions $a_2$ in $(C\setminus A_2^i)$ such that either (i)~$M_{a,a_2}=Z^1$;  or 
(ii)~there exists an action $a_1\in A_1^i$ with $M_{a_1,a_2}=Y$. 
\end{enumerate}
\item We reject in the following cases: 
\begin{itemize}
\item {\em (Reject~3)}. If $((A_1^{i}\cap B_1^{i})\setminus \set{a})=\emptyset$,
then reject this choice of $a$. 
\item {\em (Reject~4)}. If $(C\setminus A^i_2)\neq B^i_2$, then reject this choice of $a$. 
\item {\em (Reject~5)}. If $a\not\in B^i_1$, then reject this choice of $a$. 
\end{itemize}
\item {\em (Accept~2)}. Otherwise if $A_1^i\subseteq B_1^i$, then return accept $a$, and return the parametrized 
distribution $\dist_1^\epsilon$, for $0<\epsilon<\frac{1}{2}$, that plays $a$ with probability 
$1-\epsilon\cdot \delta_{\min}$ and with probability $\epsilon\cdot \delta_{\min}$ follows 
$\dist_{1}^{\epsilon,i}$, and also ``Accept" state $s$.
\item If the action is neither accepted nor rejected, then go to iteration $i+1$ in step~\ref{algo:iteration}.
\end{enumerate}
If all choices of action $a \in \mov_1(s)$ get rejected, then ``Reject" state $s$.

The parametrized distribution for Accept~2 is returned as the special action $a$ (to be 
played with probability $1-\epsilon\cdot \delta_{\min}$, for $0<\epsilon<\frac{1}{2}$), 
the support set of $\dist_1^{\epsilon,i}$ and the ranking function of the support as 
given by the $\LimitReach$ operator
(which gives the parametrized distribution for $\dist_1^{\epsilon,i}$ which is 
multiplied by $\epsilon\cdot \delta_{\min}$ to get the parametrized $a$-large witness distribution $\dist_1^{\epsilon}$ and $a$ is played with the remaining probability).

\smallskip\noindent{\em Illustrations with examples.}
We illustrate our algorithm on four $M$-matrices shown in Figure~\ref{fig: M-matrices}. 
First observe that the only feasible candidate for an $a$-large distribution is the first row, 
because each other row contains an $\ov{W}$ entry, and thus will be rejected at the start.
The first matrix shown in Figure~\ref{fig:m-accept} will be accepted by the algorithm and 
the other three will be rejected by the algorithm. 
\begin{enumerate}
\item Consider first the matrix in Figure~\ref{fig:m-accept}. 
Then the algorithm is run with the first row as $a$, it will call $\LimitReach$ with the 
all rows but the first row for player~1 and all columns but the first column for player~2
(since given the first row, the first column satisfies Equation~\ref{exp:limit reach}).
The $\LimitReach$ algorithm will then return the distribution $d$ of playing the second row 
with probability $1-\frac{\epsilon}{2}$ and the third row with probability $\frac{\epsilon}{2}$. 
It also returns the set $A_3$ containing the second and third column (they satisfy Equation~\ref{exp:limit reach}).  
We then get accept in that iteration, because column~4 and column~5 can be satisfied by Equation~\ref{exp:almost reach} and 
column~6 can be satisfied by Equation~\ref{exp:get 1}.

\item Consider now the second matrix, the one in Figure~\ref{fig:m-reject1}. 
It will get rejected at start, because in this case each row contains an $\ov{W}$ entry. 

\item The third matrix, the one in Figure~\ref{fig:m-reject2}, will get rejected in the second iteration. 
In the first iteration, $\LimitReach$ will return the same distribution $d$ as for the first matrix along with the same $A_3$. 
This time, we cannot accept directly, because $d$ no longer satisfies any of the three equations, for column~5. 
At that point, the algorithm considers that each column $a_2 \in \set{4,5,6}$ such that $M_{a_1,a_2}=Y$ for some $a_1 \in\set{1,2,3}$ or 
$M_{a,a_2}=Z^1$ (where $a=1$). 
Thus, the algorithm removes row~2, from the set of possible rows, because column~5 is such that $M_{2,5}=W$, and $5\not\in A_3$ and iterate. 
Then the algorithm calls $\LimitReach$ and gets back reject, because each of the rows left contains at least one instance of $\ov{W}$.
Hence the algorithm rejects.
 
\item For the last matrix, the one in Figure~\ref{fig:m-reject3}, the algorithm calls $\LimitReach$ and gets $d$ and $A_3$, but this time the algorithm rejects at that point, 
because row 6 (which is not in $A_3$) does not contain an action $a_1$ played with positive probability such that $M_{a_1,6}=Y$ or is such that $M_{a,6}=Z^1$. %%which causes us to reject it.

\end{enumerate}

\begin{figure}
        \centering
        \begin{subfigure}[b]{0.45\textwidth}
\[ 
M=\left( \begin{array}{cccccc}
U         & W         & W         & X         &  Y        & Z^1\\
\ov{W} & U         & W          & Y         &  X        & X \\
\ov{W} & \ov{W} & U          & X         &  X        & X \\
\ov{W} & \ov{W} & \ov{W}  & \ov{W} & \ov{W} & \ov{W} \\
\end{array} \right)\] 
\caption{This illustrates a $M$-matrix, which has an $a$-large distribution, where $a$ corresponds to the first row.\label{fig:m-accept}}        

		\end{subfigure}
		\qquad
		\begin{subfigure}[b]{0.45\textwidth}
\[ 
M=\left( \begin{array}{cccccc}
U         & \circled{\textoverline{$W$}}        & W         & X         &  Y        & Z^1\\
\ov{W} & U         & W          & Y         &  W        & X \\
\ov{W} & \ov{W} & U          & X         &  X        & X \\
\ov{W} & \ov{W} & \ov{W}  & \ov{W} & \ov{W} & \ov{W} \\
\end{array} \right)\] 
\caption{This illustrates a $M$-matrix, which has no $a$-large distribution. The cicled entry is the only entry changed as compared to Figure~\ref{fig:m-accept}.\label{fig:m-reject1}}

		\end{subfigure}
		\\
        \begin{subfigure}[b]{0.45\textwidth}
\[ 
M=\left( \begin{array}{cccccc}
U         & W         & W         & X         &  Y        & Z^1\\
\ov{W} & U         & W          & Y         & {\circled{$W$}}        & X \\
\ov{W} & \ov{W} & U          & X         &  X        & X \\
\ov{W} & \ov{W} & \ov{W}  & \ov{W} & \ov{W} & \ov{W} \\
\end{array} \right)\] 
\caption{This illustrates a $M$-matrix, which has no $a$-large distribution. The cicled entry is the only entry changed as compared to Figure~\ref{fig:m-accept}.\label{fig:m-reject2}}

		\end{subfigure}
		\qquad
        \begin{subfigure}[b]{0.45\textwidth}
\[ 
M=\left( \begin{array}{cccccc}
U         & W         &  W         & X         &  Y        & {\circled{$X$}}\\
\ov{W} & U         & W          & Y         &  X        & \circled{$Z^1$} \\
\ov{W} & \ov{W} & U          & X         &  X        & X \\
\ov{W} & \ov{W} & \ov{W}  & \ov{W} & \ov{W} & \ov{W} \\
\end{array} \right)\] 
\caption{This illustrates a $M$-matrix, which has no $a$-large distribution. The circled entries are the only entries changed as compared to Figure~\ref{fig:m-accept}.\label{fig:m-reject3}}        

		\end{subfigure}
\caption{\label{fig: M-matrices}}
\end{figure}

\begin{lemma}\label{lemm: distribution algorithm accept}
Given $U\subseteq Y\subseteq Z\subseteq X\subseteq W$ and a state $s$, if algorithm \algopred\ accepts $s$, 
then $s\in \EXP(W,U,X,Y,Z)$. Furthermore, for every $0<\epsilon<\frac{1}{2}$ there exists a witness distribution $\dist_1^\epsilon$
with patience at most $\left(\frac{\epsilon \cdot \trans_{\min}}{2}\right)^{-(|\mov_1(s)|-1)}$ to satisfy at least one of the three 
required conditions (Equation~\ref{exp:limit reach}, Equation~\ref{exp:almost reach}, 
or Equation~\ref{exp:get 1})  for $\EXP$ for every action $a_2 \in \mov_2(s)$. 
\end{lemma}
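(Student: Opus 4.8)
The plan is to prove this soundness direction (accept $\Rightarrow s\in\EXP$) by a case analysis on which of the two accepting steps of \algopred\ fires, and in each case to exhibit an explicit witness distribution and check the three defining conditions of $\EXP$ against every $a_2\in\mov_2(s)$. The patience bound will come out of the explicit form of the witness together with the patience guarantee already established for $\LimitReach$.

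First, suppose \algopred\ accepts $s$ through Accept~1 for some candidate $a$, so that $M_{a,a_2}\in\{U,Y,Z^1\}$ for all $a_2$. Here the witness is the positional distribution playing $a$ with probability~1 (patience~1, trivially within the bound), and I would read off the three conditions directly from the case definition of $M$: if $M_{a,a_2}=U$ then $\dest(s,a,a_2)\cap\ov{W}=\emptyset$ and $\dest(s,a,a_2)\cap U\neq\emptyset$, so $\trans(s,a,a_2)(\ov{W})=0<\epsilon\cdot\trans(s,a,a_2)(U)$ and Equation~\ref{exp:limit reach} holds; if $M_{a,a_2}=Y$ the matrix forces $\dest(s,a,a_2)\subseteq X$ with positive mass into $Y$, giving Equation~\ref{exp:almost reach}; and if $M_{a,a_2}=Z^1$ it forces $\dest(s,a,a_2)\subseteq Z\subseteq X$ and $\cost(s,a,a_2)=1$, so $\trans(s,a,a_2)(X)=\trans(s,a,a_2)(Z)=1$ and $\ExpRew(s,a,a_2)=1$, i.e.\ Equation~\ref{exp:get 1}.

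The substantive case is Accept~2, fired at some iteration $i$ with $\LimitReach$ returning $A_2^i$ and a parametrized distribution $\dist_1^{\epsilon,i}$ of support $A_1^i\setminus\{a\}$. I would take as witness the $a$-large distribution $\dist_1^\epsilon$ playing $a$ with probability $1-\tfrac{\epsilon\cdot\delta_{\min}}{2}$ and otherwise following $\dist_1^{\epsilon,i}$, and partition $\mov_2(s)$ into three groups. (i)~For $a_2$ with $M_{a,a_2}=U$ (the actions outside $C$), $a$ reaches $U$ with probability at least $\delta_{\min}$ and, since Reject~1 did not fire, never reaches $\ov{W}$; I verify Equation~\ref{exp:limit reach}. (ii)~For $a_2\in A_2^i$, accept property~a gives Equation~\ref{exp:limit reach} for $\dist_1^{\epsilon,i}$, and since mixing in $a$ adds no mass to $\ov{W}$ and only non-negative mass to $U$, the strict inequality is preserved. (iii)~For $a_2\in C\setminus A_2^i$, the non-reject conditions $a\in B_1^i$ (Reject~5) and $A_1^i\subseteq B_1^i$ (the Accept~2 test) give $M_{a_1,a_2}\neq W$ for every $a_1$ in the support (no leakage to $W\setminus X$), while accept property~b gives $\dest(s,\dist_1^{\epsilon,i},a_2)\cap\ov{W}=\emptyset$; together with $M_{a,a_2}\neq\ov{W}$ this yields $\trans(s,\dist_1^\epsilon,a_2)(X)=1$. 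As Reject~4 did not fire, $C\setminus A_2^i=B_2^i$, so either $M_{a,a_2}=Z^1$, giving $\ExpRew\ge 1-\tfrac{\epsilon\delta_{\min}}{2}\ge 1-\epsilon$ and $\trans(\cdot)(Z)\ge 1-\epsilon$ (Equation~\ref{exp:get 1}), or some $a_1\in A_1^i$ has $M_{a_1,a_2}=Y$ and is played with positive probability, giving $\trans(\cdot)(Y)>0$ (Equation~\ref{exp:almost reach}).

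I expect the main obstacle to be group~(i). Nothing in the algorithm forbids the support of $\dist_1^{\epsilon,i}$ from leaving $W$ against an action outside $C$, so after mixing in $a$ the entire mass $\tfrac{\epsilon\delta_{\min}}{2}$ may flow into $\ov{W}$, and Equation~\ref{exp:limit reach} becomes $\epsilon\,(1-\tfrac{\epsilon\delta_{\min}}{2})\,\delta_{\min} > \tfrac{\epsilon\delta_{\min}}{2}$; this holds (for all $\epsilon\delta_{\min}<1$) precisely because the mixing weight is $\tfrac{\epsilon\delta_{\min}}{2}$ rather than $\epsilon\delta_{\min}$, so that $a$'s contribution to $U$ strictly dominates the worst-case leakage to $\ov{W}$ — this factor $\tfrac12$ is the quantitative heart of the argument and is exactly what the base $\tfrac{\epsilon\delta_{\min}}{2}$ of the claimed patience bound anticipates. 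The remaining bookkeeping is the patience itself: the minimum positive probability of $\dist_1^\epsilon$ is $\tfrac{\epsilon\delta_{\min}}{2}$ times the minimum probability of $\dist_1^{\epsilon,i}$, and since $\LimitReach$ guarantees the latter is at least $\bigl(\tfrac{\epsilon\delta_{\min}}{2}\bigr)^{|A_1|-1}$ for its input action set $A_1$ of size at most $|\mov_1(s)|-1$, the product is at least $\bigl(\tfrac{\epsilon\delta_{\min}}{2}\bigr)^{|\mov_1(s)|-1}$, matching the stated bound.
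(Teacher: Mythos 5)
Your proof is correct and follows the same skeleton as the paper's: the Accept~1 case is read directly off the matrix $M$ exactly as in the paper, and in the Accept~2 case you use the same partition of $\mov_2(s)$ into actions with $M_{a,a_2}=U$, the set $A_2^i$ (Accept property~a plus a scaling argument), and $C\setminus A_2^i$ (via $A_1^i\subseteq B_1^i$, Accept property~b, and $C\setminus A_2^i=B_2^i$ from the non-firing of Reject~4), with the same patience accounting. The one genuine difference is how you discharge the actions with $M_{a,a_2}=U$: the paper keeps the algorithm's mixing weight $\epsilon\cdot\delta_{\min}$ on $\dist_1^{\epsilon,i}$, observes that the worst case (all mixed-in mass flowing to $\ov{W}$) only gives Equation~\ref{exp:limit reach} at $2\epsilon$, and then appeals to the arbitrariness of $\epsilon$ --- implicitly re-parametrizing, i.e., using the output at $\epsilon/2$ as the witness at $\epsilon$ --- whereas you halve the mixing weight to $\frac{\epsilon\cdot\delta_{\min}}{2}$ so the strict inequality holds at $\epsilon$ directly. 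Since the lemma only asserts \emph{existence} of a witness with the stated patience (and the downstream strategy construction uses only that), your modified distribution is legitimate and yields the identical bound $\left(\frac{\epsilon\cdot\trans_{\min}}{2}\right)^{-(|\mov_1(s)|-1)}$; arguably your version is cleaner, making the factor $\frac{1}{2}$ in the patience base do visible work rather than hiding it in a rescaling. One bookkeeping nit in your group~(iii): to conclude $\trans(s,\dist_1^\epsilon,a_2)(X)=1$ you must exclude not only $\ov{W}$-successors and $W$-entries but also $U$-entries on the support, since an entry $M_{a_1,a_2}=U$ permits successors in $W\setminus X$; this is exactly the second half of Accept property~b ($\dest(s,\dist_1^{\epsilon,i},a_2)\cap U=\emptyset$), which you cite only in its $\ov{W}$ half --- the paper invokes both halves to pin $M_{a_1,a_2}\in\set{X,Y,Z^0,Z^1}$ before concluding the play stays in $X$ with probability~1.
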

\begin{proof}
We will next show that if \algopred\ returns a parametrized distribution $\dist_1^\epsilon$,
then for all $0<\epsilon<\frac{1}{2}$ and for all actions $a_2\in \mov_2(s)$,
at least one of the three conditions of $\EXP$ is satisfied. 
This will show that $s\in \EXP(W,U,X,Y,Z)$.
The algorithm accepts state $s$ and returns a distribution at two places, namely, 
(Accept~1) and (Accept~2). 
For the case of Accept~1: the algorithms returns a distribution that plays 
some action $a$ with probability 1; and 
for the case of Accept~2 it returns a distribution that plays some subset of actions 
(at least 2) with positive probability. 
We analyze both the cases below.

\begin{enumerate}

\item \emph{Case Accept~1.} 
In the first case for all actions $a_2$ we have that $M_{a,a_2}\in \{U,Y,Z^1\}$. 
We analyze the three sub-cases.
\begin{enumerate}
\item If $M_{a,a_2}=U$, 
then $\dest(s,a,s_2) \cap U \neq \emptyset$ (i.e., the next
state is in $U$ with positive probability) and $\dest(s,a,a_2) \cap \ov{W}
=\emptyset$ (i.e., the next state is in $\ov{W}$ with probability~0) and
hence Equation~\ref{exp:limit reach} is satisfied.

\item If $M_{a,a_2}=Y$, then 
(i)~$\dest(s,a,a_2)\cap (Y\setminus U)\neq \emptyset$ which implies that 
$\dest(s,a,a_2) \cap Y \neq \emptyset$, since $(Y\setminus U) \subseteq Y$; and  
(ii)~$\dest(s,a,a_2)\cap (\overline{W}\cup U\cup (W\setminus X))= \emptyset$
which implies that $\dest(s,a,a_2)\cap (\overline{X}\cup U)= \emptyset$
because as $X \subseteq W$ we have $(\overline{W}\cup U\cup (W\setminus X))= \overline{X} \cup U$;
and hence $\dest(s,a,a_2) \subseteq X$.
The first condition ensures that the next state is in $Y$ with positive 
probability and the second condition ensures the next state is in $X$ with 
probability~1, 
and thus  Equation~\ref{exp:almost reach}  is satisfied. 

\item If $M_{a,a_2}=Z^1$, then
(i)~$\dest(s,a,a_2)\cap (Z\setminus Y)\neq \emptyset$ which implies that 
$\dest(s,a,a_2) \cap Z \neq \emptyset$; and
(ii)~$\dest(s,a,a_2)\cap (\overline{W}\cup U\cup (W\setminus X) \cup (Y\setminus U) \cup (X\setminus Z))= \emptyset$
which implies that $\dest(s,a,a_2)\cap (\overline{Z}\cup U\cup Y)= \emptyset$,
because as $Z \subseteq X \subseteq W$ we have 
$(\overline{W}\cup U\cup (W\setminus X) \cup Y\cup (X\setminus Z))
= (\overline{Z}\cup U\cup Y)$, and hence $\dest(s,a,a_2) \subseteq Z$ 
(i.e., next state in $Z$ with probability~1);
and (iii)~$\cost(s,a,a_2)=1$ (i.e., expected reward is~1).
It follows that Equation~\ref{exp:get 1} is satisfied.
\end{enumerate}

\item \emph{Case Accept~2.}
In the second case, we consider the case when the algorithm returns a 
parameterized distribution $\dist_1^\epsilon$, for $0<\epsilon<\frac{1}{2}$, in iteration $i$. 
Let the action played with probability $1-\epsilon\cdot \delta_{\min}$ be $a$. 
Such an action clearly exists, by construction. 
For any $a_2\in \mov_2(s)$ such that $M_{a,a_2}=U$, 
then the next state is in $U$ with probability at least  $(1-\epsilon\cdot \trans_{\min})\cdot \trans_{\min}$
and the next state is in $\ov{W}$ with probability at most $\epsilon\cdot \trans_{\min}$ 
and the ratio is at least $2\cdot \epsilon$;
thus the distribution $\dist_1^\epsilon$ and $a_2$ satisfy Equation~\ref{exp:limit reach} for 
$2\cdot \epsilon$. As $0<\epsilon<\frac{1}{2}$ is arbitrary the result follows for all $a_2$ such
that $M_{a,a_2}=U$.
We consider the set $C$ of remaining actions in $\mov_2(s)$, i.e., for all $a_2\in C$ 
we have $M_{a,a_2}\neq U$.
%We can partition the actions in $C$ into two sets, $A$ and $B$. 
%The set $A$ is such that Equation~\ref{exp:limit reach} is satisfied and 
%$B$ is such that Equation~\ref{exp:limit reach} is not satisfied. 

\emph{Satisfying Equation~\ref{exp:limit reach} in $A_2^i$.}
We have that $M_{a,a_2}\neq \overline{W}$, for all $a_2\in \mov_2(s)$, because otherwise the guess of action $a$ would have been rejected, in (Reject~1). 
We also have that $\LimitReach(s,W,U,B',C)$, for $B'\subseteq (\mov_1(s)\setminus \{a\})$ must return an distribution $\dist_1'$ over $B'$ and a set $A' 
\subseteq C$, such that for all $a_2\in A'$, the action $a_2$ and the distribution $\dist_1'$ satisfies Equation~\ref{exp:limit reach}
(by Accept property~a of $\LimitReach$). 
%Thus we see that our our construction of $\dist^{\epsilon}$ ensures that $A^i_2\subseteq A$, where $A^i_2$ is as defined in the description of the algorithm.
In the last iteration the set $A_2^i$ is the set returned by $\LimitReach(s,W,U, ((A_1^{i-1}\cap B_1^{i-1})\setminus \{a\}),C)$,
and the distribution $\dist_1^{\epsilon,i}$ satisfies Equation~\ref{exp:limit reach} for all actions in $A_2^i$
(again by Accept property~a of $\LimitReach$ since $A_2^i$ is the returned subset of $C$). 
Since $\dist_1^\epsilon$ only plays $a$ with high probability and 
only scales the distribution $\dist_1^{\epsilon,i}$ it follows
(similarly to Case~1 of Lemma~\ref{lemm:one_action}) that 
$\dist_1^\epsilon$ satisfies Equation~\ref{exp:limit reach} for all actions in $A_2^i$.

\emph{Satisfying Equation~\ref{exp:almost reach} or  Equation~\ref{exp:get 1} in $(C \setminus A_2^i)$.}
By definition of $B^i_1$ and $A^i_1$ (Step~4~(a) and Step~4~(b) of the 
algorithm), and that $A^i_1 \subseteq B^i_1$ (from Accept~2 of the algorithm),
it follows that the distribution $\dist_1^\epsilon$ is such that 
for all $a_2\in(C\setminus A^i_2)$ and $a_1 \in \supp(\dist_1^\epsilon) \cup \set{a}=A^i_1$ 
we have $M_{a_1,a_2}\neq W$.
Also for all $a_2\in (C\setminus A^i_2)$ and all $a_1$ such that 
$\dist_1^\epsilon(a_1)>0$, we have from Accept property~b of $\LimitReach$ that $M_{a_1,a_2}\neq \overline{W}$ and $M_{a_1,a_2}\neq U$. 
Notice that therefore for all $a_1 \in \supp(\dist_1^\epsilon)$ and $a_2 \in (C \setminus A_2^i)$ 
we have $M_{a_1,a_2}\in \set{X,Y,Z^0,Z^1}$,  which implies that $\dest(s,\dist_1^\epsilon,a_2)(X)=1$.
For all $a_2\in (C\setminus A^i_2)$ we have that  either (i)~$M_{a,a_2}=Z^1$; or 
(ii)~$\dist_1^\epsilon$ assigned positive probability to some $a_1$ such that $M_{a_1,a_2}=Y$, 
because otherwise $(C\setminus A^i_2)\neq B^i_2$ and we would have rejected this choice of $a$
(by Reject~4 of the algorithm). 
Notice that $M_{a,a_2}=Z^1$ implies that $\dest(s,a,a_2)(Z)=1$ and that $\cost(s,a,a_2)=1$, thus, since the distribution the algorithm returned was $a$-large, we get that we reach $Z$  in one step with probability at least $1-\epsilon\cdot \delta_{\min}$ and get reward~1 with probability at least $1-\epsilon\cdot \delta_{\min}$, hence Equation~\ref{exp:get 1} is satisfied.
If the second case holds (i.e.,  $M_{a_1,a_2}=Y$), 
 we have $\dest(s,\dist_1^\epsilon,a_2) \cap (Y\setminus U) \neq \emptyset$
(i.e., $Y$ is reached with positive probability in one step), thus implying that Equation~\ref{exp:almost reach} is satisfied.  
\end{enumerate}

Therefore the distribution $\dist_1^\epsilon$ is a witness distribution to satisfy the required
conditions for $0<\epsilon<\frac{1}{2}$ for $\EXP$. It follows that $s \in \EXP(W,U,X,Y,Z)$.

\smallskip\noindent{\bf Patience.}
The distribution returned by $\LimitReach$ over $|\mov_1(s)|-1$ actions has patience at 
most $\left(\frac{\epsilon \cdot \trans_{\min}}{2}\right)^{-(|\mov_1(s)|-2)}$. Hence it is clear from the algorithm that 
the distribution returned by the algorithm has patience at most 
$\left(\frac{\epsilon \cdot \trans_{\min}}{2}\right)^{-(|\mov_1(s)|-1)}$.
\end{proof}

%\begin{lemma}For a given $U\subseteq Y\subseteq Z\subseteq X\subseteq W$ and some state $s$, let $a$ be some action in $\mov_1(s)$. Also, let $i\geq 1$ be some number. For all $j$ let $A^j_1$, $A^j_2$, $B^j_1$ and $B^j_2$ be as defined by Algorithm \algopred(s,W,U,X,Y,Z). Assume that all witness distributions $\dist$, such that $\dist(a)\geq 1-\epsilon\cdot \delta_{\min}$, are restricted to actions in $(A_1^{i-1}\cap B_1^{i-1})\setminus \{a\}$, then all witness distributions, $\dist'$, , such that $\dist'(a)\geq 1-\epsilon\cdot \delta_{\min}$, are restricted to actions in $A^i_2$.
%Also, algorithm \algopred accepts or rejects a choice of action to play with high probability after at most $\min(|\Gamma_1(s)|,|\Gamma_2(s)|)$ iterations of the inner loop.
%\end{lemma}
Our next goal is to present a lemma that complements the previous lemma. 
In other words, we would show that if \algopred\ rejects an action $a$,
then there would be no $a$-large distributions as witnesses for $\EXP$.
The algorithm rejects an action $a$ at four places, and we will show
that all the rejections are \emph{sound} (i.e., if $a$ is rejected,
then there is no $a$-large witness distribution).
We first show that the first rejection is sound.

\smallskip\noindent{\bf Soundness of Reject~1.}
We consider the case of Reject~1. In this case, there exists
an action $a_2$ such that $M_{a,a_2}=\ov{W}$.
Given an $a$-large distribution $\dist_1^\epsilon$, 
the one step probability to reach $\ov{W}$  
(i.e., $\trans(s,\dist_1^\epsilon,a_2)(\ov{W})$) is 
at least $x=(1-\epsilon\cdot\delta_{\min})\cdot\delta_{\min}> \epsilon$, since $\epsilon<\frac{1}{2}$ and $\delta_{\min}\leq 1$,
and even if $U$ is reached with the remaining probability 
(i.e., even if $\trans(s,\dist_1^\epsilon,a_2)(U) =1-x$), 
it follows that Equation~\ref{exp:limit reach} is violated, 
for all $0<\epsilon<\frac{1}{2}$.
The remaining two expressions cannot be satisfied because $X\subseteq W$ 
and since we leave $W$ with positive probability we as well 
leave $X$ with positive probability.
It follows that the rejection of action $a$ is sound for Reject~1.

\smallskip\noindent{\bf Rejects in iteration.}
The other places the algorithm can reject action $a$, i.e., 
(Reject~2), (Reject~3), (Reject~4), and (Reject~5), 
are part of the iterative procedure.
To prove soundness of these rejects we will define a loop invariant and 
prove the loop invariant inductively. 
We will also show that with the loop invariant we can establish soundness
of the rejects in the iterative procedure as well as the termination
of the algorithm.

\smallskip\noindent{\bf The loop invariant.}
The \emph{loop invariant} is as follows:
\begin{itemize}
\item Any $a$-large witness distribution $\dist_1^\epsilon$ for $\EXP$ 
only plays actions in $(A_1^{i}\cap B_1^{i})\cup \{a\}$ with positive probabilities, 
for all $i\geq 0$, i.e., $\supp(\dist_1^\epsilon) \subseteq 
(A_1^{i}\cap B_1^{i})\cup \{a\}$.
\end{itemize}
We will also establish the {\em monotonicity} (strictly decreasing till a fixpoint is reached) 
property that $(A_1^{i}\cap B_1^{i})\cup \{a\}\subseteq(A_1^{i-1}\cap B_1^{i-1})\cup \{a\}$, for all $i>0$; 
and equality implies termination in iteration $i$. 

\smallskip\noindent{\bf Inductive proof of loop invariant.}
We present the basic inductive argument for the loop invariant:
%The loop invariant is that that any satisfying distribution $\dist^\epsilon$ only plays actions in $(A_1^{i}\cap B_1^{i})\cup \{a\}$ with positive probabilities, for all $i\geq 0$. Also $(A_1^{i}\cap B_1^{i})\cup \{a\}\subseteq(A_1^{i-1}\cap B_1^{i-1})\cup \{a\}$, for all $i>0$ and equality implies termination.
%The proof will be by induction in $i$. We will show that $(A_1^{i}\cap B_1^{i})\cup \{a\}\subset(A_1^{i-1}\cap B_1^{i-1})\cup \{a\}$, for all $i>0$ separately though.
\begin{itemize}
\item \noindent{\bf The base case, $i=0$.} The base case, for $i=0$ is trivial, since $A_1^{0}= B_1^{0}=(\mov_1(s)\setminus \{a\})$, 
thus implying that $(A_1^{i}\cap B_1^{i})\cup \{a\}=\mov_1(s)$.

\item \noindent{\bf The induction case, $i>0$.} By inductive hypothesis,
any $a$-large witness distribution $\dist_1^\epsilon$ only plays actions in $(A_1^{i-1}\cap B_1^{i-1})\cup \{a\}$ 
with positive probabilities, and we need to establish for $i$.
We will show that any $a$-large witness distribution can only play actions in 
$A_1^{i}\cup \{a\}=A_1^{i}$,
(see the following description of $A_1^i$ which uses the inductive hypothesis).
We refer to this as required property~1 for loop invariant.
Similarly, we establish the same for $B_1^i$ (see the following description of $B_1^i$ 
which uses the inductive hypothesis).
We refer to this as required property~2 for loop invariant.
%Also any satisfying distribution can only play actions in $B^i_1$, see the description of $B^i_1$ (again notice that it uses the induction hypothesis, that is any witness distribution only uses actions in $(A_1^{i-1}\cap B_1^{i-1})\setminus \{a\}$). 
Hence any witness $a$-large distribution can only play actions in $(A_1^{i}\cap B_1^{i})\cup \{a\}$.
\end{itemize}
The above proof requires to establish the key properties of $A_1^i$ and $B_1^i$. 
Before establishing them we first show the monotonicity property.

\smallskip\noindent{\bf Monotoncity property.}
We will show that we have $(A_1^{i}\cap B_1^{i})\cup \{a\}\subseteq(A_1^{i-1}\cap B_1^{i-1})\cup \{a\}$, for all $i>0$, 
and equality implies termination of the inner loop in iteration $i$. 
Notice that this implies that for any choice of $a$ the inner loop rejects $a$ or 
finds a distribution after at most $|\Gamma_1(s)|$ iterations. 
We have that $A_1^i = \supp(\dist_1^\epsilon) \cup \set{a}$ (by Step~4~(a) of \algopred), 
where $\dist_1^\epsilon$ is a witness distribution returned by $\LimitReach(s,W,U, ((A_1^{i-1}\cap B_1^{i-1})\setminus \{a\}),C)$.
Since  $\supp(\dist_1^\epsilon) \subseteq ((A_1^{i-1}\cap B_1^{i-1})\setminus \{a\})$, if $\LimitReach$ accepts, we have that  $A_1^i \subseteq (A_1^{i-1}\cap B_1^{i-1})\cup \{a\}$. 
Thus we get that $(A_1^{i}\cap B_1^{i})\cup \{a\}\subseteq A_1^{i} \cup \set{a} 
\subseteq (A_1^{i-1}\cap B_1^{i-1})\cup \{a\} $.
This establish monotonicity and now we show the termination.
Assume that $(A_1^{i}\cap B_1^{i})\cup \{a\}=(A_1^{i-1}\cap B_1^{i-1})\cup \{a\}$. 
Therefore we have that $\dist_1^\epsilon$ can only use actions in $((A_1^{i-1}\cap B_1^{i-1})\setminus \{a\})$, 
which is thus also $((A_1^{i}\cap B_1^{i})\setminus \{a\})$. 
But then either (i)~$a\not\in B_1^i$ or 
(ii)~$\supp(\dist_1^\epsilon) \cup\set{a}=A^i_1\subseteq (A_1^{i}\cap B_1^{i})\cup \{a\}$; which 
implies that $A^i_1\subseteq B_1^i$. But in the first case we reject (in (Reject~5)) and 
in the second case we accept (in (Accept~2)). 
This establishes the termination property.

\smallskip\noindent{\bf The properties of the sets for loop invariant.}
We now present the associated properties of the sets 
$A_1^i$, $A_2^i$, $B_1^i$, and $B_2^i$ to complete the inductive 
proof of the loop invariant.

\begin{enumerate}
\item {\em The property of the set $A^i_2$.} 
We first argue that $A^i_2$ has certain properties which will imply the key properties for $A^i_2$.
\begin{enumerate}
\item 
Since $\LimitReach(s,W,U, ((A_1^{i-1}\cap B_1^{i-1})\setminus \{a\}),C)$ accepts, we have that $A_2^i$ is a
subset of $C$.
There exists a witness parametrized distribution $\dist_1^\epsilon$, 
over $((A_1^{i-1}\cap B_1^{i-1})\setminus \{a\})$ such that for all $a_2\in A_2^i$ 
we have that $\dist_1^\epsilon$ and $a_2$ satisfies Equation~\ref{exp:limit reach}
(by Accept property~a of $\LimitReach$).

\item 
Also for all $a_2\in (C\setminus A_2^i)$ we have 
that $M_{a_1,a_2}\neq \overline{W}$ for all $a_1\in \supp(\dist_1^\epsilon)$ (Accept 
property~b of $\LimitReach$). 

\item
%%Furthermore $A^i_2$ was the maximum such set (Accept property~c of $\LimitReach$). 
Notice also that for any action $a_2\in C$, if a distribution over $A_1^{i-1}\cap B_1^{i-1}$ 
cannot satisfy $a_2$ using Equation~\ref{exp:limit reach}, then no distribution over 
$(A_1^{i-1}\cap B_1^{i-1})\cup \{a\}$ can either, 
since $M_{a,a_2}\neq U$ (from the definition of the set $C$) and hence $U$ cannot 
be reached as long as the distribution plays $a$. 
For an distribution $\dist'_1$ to be a witness distribution, all actions in $\mov_2(s)$ must 
satisfy either (i)~Equation~\ref{exp:limit reach}; or (ii)~Equation~\ref{exp:almost reach}; 
or (iii)~Equation~\ref{exp:get 1}. 
But if an action $a_2$ must satisfy either Equation~\ref{exp:almost reach} or Equation~\ref{exp:get 1}, 
we must have that $\dist'_1$ ensures that $\overline{X}$ is reached with probability~0 (i.e.,
$\dest(s,\dist'_1,a_2) \subseteq X$). Hence, since $X \subseteq W$ we also must have that 
$\overline{W}$ is reached with probability 0. 
\end{enumerate}
By Accept property~d of $\LimitReach$ we have that, since $A^i_2$ is returned by $\LimitReach$,
no $a$-large witness distribution $\dist'_1$ can satisfy any action 
$a_2$ in $(C\setminus A^i_2)$ using Equation~\ref{exp:limit reach}, 
while satisfying all actions in $C$ using Equation~\ref{exp:limit reach}, 
or Equation~\ref{exp:almost reach}, or Equation~\ref{exp:get 1}. Also, for all $a_2$ in $(C\setminus A^i_2)$ and all $a_1\in\supp(\dist_1^\epsilon)$ we have that $M_{a_1,a_2}\neq U$ 
(by Accept property~b of $\LimitReach$). 
Furthermore, by definition of $C$  for all $a_2\in C$ we have that $M_{a,a_2}\neq U$.
Therefore we have established the following key properties for $A^i_2$:
\begin{itemize}
\item Any $a$-large witness distribution $\dist'_1$ must satisfy all actions $a_2$  in $(C\setminus A^i_2)$ 
using either Equation~\ref{exp:almost reach} or Equation~\ref{exp:get 1}. 
\item 
For all $a_2\in (C\setminus A^i_2)$ and $a_1\in \supp(\dist_1^\epsilon)\cup \set{a}=A^i_1$ we have that $M_{a_1,a_2}\neq U$ .
\end{itemize}

\item {\em The property of the set $A^i_1$.} 
By accept property~c of $\LimitReach$ and since we did not reject in Reject 1, 
the set $A^i_1$ is the largest set, such that for all $a_1\in A^i_1$ there exists no $a_2$ in $(C\setminus A^i_2)$ with $M_{a_1,a_2}=\overline{W}$. 
But this means that any distribution that satisfies for all actions in $(C\setminus A^i_2)$ either Equation~\ref{exp:almost reach} or Equation~\ref{exp:get 1}, 
must play only actions in $A_1^i$. But from our description of $A^i_2$ we obtain that all $a$-large witness distributions must ensure that 
all actions in $(C\setminus A^i_2)$ are satisfied using either Equation~\ref{exp:almost reach} or Equation~\ref{exp:get 1}. 
Therefore we have established the following key property for $A^i_1$:
All $a$-large witness distributions must play only actions in $A^i_1$ with positive probability. 
This proves the required property~1 of the loop invariant.

\item {\em The property of the set $B^i_2$.} 
From the first key property of $A^i_2$ we have that any $a$-large witness distribution 
must ensure that all actions in $(C\setminus A^i_2)$ satisfy either Equation~\ref{exp:almost reach} 
or Equation~\ref{exp:get 1}. From the second key property of $A^i_2$, 
for all $a_1\in A^i_1$ and all $a_2\in (C\setminus A^i_2)$, we have that $M_{a_1,a_2}\neq U$.
The key property of  $A^i_1$ implies that any $a$-large 
witness distribution must play only actions in $A^i_1$.

Hence, for an $a$-large witness distribution $\dist'_1$, 
for all $a_2$ in $(C\setminus A^i_2)$ we must have that either 
(i)~$M_{a,a_2}=Z^1$ (to satisfy Equation~\ref{exp:get 1}); 
or (ii)~there is an action $a_1$ in $A^i_1$ such that $M_{a_1,a_2}=Y$ (to satisfy Equation~\ref{exp:almost reach} --- 
it would also be satisfied if $M_{a_1,a_2}=U$ but we know that $M_{a_1,a_2}\neq U$ by Accept property~b of $\LimitReach$). But that is precisely the definition of 
$B^i_2$ (Step~4~(c) of \algopred).
Therefore, we have the following key property for $B^i_2$: Actions $a_2$ in $(C\setminus (A^i_2\cup B^i_2))$ cannot be satisfied by Equation~\ref{exp:limit reach} or Equation~\ref{exp:almost reach} or Equation~\ref{exp:get 1} by any $a$-large witness distribution.

\item {\em The property of the set $B^i_1$.} 
We know from the first key property of $A^i_2$ that all actions in $(C\setminus A^i_2)$ must satisfy 
Equation~\ref{exp:almost reach} or Equation~\ref{exp:get 1}. 
But to do so we must leave $X$ with probability $0$. But $B^i_1$ is the largest set of actions such that for all actions $a_1$ in $B^i_1$ and for all actions 
$a_2$ in $(C\setminus A^i_2)$, we have that $M_{a_1,a_2}\neq W$ (Step~4~(b) of \algopred). Hence we have that an $a$-large distribution that plays an action in 
$(\mov_1(s)\setminus B^i_1)$ with positive probability violates both Equation~\ref{exp:almost reach} and Equation~\ref{exp:get 1} for 
some $a_2$ in $(C\setminus A^i_2)$. 
Therefore, we have the following key property for $B^i_1$: 
All $a$-large witness distributions only plays actions in $B^i_1$.
This also proves the required property~2 of the loop invariant.

\end{enumerate}
This establishes the inductive proof of the loop invariant.

\begin{lemma}\label{lemm: distribution algorithm reject}
For a given $U\subseteq Y\subseteq Z\subseteq X\subseteq W$, if Algorithm \algopred\ rejects state $s$, then $s\not\in \EXP(W,U,X,Y,Z)$.
Also, algorithm \algopred\ accepts or rejects a choice of action $a$ as a candidate for the existence of $a$-large witness distributions  at most $\min(|\Gamma_1(s)|,|\Gamma_2(s)|)$ iterations of the inner loop.
\end{lemma}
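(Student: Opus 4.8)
I would prove the two assertions separately, and for the first one reduce the global claim to a \emph{local soundness} statement about each individual rejection of a candidate action $a$. Recall that \algopred\ rejects the state $s$ only if it rejects every candidate $a\in\mov_1(s)$, and that by Lemma~\ref{lemm:one_action} together with the ensuing observation, if $s\in\EXP(W,U,X,Y,Z)$ then at least one action $a$ admits $a$-large witness distributions for all sufficiently small $\epsilon$. Hence it suffices to show that each rejection is \emph{sound}: rejecting $a$ implies that no $a$-large witness distribution exists for $a$. Granting this, if \algopred\ rejects $s$ then no action admits an $a$-large witness, and the contrapositive of Lemma~\ref{lemm:one_action} gives $s\notin\EXP(W,U,X,Y,Z)$. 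Soundness of (Reject~1) has already been shown, so it remains to treat (Reject~2)--(Reject~5); the loop invariant and the key properties of $A_1^i,A_2^i,B_1^i,B_2^i$ proved above are available for this.

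Three of the four cases are immediate consequences of that machinery. For (Reject~5), $a\notin B_1^i$ contradicts the key property of $B_1^i$ (every $a$-large witness plays only actions in $B_1^i$), since an $a$-large distribution plays $a$ with positive probability; so no witness exists. For (Reject~4), $B_2^i\subseteq(C\setminus A_2^i)$ and the inequality $(C\setminus A_2^i)\neq B_2^i$ produce an action $a_2\in(C\setminus(A_2^i\cup B_2^i))$, which by the key property of $B_2^i$ cannot be satisfied by any of the three conditions by any $a$-large witness. For (Reject~3), the loop invariant forces any $a$-large witness to play only actions in $(A_1^i\cap B_1^i)\cup\{a\}$; when $(A_1^i\cap B_1^i)\setminus\{a\}=\emptyset$ this set is exactly $\{a\}$, so the witness would have to play $a$ with probability~$1$, but since we are past (Accept~1) there is an $a_2$ with $M_{a,a_2}\notin\{U,Y,Z^1\}$ and $M_{a,a_2}\neq\ov{W}$, for which playing $a$ alone violates all three conditions for small $\epsilon$.

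The delicate case, and the one I expect to be the main obstacle, is (Reject~2), where $\LimitReach(s,W,U,((A_1^{i-1}\cap B_1^{i-1})\setminus\{a\}),C)$ rejects. Here I would argue by contradiction. Suppose an $a$-large witness $\dist_1^\epsilon$ exists; by the loop invariant its non-$a$ support lies in $A_1':=(A_1^{i-1}\cap B_1^{i-1})\setminus\{a\}$, and I let $\dist_1'$ be the distribution over $A_1'$ obtained by conditioning $\dist_1^\epsilon$ on not playing $a$. The crucial point is that for every $a_2\in C$ we have $M_{a,a_2}\neq U$ (definition of $C$) and $M_{a,a_2}\neq\ov{W}$ (past (Reject~1)), so $a$ contributes zero one-step probability both to $U$ and to $\ov{W}$ against such $a_2$; hence $\trans(s,\cdot,a_2)(U)$ and $\trans(s,\cdot,a_2)(\ov{W})$ differ between $\dist_1^\epsilon$ and $\dist_1'$ only by the common normalising factor, so Equation~\ref{exp:limit reach} holds for $\dist_1^\epsilon$ and $a_2$ iff it holds for $\dist_1'$ and $a_2$. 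Moreover, any $a_2\in C$ that $\dist_1^\epsilon$ satisfies through Equation~\ref{exp:almost reach} or Equation~\ref{exp:get 1} forces $\dest(s,\dist_1^\epsilon,a_2)\subseteq X\subseteq W$, which transfers to $\dest(s,\dist_1',a_2)\subseteq W$. Thus $\dist_1'$ is a distribution over $A_1'$ witnessing, for every $a_2\in C$, either Equation~\ref{exp:limit reach} or that $\ov{W}$ is avoided --- precisely the acceptance criterion of $\LimitReach$. By correctness of $\LimitReach$ this contradicts its rejection, so no $a$-large witness exists and (Reject~2) is sound. This completes the first assertion.

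For the iteration bound I would use the monotonicity already established: $(A_1^i\cap B_1^i)\cup\{a\}\subseteq(A_1^{i-1}\cap B_1^{i-1})\cup\{a\}$ with equality forcing termination. Since this set starts as $\mov_1(s)$ and strictly shrinks while always containing $a$, the inner loop runs at most $|\Gamma_1(s)|$ iterations. For the $|\Gamma_2(s)|$ bound I would first note that the input action set to $\LimitReach$ is non-increasing across iterations, so by monotonicity of $\LimitReach$ we have $A_2^i\subseteq A_2^{i-1}$. I would then show that $A_2^i=A_2^{i-1}$ forces termination via (Accept~2): equal $A_2$ gives $(C\setminus A_2^i)=(C\setminus A_2^{i-1})$ and hence $B_1^i=B_1^{i-1}$, and since $\supp(\dist_1^{\epsilon,i})\subseteq(A_1^{i-1}\cap B_1^{i-1})\setminus\{a\}\subseteq B_1^{i-1}=B_1^i$ while $a\in B_1^i$ (else (Reject~5) fired), we get $A_1^i\subseteq B_1^i$, i.e.\ (Accept~2) triggers. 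Consequently every non-terminating iteration strictly decreases $A_2^i$ inside $C\subseteq\mov_2(s)$, bounding the number of iterations by $|\Gamma_2(s)|$. Taking the minimum of the two bounds, and combining with Lemma~\ref{lemm: distribution algorithm accept}, yields the lemma.
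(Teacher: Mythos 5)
Your overall architecture is essentially the paper's proof: you reduce the lemma to soundness of each individual rejection via Lemma~\ref{lemm:one_action}, you dispose of (Reject~3)--(Reject~5) with the loop invariant and the key properties of $A_1^i$, $B_1^i$, $B_2^i$ exactly as the paper does, and your termination argument --- monotonicity of $(A_1^{i}\cap B_1^{i})\cup\{a\}$ for the $|\Gamma_1(s)|$ bound, $A_2^i\subseteq A_2^{i-1}$ with the chain ``equal $A_2$ gives equal $B_1$, hence $A_1^i\subseteq B_1^i$ and (Accept~2) or (Reject~5) fires'' for the $|\Gamma_2(s)|$ bound --- matches the paper's almost verbatim. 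The one genuine soft spot is (Reject~2). You conclude by asserting that ``for every $a_2\in C$, either Equation~\ref{exp:limit reach} holds or $\ov{W}$ is avoided'' is \emph{precisely the acceptance criterion of $\LimitReach$}, and that its satisfaction by $\dist_1'$ contradicts rejection. No such acceptance criterion is part of the stated interface of $\LimitReach$: the operator accepts or rejects based purely on the combinatorial data of which action pairs can reach $\ov{W}$ and $U$, and its Reject property only says that upon rejection every $a_1\in A_1$ has some $a_2\in A_2$ with $\dest(s,a_1,a_2)\cap\ov{W}\neq\emptyset$. At a single fixed $\epsilon$ your property does \emph{not} contradict this: a distribution can satisfy Equation~\ref{exp:limit reach} against a column through which $\ov{W}$ is reached with positive probability (e.g.\ when $\epsilon$ is close to $\frac{1}{2}$ and $U$ is reached with much larger one-step probability), so the existence of your $\dist_1'$ at some $\epsilon$ is perfectly compatible with rejection.

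The gap closes with a short quantitative step, which is in fact how the paper argues. Since the witness is a parametrized family valid for all $0<\epsilon<\frac{1}{2}$, instantiate $\epsilon<\frac{\trans_{\min}}{m}$. Some action $a_1^*\in\supp(\dist_1')$ has $\dist_1'(a_1^*)\geq \frac{1}{m}$; the Reject property supplies $a_2^*\in C$ with $\dest(s,a_1^*,a_2^*)\cap\ov{W}\neq\emptyset$, hence $\trans(s,\dist_1',a_2^*)(\ov{W})\geq \frac{\trans_{\min}}{m}>\epsilon\geq \epsilon\cdot\trans(s,\dist_1',a_2^*)(U)$, so Equation~\ref{exp:limit reach} fails for $a_2^*$ even if all remaining probability went to $U$; and Equations~\ref{exp:almost reach} and~\ref{exp:get 1} fail for $a_2^*$ because reaching $\ov{W}$ with positive probability means leaving $X\subseteq W$ with positive probability. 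This contradicts the transfer properties you correctly established (your observations that $a$ contributes one-step probability to neither $U$ nor $\ov{W}$ against any $a_2\in C$, that Equation~\ref{exp:limit reach} is scale-invariant under conditioning, and that the Equation~\ref{exp:almost reach}/\ref{exp:get 1} cases force $\dest(s,\dist_1',a_2)\subseteq X\subseteq W$ are all sound, and your conditioning trick is a fine equivalent of the paper's direct case split on whether $a$ is played with probability~$1$). With that one- or two-line repair inserted in place of the appeal to an unproven ``acceptance criterion,'' your proof coincides with the paper's.
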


\begin{proof}
In the algorithm there are five places where a choice of $a$ might get rejected. 
We have already argued the soundness of Reject~1.
We prove the soundness of the other rejects below.

\begin{enumerate}
\item \emph{(Reject~2).} If $\LimitReach(s,W,U, ((A_1^{i-1}\cap B_1^{i-1})\setminus \{a\}),C)$ is rejected, then for all actions $a_1$ in $((A_1^{i-1}\cap B_1^{i-1})\setminus \{a\})$, there exists an action $a_2$ in $C$ such that $M_{a_1,a_2}=\ov{W}$, by the reject property of $\LimitReach$. But then consider any distribution $\dist_1$ over $((A_1^{i-1}\cap B_1^{i-1})\setminus \{a\})$, some action $a_1$ is played with probability at least $\frac{1}{m}$.
Hence the action $a_2$ such that $M_{a_1,a_2}=\ov{W}$, cannot be satisfied using neither (i)~Equation~\ref{exp:limit reach}; nor (ii)~Equation~\ref{exp:almost reach}; nor (iii)~Equation~\ref{exp:get 1}. The latter two because $\ov{W}$ is entered with positive probability in one step and hence $X$ is left with positive probability in one step. The first is because we reach $\ov{W}$ with probability at least $x=\frac{\trans_{\min}}{m}$ and even if we reach $U$ with probability $1-x$, we still do not satisfy Equation~\ref{exp:limit reach}. Now consider some distribution $\dist_1'$ over $(A_1^{i-1}\cap B_1^{i-1})\cup \{a\}$. Either it plays $a$ with probability 1 or not. If it does, then it cannot be a witness distribution, since it otherwise would have been accepted in Accept~1. If it does not then the argument is similar to the previous argument (in the case of Equation~\ref{exp:limit reach}, the argument also uses that $M_{a,a_2}\neq U$ from the definition of $C$). Hence no witness distribution exists that only uses actions in $(A_1^{i-1}\cap B_1^{i-1})\cup \{a\}$. Thus Reject~2 is a sound reject, by the loop invariant.

\item \emph{(Reject~3).} 
If $a$ is not accepted by Accept~1, then $a$ could not be played with probability~1.
For Reject~3, the condition $((A_1^{i}\cap B_1^{i})\setminus \set{a})=\emptyset$ is satisfied. 
Thus no $a$-large witness distribution can play anything but $a$ by the loop invariant. 
Therefore no $a$-large witness distribution can exist in this case. Thus, Reject~3 is a sound reject.

\item \emph{(Reject~4).} 
%\noindent{\bf Putting the descriptions to use.}
Consider an $a$-large witness distribution $\dist_1^\epsilon$.
The key property of $B^i_2$ implies that any action $a_2\in (C\setminus (A^i_2\cup B^i_2))$ cannot be satisfied using either of the equations.  But since $B^i_2\subseteq (C\setminus A^i_2)$ we must have that $B^i_2=(C\setminus A^i_2)$ for any $a$-large witness distribution to exists.
Therefore we can reject the choice of $a$ if $(C\setminus A^i_2)\neq B^i_2$.
Hence Reject~4 is a sound reject.

\item \emph{(Reject~5).} 
From the key property of the set $B_1^i$, we have that if $a\not \in B^i_1$, 
then no $a$-large witness distribution can play $a$ with positive probability,
which implies that no $a$-large witness distribution can exist. 
Hence Reject~5 is also a sound  reject.

\end{enumerate}

\smallskip\noindent{\bf Termination.}
We have already established (in "monotonicity and termination for loop invariant") that 
$(A_1^{i}\cap B_1^{i})\cup \{a\}\subseteq(A_1^{i-1}\cap B_1^{i-1})\cup \{a\}$, for all $i>0$ 
and equality implies termination of the inner loop in iteration $i$. 
Notice that this implies that for any choice of $a$ the inner loop rejects $a$ or finds a distribution after at most $|\Gamma_1(s)|$ iterations. 
We will now show that $A^i_2\subseteq A^{i-1}_2$, for all $i>0$ and equality implies termination in iteration $i$. 
Notice that this implies that for any choice of $a$ the inner loop rejects $a$ or finds a distribution after at most $|\Gamma_2(s)|$ iterations. 
We have that $A^i_2\subseteq A^{i-1}_2$, because $\dist_1^{\epsilon,i}$ 
could also be returned in iteration $i-1$ and $\LimitReach$ maximizes the number of $a_1$'s for which $\dist_1^{\epsilon,i}(a_1)>0$
(Accept property~c). 
Assume that $A^i_2= A^{i-1}_2$. Then $(C\setminus A^{i}_2)=(C\setminus A^{i-1}_2)$ and thus $B^i_1=B^{i-1}_1$. 
We also have that $A^i_1\subseteq (A_1^{i-1}\cap B_1^{i-1})\cup \{a\}$, thus implying that $A^i_1\subseteq (A_1^{i-1}\cap B_1^{i})\cup \{a\}$. 
Therefore $A^i_1\subseteq B_1^i$, since if $B_1^i$ does not contain $a$, neither does $B_1^{i-1}$ and thus we would have rejected the choice of $a$ in 
iteration $i-1$, because of (Reject~5). 
The desired result follows.
\end{proof}

\begin{lemma}\label{lemm: distribution algorithm time}
Given $U\subseteq Y\subseteq Z\subseteq X\subseteq W$ and a state $s$, \algopred\ terminates in time $O(|\mov_1(s)|^2\cdot |\mov_2(s)|^2+\sum_{a_1\in \mov_1(s),a_2\in \mov_2(s)}|\supp(s,a_1,a_2)|)$. Alternatively, if $M$ is given as input, the running time is $O(|\mov_1(s)|^2\cdot |\mov_2(s)|^2)$.
\end{lemma}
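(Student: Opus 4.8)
The plan is to separate the one-time cost of building the matrix $M$ from the cost of the iterative procedure once $M$ is available, and then to charge the latter against the outer loop over candidate actions, the inner loop, and the per-iteration work. First I would account for the extra summand in the first bound. Assuming the sets $\ov{W},U,W,X,Y,Z$ are represented so that membership is tested in $O(1)$ time, each entry $M_{a_1,a_2}$ is computed by a single scan of $\dest(s,a_1,a_2)$: we test which of the regions $\ov{W}$, $U$, $W\setminus X$, $Y\setminus U$, $X\setminus Z$, $Z\setminus Y$ it meets and read $\cost(s,a_1,a_2)$, resolving the nested case split of the definition of $M$ in $O(1)$ per successor. Hence one entry costs $O(|\supp(s,a_1,a_2)|)$ and the whole matrix costs $O(\sum_{a_1,a_2}|\supp(s,a_1,a_2)|)$, which is exactly the term present in the first bound and absent when $M$ is supplied.

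Next I would count iterations, working entirely with $M$. The outer loop runs at most $|\mov_1(s)|$ times, once per candidate $a\in\mov_1(s)$; for a fixed $a$, the steps \emph{Reject~1}, \emph{Accept~1}, and the construction of $C$ each scan row $a$ of $M$ in $O(|\mov_2(s)|)$ time. By Lemma~\ref{lemm: distribution algorithm reject}, the inner loop for a fixed $a$ executes at most $\min(|\mov_1(s)|,|\mov_2(s)|)$ iterations before accepting or rejecting. Within one inner iteration the dominant cost is the call to $\LimitReach(s,W,U,\cdot,C)$; here I would invoke the running-time bound for $\LimitReach$ given $M$ (the matrix, as noted after its definition, already carries exactly the reachability information $\LimitReach$ consumes, so no transition support is re-examined inside the loop), namely $O(|\mov_1(s)|\cdot|\mov_2(s)|)$ per call. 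The remaining work --- forming $A_1^i=\supp(\dist_1^{\epsilon,i})\cup\{a\}$, computing $B_1^i$ and $B_2^i$ by testing the $M_{a_1,a_2}\ne W$, $Z^1$, and $Y$ conditions against $C\setminus A_2^i$, and evaluating \emph{Reject~3--5} and \emph{Accept~2} --- consists of a constant number of passes over $M$, hence is also $O(|\mov_1(s)|\cdot|\mov_2(s)|)$. So each inner iteration costs $O(|\mov_1(s)|\cdot|\mov_2(s)|)$.

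Combining the three factors, the work after $M$ is available is
\[
O\!\big(|\mov_1(s)|\cdot \min(|\mov_1(s)|,|\mov_2(s)|)\cdot |\mov_1(s)|\cdot|\mov_2(s)|\big),
\]
and I would finish by dispatching the two cases of the $\min$. If $|\mov_1(s)|\le|\mov_2(s)|$ this equals $O(|\mov_1(s)|^3|\mov_2(s)|)\subseteq O(|\mov_1(s)|^2|\mov_2(s)|^2)$, and if $|\mov_2(s)|\le|\mov_1(s)|$ it is $O(|\mov_1(s)|^2|\mov_2(s)|^2)$ directly; either way the bound is $O(|\mov_1(s)|^2|\mov_2(s)|^2)$, giving the second claim. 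Adding back the matrix-construction term yields the first claim.

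The step I expect to need the most care is pinning down that one $\LimitReach$ call, given $M$, runs in $O(|\mov_1(s)|\cdot|\mov_2(s)|)$ rather than something larger such as $O(|\mov_1(s)|^2\cdot|\mov_2(s)|)$: without this the product above does not collapse to $|\mov_1(s)|^2|\mov_2(s)|^2$. This relies on the appendix's analysis of $\LimitReach$ together with the matrix-encoding observation, so the crux is really to confirm that the ranking and support output by $\LimitReach$ can be produced within a single matrix-sized budget. The only other point needing attention is verifying that the per-iteration set computations and all reject/accept tests are implementable by $O(1)$ sweeps of $M$, so that they never dominate the $\LimitReach$ cost.
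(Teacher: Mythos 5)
Your proposal is correct and follows essentially the same route as the paper's proof: build $M$ once in time $O(\sum_{a_1,a_2}|\supp(s,a_1,a_2)|)$, then bound the remaining work by $|\mov_1(s)|$ candidate actions times the $\min(|\mov_1(s)|,|\mov_2(s)|)$ inner-loop iterations from Lemma~\ref{lemm: distribution algorithm reject}, times $O(|\mov_1(s)|\cdot|\mov_2(s)|)$ per iteration dominated by the $\LimitReach$ call on $M$ (the paper cites~\cite{dAHK98} for this, exactly the point you flagged as needing care). Your explicit case split on the $\min$ and the $O(1)$-membership-test detail for computing $M$ are slight elaborations the paper leaves implicit, but the argument is the same.
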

\begin{proof}
The calculation of $M$ can be done in time $\sum_{a_1\in \mov_1(s),a_2\in \mov_2(s)}|\supp(s,a_1,a_2)|$. As mentioned in the definition of $M$, we could alternatively use $M$ as input to $\LimitReach$ since it encodes all information needed. There are $|\mov_1(s)|$ different choices for which action $a$ to play with high probability. Given $a$, there are at most $\min(|\mov_1(s)|,|\mov_2(s)|)$ iterations of the inner loop, see Lemma~\ref{lemm: distribution algorithm reject}. Each iteration of the inner loop can be done in $O(|\mov_1(s)|\cdot |\mov_2(s)|)$ time, and is dominated by the running time of $\LimitReach$, which runs in time $O(\mov_1(s)|\cdot |\mov_2(s)|)$ on $M$, see~\cite{dAHK98}.
Hence, if $M$ is given as input we get a running time of $O(|\mov_1(s)|\cdot \min(|\mov_1(s)|,|\mov_2(s)|)\cdot |\mov_1(s)|\cdot |\mov_2(s)|)$, which is less than $O(|\mov_1(s)|^2\cdot |\mov_2(s)|^2)$.
\end{proof}

Combining Lemma~\ref{lemm: distribution algorithm accept}, Lemma~\ref{lemm: distribution algorithm reject} and Lemma~\ref{lemm: distribution algorithm time} we get the following lemma.

\begin{lemma}
The algorithm \algopred, for a given state $s$ and sets $U\subseteq Y\subseteq Z\subseteq X\subseteq W$, correctly computes if $s\in \EXP(W,U,X,Y,Z)$ and 
runs in time $O(|\mov_1(s)|^2\cdot |\mov_2(s)|^2 + \sum_{a_1\in \mov_1(s),a_2\in \mov_2(s)}|\supp(s,a_1,a_2)|)$.
\end{lemma}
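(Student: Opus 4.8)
The plan is to observe that this final lemma is a pure packaging result: it simply collects the three preceding lemmas and restates their conjunction. There is essentially nothing new to prove, so the ``proof'' is a direct citation and synthesis. First I would note that correctness has two directions, and each direction is supplied by one of the earlier lemmas. The ``if \algopred\ accepts then $s \in \EXP$'' direction is exactly Lemma~\ref{lemm: distribution algorithm accept}, which moreover guarantees the witness distribution with the claimed patience. The ``if \algopred\ rejects then $s \notin \EXP$'' direction is exactly Lemma~\ref{lemm: distribution algorithm reject}. Together these two give a biconditional: \algopred\ accepts $s$ if and only if $s \in \EXP(W,U,X,Y,Z)$, which is precisely the statement that \algopred\ correctly decides membership. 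The only subtlety to flag is that the two lemmas together must cover every execution path of the algorithm --- i.e.\ the algorithm always terminates with either an accept or a reject. This is not automatic from the biconditional alone, but Lemma~\ref{lemm: distribution algorithm reject} already establishes termination (each candidate action $a$ is resolved in at most $\min(|\mov_1(s)|,|\mov_2(s)|)$ inner-loop iterations, and there are $|\mov_1(s)|$ candidate actions), so the algorithm halts on every input and outputs a verdict.

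For the running-time claim I would simply quote Lemma~\ref{lemm: distribution algorithm time}, which already states the bound $O(|\mov_1(s)|^2\cdot |\mov_2(s)|^2 + \sum_{a_1\in \mov_1(s),a_2\in \mov_2(s)}|\supp(s,a_1,a_2)|)$ verbatim. The first term covers the iterative \LimitReach\ calls across all candidate actions, and the additive sum covers the one-time computation of the matrix $M$ from the raw transition data; these are exactly the two components analyzed in that lemma.

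I do not expect any genuine obstacle here, since all the work lives in the three component lemmas. The one thing to be careful about is soundness of the synthesis: to conclude a correct \emph{decision} procedure I must argue that accept and reject are mutually exclusive and jointly exhaustive. Mutual exclusivity follows because a fixed candidate $a$ is either accepted or rejected (the algorithm returns immediately on accept), and the overall state verdict is ``Accept'' if some $a$ is accepted and ``Reject'' only if all $a$ are rejected; exhaustiveness follows from termination. Combined with the two one-directional implications, this yields that \algopred\ returns ``Accept'' exactly when $s \in \EXP(W,U,X,Y,Z)$, completing the correctness argument.

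\begin{proof}
Correctness follows by combining the two directions established earlier. By Lemma~\ref{lemm: distribution algorithm accept}, if \algopred\ accepts $s$, then $s \in \EXP(W,U,X,Y,Z)$, and moreover for every $0<\epsilon<\frac{1}{2}$ there is a witness distribution with the stated patience bound. By Lemma~\ref{lemm: distribution algorithm reject}, if \algopred\ rejects $s$, then $s \notin \EXP(W,U,X,Y,Z)$; the same lemma guarantees termination, since each candidate action $a \in \mov_1(s)$ is resolved in at most $\min(|\mov_1(s)|,|\mov_2(s)|)$ iterations of the inner loop. Since \algopred\ returns ``Accept'' for $s$ exactly when some candidate action $a$ is accepted, and ``Reject'' exactly when all candidate actions are rejected, the two implications together show that \algopred\ returns ``Accept'' if and only if $s \in \EXP(W,U,X,Y,Z)$. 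Hence \algopred\ correctly computes whether $s \in \EXP(W,U,X,Y,Z)$. The running-time bound $O(|\mov_1(s)|^2\cdot |\mov_2(s)|^2 + \sum_{a_1\in \mov_1(s),a_2\in \mov_2(s)}|\supp(s,a_1,a_2)|)$ is exactly the bound established in Lemma~\ref{lemm: distribution algorithm time}.
\end{proof}
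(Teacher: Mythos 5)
Your proposal is correct and matches the paper exactly: the paper states this lemma with no separate proof, obtaining it as an immediate combination of Lemma~\ref{lemm: distribution algorithm accept} (soundness of acceptance, with patience), Lemma~\ref{lemm: distribution algorithm reject} (soundness of rejection and the iteration bound), and Lemma~\ref{lemm: distribution algorithm time} (running time). Your added remark that termination plus mutual exclusivity of accept/reject is needed to turn the two implications into a decision procedure is a harmless and accurate elaboration of what the paper leaves implicit.
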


\subsection{Iterative algorithm for value~1 set computation}
In this section we will present the nested iterative algorithm 
for the value~1 set computation. 
The nested iterative algorithm is succinctly represented as the 
following nested fixpoint formula ($\mu$-calculus formula) that
uses the $\EXP$ one-step predecessor operator.
Let 
\[
W^*=\nu W. \mu U. \nu X. \mu Y. \nu Z. \EXP(W,U,X,Y,Z) \enspace .
\]
We will show that 
$W^*=\val_1(\LimInfAvg,\bigstra_1^F)$ (also see the appendix, Section~\ref{sec:Appendix}, for an algorithmic description of computation of the 
$\mu$-calculus formula).
First in the next subsection we show that 
$W^* \subseteq \val_1(\LimInfAvg,\bigstra_1^S) \subseteq \val_1(\LimInfAvg,\bigstra_1^F)$;
and in the following subsection will establish the other inclusion.

\subsubsection{First inclusion: $W^* \subseteq \val_1(\LimInfAvg,\bigstra_1^S)$\label{sec:incl1}}
Let $\Theta_i$ denote the random variable for the reward at the $i$-th 
step of the game.
We will show that for all states $s$ in $W^*$ for all $\epsilon>0$,
there exists a stationary (hence finite-memory) strategy $\sigma_1^\epsilon$ 
for player~1 such that for all positional strategies $\sigma_2$ for player~2 
we have that 
\[
\lim_{t\rightarrow \infty} \frac{\sum_{i=0}^t \E^{\sigma_1^\epsilon,\sigma_2}_s[\Theta_i]}{t}\geq 1-\epsilon \enspace .
\]
This will show that 
$W^* \subseteq \val_1(\LimInfAvg,\bigstra_1^S) \subseteq \val_1(\LimInfAvg,\bigstra_1^F)$.
Notice that the statement is trivially satisfied if $W^*=\emptyset$, and hence we will assume that 
this is not so.

\smallskip\noindent{\bf Computation of $W^*$.}
We first analyze the  computation of $W^*$. 
Since $W^*$ is a fixpoint, we can replace $W$ by $W^*$ and get rid
of the outer most $\nu$ operator, and the rest of the $\mu$-calculus
formula also computes $W^*$.
In other words, we have
\[
W^*=\mu U. \nu X. \mu Y. \nu Z. \EXP(W^*,U,X,Y,Z) \enspace ,
\]
Thus the computation of $W^*$ is achieved as follows: $U_0$ is the empty set; 
and $U_i=\nu X. \mu Y. \nu Z. \EXP(W^*,U_{i-1},X,Y,Z)$, for $i\geq 1$. 
Let $\ell$ be the least index such that $U_\ell=W^*$. For any $i\geq 0$, we also have that 
$Y_{i,0}$ is the empty set and that  
$Y_{i,j}=\nu Z. \EXP(W^*,U_{i-1},U_i,Y_{i,j-1},Z)$, for $j\geq 1$. 
For a state $s \in W^*$, let the rank of state $s$ (denoted $\rank(s)=(i,j)$) 
be the tuple of $(i,j)$ such that $i$ is the least index with $s \in U_i$ 
(i.e., $s\in U_i \setminus U_{i-1}$); 
and $j$ is the least index with $s \in Y_{i,j}$
(i.e., $s\in Y_{i,j} \setminus Y_{i,j-1}$).
For $1\leq i \leq \ell$, let $\rank(i)=j$ be the least index when the fix point
converges for $U_i$, i.e., the least $j$ such that $Y_{i,j}=Y_{i,j+1}$. 
By definition of $W^*$, for all states $s \in W^*$, if $\rank(s)=(i,j)$,
then we must have that for all $\epsilon>0$ 
there is a distribution $\dist_1^{\epsilon}$ over $\mov_1(s)$ 
such that for all actions $a_2\in \mov_2(s)$ for player~2 we have that 
\begin{align}
&\ \big(\epsilon \cdot \trans(s,\dist^{\epsilon}_1,a_2)(U_{i-1}) >  \trans(s,\dist^{\epsilon}_1,a_2)(\overline{W}^*)\big) \label{eqt: limit reach}\\[2ex]
\vee&\ \big(  \trans(s,\dist^{\epsilon}_1,a_2)(U_{i})=1 \wedge  
 \trans(s,\dist^{\epsilon}_1,a_2)(Y_{i,j-1}) > 0 \big)
\label{eqt: almost reach} \\[2ex]
\vee&\ \big( \trans(s,\dist^{\epsilon}_1,a_2)(U_{i})=1 \wedge 
\ExpRew(s,\dist^{\epsilon}_1,a_2) \geq 1-\epsilon   \wedge  \trans(s,\dist^{\epsilon}_1,a_2)(Y_{i,j}) \geq 1- \epsilon\big) \label{eqt: get 1}\enspace ;
\end{align}
where $\ov{W}^*=S \setminus W^*$ is the complement of $W^*$.
We refer to the above as Equation~\ref{eqt: limit reach}, Equation~\ref{eqt: almost reach}, and
Equation~\ref{eqt: get 1}, respectively.

\smallskip\noindent{\bf The construction of  stationary witness strategy $\sigma_1^\epsilon$.}
Fix $0<\epsilon<\frac{1}{2}$. The desired witness stationary strategy $\sigma_1^\epsilon$ will be constructed 
from a finite sequence of stationary strategies, 
\[
\sigma_1^{\epsilon,1,0},\sigma_1^{\epsilon,1,1},\dots,\sigma_1^{\epsilon,1,\rank(1)},
\sigma_1^{\epsilon,2,0},\dots,\sigma_1^{\epsilon,2,\rank(2)},
\dots,
\sigma_1^{\epsilon,\ell,0}, \dots, \sigma_1^{\epsilon,\ell,\rank(\ell)}.
\] 
The strategies will be constructed inductively. 
First we will construct it for states in $U_1$ and $(U_\ell\setminus U_{\ell-1})$, 
and then we will present the inductive construction for $(U_i \setminus U_{i-1})$, 
for $2\leq i\leq \ell-1$.
\begin{itemize}

\item \emph{(Base case).}
We will first describe the construction of the strategy 
$\sigma_1^{\epsilon,1,0}$ (resp. $\sigma_1^{\epsilon,\ell,0}$). 
\begin{enumerate}
\item The stationary strategy $\sigma_1^{\epsilon,1,0}$ 
(resp. $\sigma_1^{\epsilon,\ell,0}$) is arbitrary except for states in 
$(Y_{1,\rank(1)}\setminus Y_{1,\rank(1)-1})$ 
(resp. $(Y_{\ell,\rank(\ell)}\setminus Y_{\ell,\rank(\ell)-1})$).
 
\item 
For states $s$ in $(Y_{1,\rank(1)}\setminus Y_{1,\rank(1)-1})$ 
(resp. $(Y_{\ell,\rank(\ell)}\setminus Y_{\ell,\rank(\ell)-1})$) 
the strategy plays the distribution $\dist^{\eta}_1$ over $\mov_1(s)$, for 
$\eta=\frac{\epsilon}{2}$.

\item 
We next describe the construction of the strategy $\sigma_1^{\epsilon,1,j}$ (resp. $\sigma_1^{\epsilon,\ell,j}$), for $j\geq 1$, using induction in $j$.
\begin{enumerate} 

\item The strategy $\sigma_1^{\epsilon,1,j}$ (resp. $\sigma_1^{\epsilon,\ell,j}$) plays as 
$\sigma_1^{\epsilon,1,j-1}$ (resp. $\sigma_1^{\epsilon,\ell,j-1}$) except 
for states in $(Y_{1,\rank(1)-j}\setminus Y_{1,\rank(1)-(j+1)})$ 
(resp. $(Y_{\ell,\rank(\ell)-j}\setminus Y_{\ell,\rank(\ell)-(j+1)})$). 

\item For states $s$ in $(Y_{1,\rank(1)-j}\setminus Y_{1,\rank(1)-(j+1)})$ 
(resp. $(Y_{\ell,\rank(\ell)-j}\setminus Y_{\ell,\rank(\ell)-(j+1)})$) 
the strategy plays the distribution $\dist^{\eta}_1$ over $\mov_1(s)$, 
for $\eta=\left(\frac{\epsilon\cdot\delta_{\min}}{4}\right)^{(2m)^{j}}$. 
\end{enumerate}
\end{enumerate}

\item \emph{(Inductive case).}
We will next construct the strategy for the remaining states, in two steps, first for $\sigma_1^{\epsilon,i,0}$ and then for $\sigma_1^{\epsilon,i,j}$, for $2\leq i\leq \ell-1$ and $j\geq 1$. 
We will do so using induction backwards in $i$. That is the base case is $i=\ell$ and we then proceed downward. 
\begin{enumerate}
\item The strategy $\sigma_1^{\epsilon,i,0}$ plays as the strategy $\sigma_1^{\eta,i+1,\rank({i+1})}$, 
for $\eta=\left(\frac{\epsilon\cdot\delta_{\min}}{4}\right)^{(2m)^{\rank(i)}}$, except for states in $(Y_{i,\rank(i)}\setminus Y_{i,\rank(i)-1})$. 

\item For states  $s$ in $Y_{i,\rank(i)}\setminus Y_{i,\rank(i)-1}$ the strategy plays $\dist^{\eta}_1$ over $\mov_1(s)$, for $\eta=\frac{\epsilon}{2}$.

\item 
We now finally construct $\sigma_1^{\epsilon,i,j}$, for $2\leq i\leq \ell-1$, using induction in $j$. 
\begin{enumerate}
\item
The strategy $\sigma_1^{\epsilon,i,j}$ plays as $\sigma_1^{\epsilon,i,j-1}$ except for states in $(Y_{i,\rank(i)-j}\setminus Y_{i,\rank(i)-(j+1)})$.
\item  For states $s$ in $(Y_{i,\rank(i)-j}\setminus Y_{i,\rank(i)-(j+1)})$ 
the strategy plays $\dist^{\eta}_1$ over $\mov_1(s)$, 
for $\eta=\left(\frac{\epsilon\cdot\delta_{\min}}{4}\right)^{(2m)^{j}}$.
\end{enumerate}
\end{enumerate}

\item \emph{(The entire strategy).}
Let $\sigma_1^{\epsilon,i}=\sigma_1^{\epsilon,i,\rank(i)}$ for all $i$. 
Let $\sigma_1^{\epsilon}$ play as $\sigma_1^{\beta,1}$ in $U_1$ and 
$\sigma_1^{\beta,2}$, for $\beta=\frac{\epsilon}{2}$, in the remaining states.
\end{itemize}

\begin{lemma}\label{lemm:patience1}
The patience of $\sigma_1^{\epsilon,i}(s)$ for states $s$ of rank $(i,\rank(i)-j)$ is at most $\left(\frac{\epsilon\cdot\delta_{\min}}{4}\right)^{-(\frac{(2m)^{j+1}}{2}-1)}$.
\end{lemma}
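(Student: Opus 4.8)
The plan is to bound the patience state-by-state by reading off, from the construction of $\sigma_1^{\epsilon,i}$, the exact parametrized distribution played at each rank level, and then invoking the patience guarantee already established for witness distributions of $\EXP$. First I would observe that a state $s$ of rank $(i,\rank(i)-j)$ is assigned exactly the witness distribution $\dist_1^\eta$ of $\EXP$, where $\eta=\frac{\epsilon}{2}$ when $j=0$ and $\eta=\left(\frac{\epsilon\cdot\delta_{\min}}{4}\right)^{(2m)^{j}}$ when $j\geq 1$; this can be read directly off the base case and the inductive case of the construction, noting that the value of $\eta$ depends only on $j$ and not on $i$. Since $s$ lies in the relevant $\EXP$ set, algorithm \algopred\ accepts it, so Lemma~\ref{lemm: distribution algorithm accept} bounds $\patc(\dist_1^\eta)$ by $\left(\frac{\eta\cdot\delta_{\min}}{2}\right)^{-(|\mov_1(s)|-1)}$. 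Using $|\mov_1(s)|\leq m$ together with the fact that the base $\frac{\eta\cdot\delta_{\min}}{2}$ is below $1$ (so a larger exponent only enlarges the value), I would replace the exponent by $-(m-1)$ and work with the uniform bound $\left(\frac{\eta\cdot\delta_{\min}}{2}\right)^{-(m-1)}$.

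Next I would split on $j$. For $j=0$, substituting $\eta=\frac{\epsilon}{2}$ yields $\left(\frac{\epsilon\cdot\delta_{\min}}{4}\right)^{-(m-1)}$, which already matches the claim since $\frac{(2m)^{j+1}}{2}-1=m-1$ at $j=0$. For $j\geq 1$, I would set $c=\frac{\epsilon\cdot\delta_{\min}}{4}\in(0,1)$, so that $\eta=c^{(2m)^{j}}$, and first lower-bound $\frac{\eta\cdot\delta_{\min}}{2}=\eta\cdot\frac{\delta_{\min}}{2}\geq c\cdot\eta=c^{(2m)^{j}+1}$, where the step $\frac{\delta_{\min}}{2}\geq c$ holds because $\epsilon<\frac12$. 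Since $m\geq 2$ makes $x\mapsto x^{-(m-1)}$ non-increasing on $(0,1]$, this converts the patience bound into $c^{-((2m)^{j}+1)(m-1)}$.

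The last step is the exponent comparison. I would check that $\left(\frac{(2m)^{j+1}}{2}-1\right)-((2m)^{j}+1)(m-1)=\big(m\cdot(2m)^{j}-1\big)-\big((2m)^{j}+1\big)(m-1)=(2m)^{j}-m\geq 0$ for $j\geq 1$, since $(2m)^{j}\geq 2m\geq m$. Hence $((2m)^{j}+1)(m-1)\leq \frac{(2m)^{j+1}}{2}-1$, and because $0<c<1$ enlarging the exponent enlarges $c^{-(\cdot)}$, so $\patc(\sigma_1^{\epsilon,i}(s))\leq c^{-\left(\frac{(2m)^{j+1}}{2}-1\right)}$, which is the claimed bound.

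The main obstacle here is not depth but bookkeeping: the base $\frac{\eta\cdot\delta_{\min}}{2}$ lies in $(0,1)$ while the exponents are negative, so each comparison flips direction in a way that is easy to get backwards, and one must keep straight whether a smaller base or a larger exponent enlarges the value. The single genuine arithmetic check is $((2m)^{j}+1)(m-1)\leq m(2m)^{j}-1$, which collapses to the elementary inequality $(2m)^{j}\geq m$. I would also verify at the outset that $\eta<\frac12$ for all $j$ (so that Lemma~\ref{lemm: distribution algorithm accept} is applicable and all bases remain below $1$), which is immediate since $c<\frac18$ and $\frac{\epsilon}{2}<\frac14$.
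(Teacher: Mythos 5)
Your proposal is correct and follows essentially the same route as the paper: read off $\eta$ ($\frac{\epsilon}{2}$ for $j=0$, $\left(\frac{\epsilon\cdot\delta_{\min}}{4}\right)^{(2m)^{j}}$ for $j\geq 1$) from the strategy construction, invoke the patience bound $\left(\frac{\eta\cdot\delta_{\min}}{2}\right)^{-(|\mov_1(s)|-1)}$ from Lemma~\ref{lemm: distribution algorithm accept} with $|\mov_1(s)|-1\leq m-1$, and finish by exponent arithmetic. Your bookkeeping differs only trivially from the paper's (you lower-bound the base by $c^{(2m)^{j}+1}$ and compare exponents via $(2m)^{j}\geq m$, where the paper extracts factors directly), and both the direction-of-inequality checks and the verification that $\eta<\frac{1}{2}$ are sound.
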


\begin{proof}
By construction, the patience $\sigma_1^{\epsilon,i}(s)$ of states $s$ of rank $(i,\rank(i))$ is
$\left(\frac{\epsilon\cdot \delta_{\min}}{4}\right)^{-(m-1)}$ (by Lemma~\ref{lemm: distribution algorithm accept}). 
Also for $j\geq 1$, the patience  $\sigma_1^{\epsilon,i}(s)$ of states $s$ of rank $(i,\rank(i)-j)$ is at most  
\begin{align*}
\left(\frac{\left(\frac{\epsilon\cdot\delta_{\min}}{4}\right)^{(2m)^{j}}\cdot \delta_{\min}}{2}\right)^{-(m-1)}
&=\left(\frac{\epsilon\cdot\delta_{\min}}{4}\right)^{-(2m)^{j}\cdot (m-1)}\cdot \left(\frac{\delta_{\min}}{2}\right)^{-(m-1)} \\
&=\left(\frac{\epsilon\cdot\delta_{\min}}{4}\right)^{-(2m)^{j}\cdot (m-1)}\cdot \left(\frac{\delta_{\min}}{2}\right)^{-m} \cdot \left(\frac{\delta_{\min}}{2}\right)\\
&=\left(\frac{\epsilon\cdot\delta_{\min}}{4}\right)^{-(2m)^{j}\cdot m}\cdot \left(\frac{\epsilon\cdot\delta_{\min}}{4}\right)^{(2m)^{j}}\cdot \left(\frac{\delta_{\min}}{2}\right)^{-m} \cdot \left(\frac{\delta_{\min}}{2}\right)\\
&\leq \left(\frac{\epsilon\cdot\delta_{\min}}{4}\right)^{-(2m)^{j}\cdot m} \cdot \left(\frac{\epsilon\cdot \delta_{\min}}{4}\right)\\
&=\left(\frac{\epsilon\cdot\delta_{\min}}{4}\right)^{-(\frac{(2m)^{j+1}}{2}-1)} \enspace ,
\end{align*}
where the inequality is as follows: 
$
\left(\frac{\epsilon\cdot\delta_{\min}}{4}\right)^{(2m)^{j}}\cdot \left(\frac{\delta_{\min}}{2}\right)^{-m} = 
\left(\frac{\epsilon}{2}\right)^{(2m)^{j}} \cdot\left(\frac{\delta_{\min}}{2}\right)^{(2m)^{j}}\cdot \left(\frac{\delta_{\min}}{2}\right)^{-m} \leq \frac{\epsilon}{2}$ 
since $(2m)^j \geq m \geq 1$ and $\epsilon<1$.
The desired result follows.
\end{proof}

\begin{lemma}\label{lemm:patience2}
Let $0<\epsilon<\frac{1}{2}$ be given. The patience of the witness stationary 
strategy $\sigma_1^\epsilon$ is less than $\left(\frac{\epsilon\cdot\delta_{\min}}{4}\right)^{-(2m)^{n}}$.
\end{lemma}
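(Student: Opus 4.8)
The plan is to bound the patience of the full strategy $\sigma_1^\epsilon$ by that of its worst state, which lies in the deepest outer level, and to control it by tracking how the effective value of the parameter shrinks as the (backward-in-$i$) construction descends through the outer levels $U_1\subsetneq U_2\subsetneq\cdots\subsetneq U_\ell=W^*$. Write $c=\frac{\epsilon\cdot\delta_{\min}}{4}$ and $q_i=(2m)^{\rank(i)}$. Because $\sigma_1^\epsilon$ plays $\sigma_1^{\beta,1}$ on $U_1$ and $\sigma_1^{\beta,2}$ on the remaining states for $\beta=\frac{\epsilon}{2}$, and because $\sigma_1^{\eta,i,0}$ defers on all non-top layers to $\sigma_1^{\eta',i+1,\rank(i+1)}$ with the reduced parameter $\eta'=\left(\frac{\eta\cdot\delta_{\min}}{4}\right)^{q_i}$, the restriction of $\sigma_1^\epsilon$ to level $i$ coincides with the level-$i$ part of the strategy analysed in Lemma~\ref{lemm:patience1}, instantiated at an \emph{effective parameter} $\beta_i$ with $\beta_2=\frac{\epsilon}{2}$ and $\beta_{i+1}=\left(\frac{\beta_i\cdot\delta_{\min}}{4}\right)^{q_i}$. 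Hence, applying Lemma~\ref{lemm:patience1} with $\epsilon$ replaced by $\beta_i$ and using that within level $i$ the second rank coordinate is at least $1$ (so $j\le\rank(i)-1$), the largest patience among level-$i$ states is at most $\left(\frac{\beta_i\cdot\delta_{\min}}{4}\right)^{-\left(\frac{q_i}{2}-1\right)}$.

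Next I would bound the descent of $\beta_i$. The claim is $\frac{\beta_i\cdot\delta_{\min}}{4}\ge c^{\,t_i}$, where $t_2=2$ and $t_i=t_{i-1}q_{i-1}+1$. The base case holds since $\frac{\beta_2\cdot\delta_{\min}}{4}=\frac{c}{2}\ge c^2$ (as $c<\frac12$), and the inductive step uses $\beta_i=\left(\frac{\beta_{i-1}\delta_{\min}}{4}\right)^{q_{i-1}}\ge c^{\,t_{i-1}q_{i-1}}$ together with $\frac{\delta_{\min}}{4}\ge c$, giving $\frac{\beta_i\delta_{\min}}{4}\ge c^{\,t_{i-1}q_{i-1}+1}$. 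Combining with the previous paragraph and using $\frac{q_i}{2}-1<\frac{q_i}{2}$ and $c^{\,t_i}<1$, the worst level-$i$ patience is strictly below $c^{-T_i/2}$ with $T_i:=t_iq_i=(T_{i-1}+1)q_i$. Since $q_i\ge 2m\ge4$, the $T_i$ are strictly increasing, so every state of a level $i\ge2$ has patience strictly below $c^{-T_\ell/2}$.

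It then remains to bound $T_\ell$. Telescoping the recurrence gives $T_\ell=2\prod_{i=2}^{\ell}q_i+\sum_{k=3}^{\ell}\prod_{i=k}^{\ell}q_i$. With $R:=\sum_{i=2}^{\ell}\rank(i)$ we have $\prod_{i=2}^{\ell}q_i=(2m)^R$, while the exponents $\sum_{i=k}^{\ell}\rank(i)$ in the sum are distinct integers in $\{1,\dots,R-1\}$, so the sum is at most $\sum_{v=1}^{R-1}(2m)^v<\frac{(2m)^R}{2m-1}\le\frac{(2m)^R}{3}$; hence $T_\ell\le\frac{7}{3}(2m)^R$. The ranks $\rank(i)$ count disjoint nonempty $Y$-layers, so $\sum_{i=1}^{\ell}\rank(i)\le n$, and since $\rank(1)\ge1$ we get $R\le n-1$. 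Therefore every level-$\ge2$ state has patience below $c^{-\frac{7}{6}(2m)^{n-1}}<c^{-(2m)^n}$, the last strict inequality holding because $\frac{7}{6}<2m$ (valid since $m\ge2$) and $c<1$. For level-$1$ states (played with base $\frac{c}{2}$ and exponent at most $\frac{(2m)^n}{2}$) the patience is below $2^{(2m)^n/2}c^{-(2m)^n/2}<c^{-(2m)^n}$ because $\frac1c>2$; the degenerate case $\ell=1$ is subsumed here. This yields the stated bound.

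I expect the main obstacle to be controlling the accumulated exponent so that it stays below $(2m)^n$: the naive ``lose a factor of $2$ per level'' estimate of $T_i=(T_{i-1}+1)q_i$ introduces a spurious $2^{\Theta(\ell)}$ (or $\Theta(\ell)$) factor that overshoots the target. The two ideas that make the constant close are (i) using the exact telescoped form of $T_\ell$ and estimating only its lower-order terms by a geometric series rather than crudely, and (ii) exploiting the slack $R\le n-1$ in combination with $2m\ge4$ to absorb the residual leading constant $\frac{7}{6}$ into the gap between $(2m)^{n-1}$ and $(2m)^n$.
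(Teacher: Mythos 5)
Your proof is correct and follows essentially the same route as the paper's: you track the effective parameter down the levels via the recursion $\beta_{i+1}=\left(\frac{\beta_i\cdot\delta_{\min}}{4}\right)^{(2m)^{\rank(i)}}$, apply Lemma~\ref{lemm:patience1} per level, bound the telescoped exponent by geometric domination of its leading term $(2m)^R$, and exploit the slack $R\leq n-1$ (from $U_1\neq\emptyset$) together with $2m\geq 4$, exactly as the paper does. The only differences are bookkeeping: you absorb the $\beta=\frac{\epsilon}{2}$ halving at the base case via $\frac{c}{2}\geq c^2$ and telescope $T_i=(T_{i-1}+1)q_i$ exactly with a $\frac{7}{3}$ leading constant, whereas the paper bounds the sum by twice its leading term and handles the halving at the end with the $4^2=16>8$ squaring trick.
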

\begin{proof} We first present the bound for $U_1$ (also $U_2$) and then for other states.

\smallskip\noindent{\bf The patience of $\sigma_1^{\epsilon,1}$ for states in $U_1$ (also similar for $U_2$).} For each state $s$ in $U_1$, the corresponding distribution $\sigma_1^{\epsilon,1}(s)$   has patience at most $\left(\frac{\epsilon\cdot\delta_{\min}}{4}\right)^{-(\frac{(2m)^{\rank(1)}}{2}-1)}$, since no states are in $Y_{1,0}$. Similarly for $s$ in $U_2$ and the corresponding distribution $\sigma_1^{\epsilon,1}(s)$.

\smallskip\noindent{\bf The  $\eta$ for which the strategy $\sigma_1^{\epsilon,2}$ follows $\sigma_1^{\eta,i}$: Inductive statement.} 
We will argue using induction that for each state $S\in (W^*\setminus U_{i-1})$, for $i\geq 3$, we have that 
the strategy $\sigma_1^{\epsilon,2}$ follows the strategy $\sigma_1^{\eta,i}$, for \[\eta\geq \left(\frac{\epsilon\cdot\delta_{\min}}{4}\right)^{\sum_{k=2}^{i-1}\prod_{k'=k}^{i-1} (2m)^{\rank(k')}}\enspace .\]

\smallskip\noindent{\bf Base case.} For each state $s\in (S\setminus U_{2})$, the strategy $\sigma_1^{\epsilon,2}$ follows the strategy $\sigma_1^{\eta,3}$, for $\eta\geq \left(\frac{\epsilon\cdot\delta_{\min}}{4}\right)^{(2m)^{\rank(2)}}$, by construction, which is the wanted expression.

\smallskip\noindent{\bf Induction case $i+1$.} For $i\geq 4$, for each state $s\in (S\setminus U_{i-1})$, the strategy 
$\sigma_1^{\epsilon,2}$ follows the strategy $\sigma_1^{\eta,i}$, for $\eta\geq \left(\frac{\epsilon\cdot\delta_{\min}}{4}\right)^{\sum_{k=2}^{i-1}\prod_{k'=k}^{i-1} (2m)^{\rank(k')}}$, by induction. In each state $s\in (S\setminus U_{i})$, the strategy $\sigma_1^{\eta,i}$ follows the strategy $\sigma_1^{\eta',i+1}$, for $\eta'\geq \left(\frac{\eta\cdot\delta_{\min}}{4}\right)^{(2m)^{\rank(i)}}$, by construction. Thus, the strategy $\sigma_1^{\epsilon,2}$ follows $\sigma_1^{\eta',i+1}$ for \begin{align*}
\eta'&\geq \left(\frac{\eta\cdot\delta_{\min}}{4}\right)^{(2m)^{\rank(i)}} \\
&\geq \left(\frac{\left(\frac{\epsilon\cdot\delta_{\min}}{4}\right)^{\sum_{k=2}^{i-1}\prod_{k'=k}^{i-1} (2m)^{\rank(k')}}\cdot\delta_{\min}}{4}\right)^{(2m)^{\rank(i)}} \\
& \geq \left(\left(\frac{\epsilon\cdot\delta_{\min}}{4}\right)^{1+\sum_{k=2}^{i-1}\prod_{k'=k}^{i-1} (2m)^{\rank(k')}}\right)^{(2m)^{\rank(i)}} \\
&= \left(\frac{\epsilon\cdot\delta_{\min}}{4}\right)^{\sum_{k=2}^{i}\prod_{k'=k}^{i} (2m)^{\rank(k')}} \enspace .
\end{align*}
The first inequality comes from our preceding explanation. The second inequality uses the inductive hypothesis. The third uses that $\frac{\delta_{\min}}{4}>\frac{\epsilon\cdot \delta_{\min}}{4}$. The last equality is the inductive hypothesis for $i+1$ and follows from \begin{align*}
(2m)^{\rank(i)}+(2m)^{\rank(i)}\cdot \sum_{k=2}^{i-1}\prod_{k'=k}^{i-1} (2m)^{\rank(k')} & = (2m)^{\rank(i)}+\sum_{k=2}^{i-1}\prod_{k'=k}^{i} (2m)^{\rank(k')} \\
 &= \sum_{k=2}^{i}\prod_{k'=k}^{i} (2m)^{\rank(k')} \enspace .
\end{align*}

\smallskip\noindent{\bf Patience of $\sigma_1^{\epsilon,2}(s)$ for states in $U_i$, for $i\geq 3$.} We see that for $i\geq 3$ and for each $s$ in $U_i$ we have that $\sigma_1^{\eta,i}(s)$ follows $\dist_1^{\eta'}$ for $\eta'\geq \left(\frac{\eta\cdot\delta_{\min}}{4}\right)^{(2m)^{\rank(i)-1}}$ (since $Y_{i,0}$ is empty), by construction. Hence, we get that $\sigma_1^{\epsilon,2}(s)=\dist_1^{\eta'}$  for $\eta'\geq \left(\frac{\epsilon\cdot\delta_{\min}}{4}\right)^{\frac{\sum_{k=2}^{i}\prod_{k'=k}^{i} (2m)^{\rank(k')}}{2m}}$, using a similar argument as the one used in the inductive case.
Since $\rank(i)\geq 1$ and $m\geq 1$, we see that each term in the sum $\sum_{k=2}^{i}\prod_{k'=k}^{i} (2m)^{\rank(k')}$ is at least twice as large as the following. Thus, we have that 
\begin{align*}
\sum_{k=2}^{i}\prod_{k'=k}^{i} (2m)^{\rank(k')}<2\cdot \prod_{k'=2}^{i} (2m)^{\rank(k')}=2\cdot (2m)^{\sum_{k'=2}^{i}\rank(k')}\leq 2\cdot (2m)^{n-1}\leq (2m)^{n} \enspace .
\end{align*}
The first inequality is because $U_1$ must contain at least 1 state. The second comes from $m\geq 1$. 
Hence, $\eta'\geq \left(\frac{\epsilon\cdot\delta_{\min}}{4}\right)^{(2m)^{n-1}}$. Using an argument similar to the one used to prove Lemma~\ref{lemm:patience1}, we get that the patience for $\dist_1^{\eta'}$ is then at most $\left(\frac{\epsilon\cdot\delta_{\min}}{4}\right)^{-(\frac{(2m)^{n}}{2}-1)}$. 

\smallskip\noindent{\bf Patience of $\sigma_1^{\epsilon}$.} We now need to consider the strategy $\sigma_1^{\epsilon}$. It follows $\sigma_1^{\beta,1}$ in $U_1$ and $\sigma_1^{\beta,2}$ elsewhere, for $\beta=\frac{\epsilon}{2}$, 
We see that 
\begin{align*}
\left(\frac{\beta\cdot\delta_{\min}}{4}\right)^{-(\frac{(2m)^{n}}{2}-1)}&=\left(\frac{\epsilon\cdot\delta_{\min}}{8}\right)^{-(\frac{(2m)^{n}}{2}-1)} \\
&<\left(\frac{\epsilon\cdot\delta_{\min}}{4}\right)^{-(2m)^{n}} 
\end{align*}
The inequality is because $4^2=16>8$ (and the last expression more than squares the preceding). 
This completes the proof.
\end{proof}

\smallskip\noindent{\bf Basic overview of the proof.} We first present the basic 
overview of the proof.
Let $\sigma_1$ be a stationary strategy that follows distribution $\dist^{\eta}_1$ over $\mov_1(s)$ in 
state $s \in W^*$ for some $\eta>0$ and let $\sigma_2$ be a positional counter-strategy for player~2. 
For state $s$ in $W^*$, $\sigma_1(s)$ and $\sigma_2(s)$ satisfies at least one of Equation~\ref{eqt: limit reach}, Equation~\ref{eqt: almost reach}, 
or Equation~\ref{eqt: get 1} in $s$. 
Let $\CC_1^{\sigma_1,\sigma_2} \subseteq W^*$ 
(resp. $\CC_2^{\sigma_1,\sigma_2} \subseteq W^*$ and $\CC_3^{\sigma_1,\sigma_2} \subseteq W^*$) 
be the set of states in $W^*$ that satisfies Equation~\ref{eqt: limit reach}
(resp. Equation~\ref{eqt: almost reach} and Equation~\ref{eqt: get 1}).
We will prove that $\sigma_1^{\epsilon}$ ensures value at least $1-\epsilon$ for each states $s$ in $W^*$.
We will split the proof into four parts, first we will show some properties 
for states in $U_1$, then for states in $U_\ell\setminus U_{\ell-1}$, 
and finally for states in $U_i\setminus U_{i-1}$ for $2\leq i\leq \ell-1$.
In the fourth part, we will then combine the three properties to establish 
the desired result.
The three properties are as follows 
\begin{itemize}
\item \emph{(Property~1).} For all states $s$ in $U_1$ we will show that $\sigma_1^{\epsilon,1}$ ensures
$\Safe(U_1)$ with probability~1 and mean-payoff at least $1-\epsilon$
(i.e., for all positional strategies $\sigma_2$ we have 
$\lim_{t\rightarrow \infty} \frac{\sum_{i=0}^t \E^{\sigma_1^{\epsilon,1},\sigma_2}_s[\Theta_i]}{t}\geq 1-\epsilon$).
%%In other words, both safety in $U_1$ and $\RR(s,\epsilon)$ is guaranteed.
\item \emph{(Property~2).}
For all states $s$ in $(U_{\ell} \setminus U_{\ell-1})$ we will show that 
$\sigma_1^{\epsilon,\ell}$ ensures that against all positional strategies $\sigma_2$ 
we have that 
\begin{enumerate}
\item given the event $\Safe(U_{\ell}\setminus U_{\ell-1})$, the mean-payoff is at least $1-\epsilon$;
\item $\Pr_s^{\sigma_1^{\epsilon,\ell},\sigma_2}(\Safe(U_{\ell}\setminus U_{\ell-1})\cup \Reach(U_{\ell-1} \cup \ov{W}^*))=1$; and
\item $\Pr_s^{\sigma_1^{\epsilon,\ell},\sigma_2}(\Safe(U_{\ell}\setminus U_{\ell-1})\cup \Reach(U_{\ell-1}))\geq 1-\epsilon$.
\end{enumerate}
 
\item \emph{(Property~3).} 
For all states $s$ in $(U_{\ell} \setminus U_{\ell-(i+1)})$, for $1 \leq i \leq \ell-2$, we will show that 
$\sigma_1^{\epsilon,i}$ ensures that against all positional strategies $\sigma_2$ 
we have that
\begin{enumerate}
\item given the event $\bigcup_{j\leq i}\coBuchi(U_{\ell-j}\setminus U_{\ell-(j+1)})$,
 the mean-payoff is at least $1-\epsilon$;
 \item $\Pr_s^{\sigma_1^{\epsilon,\ell-i},\sigma_2}(\bigcup_{j \leq i}\coBuchi(U_{\ell-j}\setminus U_{\ell-(j+1)})\cup \Reach(U_{\ell-(i+1)}\cup \ov{W}^*))=1$; and
 \item $\Pr_s^{\sigma_1^{\epsilon,\ell-i},\sigma_2}(\bigcup_{j \leq i}\coBuchi(U_{\ell-j}\setminus U_{\ell-(j+1)})\cup \Reach(U_{\ell-(i+1)}))\geq 1-\epsilon$.
\end{enumerate} 

\end{itemize}
In Lemma~\ref{lemm:U_1}, Lemma~\ref{lemm:U_ell}, and Lemma~\ref{lemm:U_i} 
we establish Properties~1,~2, and~3, respectively.
We first present the basic intuition of the proof of Lemma~\ref{lemm:U_1}.

\smallskip\noindent{\bf The basic intuition of Lemma~\ref{lemm:U_1}.}
The key idea of the proof is as follows.
Once we fix the strategies for both the players we have a Markov chain.
Let $\CC_2$ and $\CC_3$ denote the set of states in $U_1$ that satisfy
Equation~\ref{eqt: almost reach} and  Equation~\ref{eqt: get 1},  respectively.
Since $U_0$ is empty, no state in $U_1$ can satisfy Equation~\ref{eqt: limit reach}.
For states $s$ in $\CC_2$ of rank $(1,j)$, the fact that Equation~\ref{eqt: almost reach} is satisfied 
ensures that a state of rank $(1,j')$, for $j'<j$, is visited from $s$ with positive probability.
Let $\patc(j)$ denote the patience of the strategy $\sigma_1^{\epsilon,1}$ 
for states of rank $(1,\rank(1)-j)$. We now consider the following case analysis.
\begin{enumerate}
\item 
First we consider the set of states in $(Y_{1,\rank(1)}\setminus Y_{1,\rank(1)-1})$
and show that if we stay in the set $(Y_{1,\rank(1)}\setminus Y_{1,\rank(1)-1})$,
then the mean-payoff is at least $1-\epsilon$. 
The argument is as follows:
By Markov property~\ref{markov property recurrent class}, we must reach a recurrent class with probability 1.
A recurrent class contained in $(Y_{1,\rank(1)}\setminus Y_{1,\rank(1)-1})$ must 
consist of only states in $\CC_3$ (since from states in $\CC_2$ we reach 
lower rank states with positive probability), 
and since Equation~\ref{eqt: get 1} is satisfied for states in $\CC_3$ it 
follows that the mean-payoff value is at least $1-\epsilon$.
Hence, if we have a recurrent class of the Markov chain contained in $(U_1 \setminus Y_{1,\rank(1)-1})=
(Y_{1,\rank(1)}\setminus Y_{1,\rank(1)-1})$, 
then the mean-payoff of the recurrent class is at least $1-\epsilon$.
This completes the argument.
Also, if the set $(Y_{1,\rank(1)}\setminus Y_{1,\rank(1)-1})$ is left, then we can \emph{bound} 
the number of visits to states in $\CC_2$ (and in the worst case each such visit gives reward~0)  
in expectation encountered before leaving the set $(Y_{1,\rank(1)}\setminus Y_{1,\rank(1)-1})$.
This bound on the number of visits in expectation to $\CC_2$ 
(which we say has not been accounted for by visits to $\CC_3$) is 
%%at most 
$\kappa(0)=(\trans_{\min})^{-1} \cdot \patc(0)$. 
There is an illustration of this base case in Figure~\ref{fig:base_U_1}.

\item
Now we consider that we are at some intermediate part of the computation, i.e.,
in some state in  $(Y_{1,\rank(1)-j}\setminus Y_{1,\rank(1)-(j+1)})$, for $j\geq 1$.
Inductively we have an upper bound $\kappa(j)$ on the number of times that states
in $\CC_2$ were visited (in the worst case each such visit gives reward~0) in 
expectation that has not been accounted for by visits to states in $\CC_3$ 
till we reach the set  $(Y_{1,\rank(1)-j}\setminus Y_{1,\rank(1)-(j+1)})$ from any state in $Y_{1,\rank(1)-j+1}$.
The one-step probability distribution $\dist_1^\eta$ is chosen such that 
$\eta \cdot \kappa(j) \leq \epsilon$.
In other words, $\eta$ decreases rapidly as $i$ increases, and 
the small $\eta$ ensures that if the play stays in 
$(U_1 \setminus Y_{1,\rank(1)-(j+1)})$, then the mean-payoff is at least $1-\epsilon$,
i.e., if we have a recurrent class $L$ contained in $(U_1 \setminus Y_{1,\rank(1)-(j+1)})$ 
and $(L \cap Y_{1,\rank(1)-j})$ is non-empty, then all states in  $(L \cap Y_{1,\rank(1)-j})$ belong 
to $\CC_3$, and the mean-payoff of the recurrent class is at least $1-\epsilon$.
Moreover, we can also upper bound the number of visits to states in $\CC_2$ in expectation
that has not been accounted for by visits to states in $\CC_3$ before
reaching the set  $Y_{1,\rank(1)-(j+1)}$ if we leave $(U_1 \setminus Y_{1,\rank(1)-(j+1)})$ by
$\kappa(j+1)=(\kappa(j)+1)\cdot (\trans_{\min})^{-1} \cdot \patc(j)$, 
and then proceed inductively. 
There is an illustration of this inductive case in Figure~\ref{fig:induction_U_1}.
\end{enumerate}

\begin{figure}
\begin{center}

\begin{tikzpicture}[node distance=3cm]
\tikzstyle{every state}=[fill=white,draw=black,text=black,font=\small , inner sep=0pt,minimum size=4mm]
\tikzstyle{txt}=[state,draw=white]
\foreach \x in {-3,3,6} {\draw (\x,0) -- (\x,-6);  }
\draw (-3,0) -- (6,0); 
\draw (-3,-6) -- (6,-6);

\node[state] (w2) at (4.5,-1){$w_2$};
\node[txt] (w2to) at (0,-1){};
\path[->] (w2) edge node[fill=white,above]{$\Pr=\delta_{\min}\cdot \epsilon$} (w2to);
\path[->] (w2) edge[loop below] node[below, fill=white]{$\Pr=1-\delta_{\min}\cdot \epsilon$} (w2);
\path[->] (w2) edge (w2to);

\node[state] (w3) at (4.5,-4){$w_3$};
\path[->] (w3) edge[loop above] node[above,fill=white]{$\Pr=1-\epsilon$,$\cost=1$} (w3);
\path[->] (w3) edge[loop below] node[below, fill=white]{$\Pr=\epsilon$} (w3);

\draw[<-] (-3,0.25) -- node[very near end,above] {$Y_{1,\rank(1)}=U_1$}(6,0.25);
\draw[<-] (-3,0.5) -- node[above] {$Y_{1,\rank(1)-1}$}(3,0.5);
\end{tikzpicture}
\end{center}
\caption{Pictorial illustration of the intuitive explanation of the base case of Lemma~\ref{lemm:U_1}.}\label{fig:base_U_1}
\end{figure}
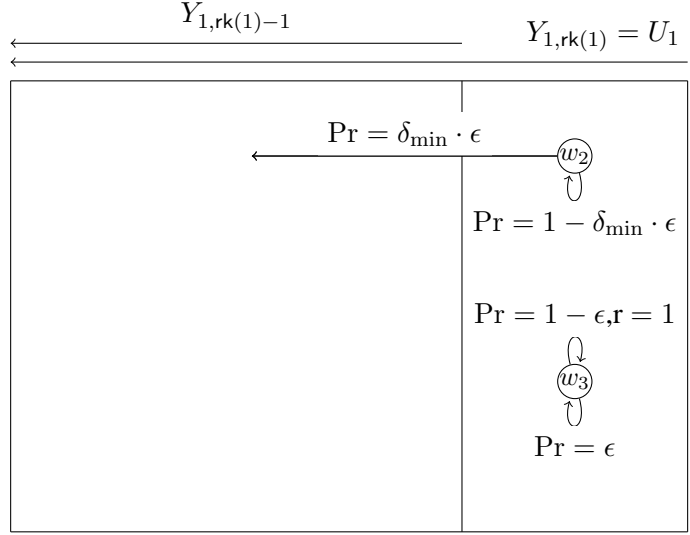

\begin{figure}
\begin{center}

\begin{tikzpicture}[node distance=3cm]
\tikzstyle{every state}=[fill=white,draw=black,text=black,font=\small , inner sep=0pt,minimum size=4mm]
\tikzstyle{txt}=[state,draw=white]
\foreach \x in {-3,3,6,12} {\draw (\x,0) -- (\x,-5);  }
\draw (-3,0) -- (12,0); 
\draw (-3,-5) -- (12,-5);

\node[state] (w2) at (4.5,-1){$w_2$};
\node[txt] (w2good) at (-1,-1){};
\node[txt] (w2bad) at (10,-1){};
\path[->] (w2) edge node[fill=white,above]{$\Pr=\eta$} (w2good);
\path[->] (w2) edge node[above, fill=white]{$\Pr=1-\eta$} (w2bad);
\path[->] (w2) edge (w2good);
\path[->] (w2) edge (w2bad);

\node[state] (w3) at (4.5,-4){$w_3$};
\node[txt] (w3bad) at (10,-4){};
\path[->] (w3) edge[loop above] node[above,fill=white]{$\Pr=1-\eta$} node[below left]{$\cost=1$}(w3);
\path[->] (w3) edge node[below, fill=white]{$\Pr=\eta$} (w3bad);
\path[->] (w3) edge  (w3bad);

\node[state] (back) at (10,-2.5){};
\node[txt] (backTo) at (5,-2.5){};
\path [->, decoration={zigzag,segment length=4,amplitude=.9,
  post=lineto,post length=0.25cm,pre length=0.25 cm}] (back) edge[decorate] node[above] {$\kappa(i)\times (\#\CC_2)$}(backTo);

\draw[<-] (-3,0.5) -- node[very near end,above] {$Y_{1,\rank(1)-i}$}(6,0.5);
\draw[<-] (-3,0.75) -- node[above] {$Y_{1,\rank(1)-(i+1)}$}(3,0.75);
\draw[<-] (-3,0.25) -- node[very near end,above] {$U_1$}(12,0.25);
\end{tikzpicture}
\end{center}
\caption{Pictorial illustration of the intuitive explanation of the inductive case of Lemma~\ref{lemm:U_1}.}\label{fig:induction_U_1}
\end{figure}
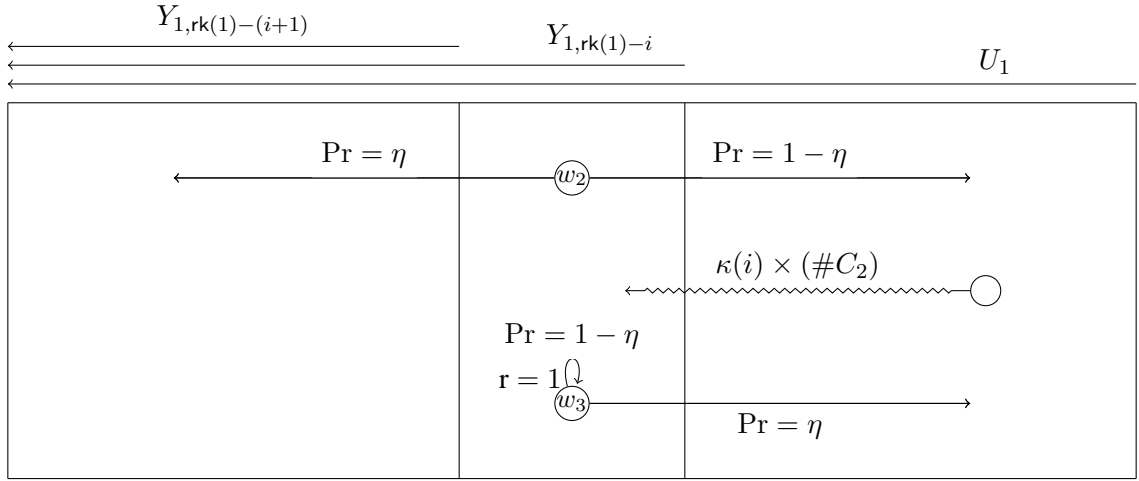

\begin{lemma}\label{lemm:U_1}{\em (Property~1).} 
Let $0<\epsilon<\frac{1}{2}$.
The strategy $\sigma_1^{\epsilon,1}$ ensures that for all $s \in U_1$ and all positional strategies $\sigma_2$ for player~2 
we have $\Pr_s^{\sigma_1^{\epsilon,1},\sigma_2}(\Safe(U_1))=1$ 
and $\lim_{t\rightarrow \infty} \frac{\sum_{i=0}^t \E^{\sigma_1^{\epsilon,1},\sigma_2}_s[\Theta_i]}{t}\geq 1-\epsilon$.
\end{lemma}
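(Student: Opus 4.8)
The statement has two parts, and I would treat them quite differently: the safety claim is almost immediate, while the mean-payoff bound is the real content and needs the layered counting argument sketched before the lemma.

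First I would dispose of $\Pr_s^{\sigma_1^{\epsilon,1},\sigma_2}(\Safe(U_1))=1$. Since $U_0=\emptyset$, no state of $U_1$ can satisfy Equation~\ref{eqt: limit reach}: its left-hand side $\epsilon\cdot\trans(s,\dist_1^\eta,a_2)(U_0)$ equals $0$, which cannot strictly exceed the nonnegative quantity $\trans(s,\dist_1^\eta,a_2)(\ov{W}^*)$. Hence every $s\in U_1$ is witnessed by Equation~\ref{eqt: almost reach} or Equation~\ref{eqt: get 1}, and by construction $\sigma_1^{\epsilon,1}$ plays at $s$ a witness distribution $\dist_1^\eta$ for one of these (the sets $Y_{1,\rank(1)-j}\setminus Y_{1,\rank(1)-(j+1)}$ partition $U_1$ since $Y_{1,0}=\emptyset$). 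Both equations force $\trans(s,\dist_1^\eta,a_2)(U_1)=1$ for every $a_2\in\mov_2(s)$, so from any state of $U_1$ the successor stays in $U_1$ with probability $1$; induction on the length of the play gives $\Safe(U_1)$ almost surely.

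For the mean-payoff bound I would fix an arbitrary positional $\sigma_2$ and study the resulting finite Markov chain on $U_1$. The Cesàro limit $\lim_t\frac1t\sum_{i=0}^t\E_s^{\sigma_1^{\epsilon,1},\sigma_2}[\Theta_i]$ is controlled by recurrent classes: by Markov property~\ref{markov property recurrent class} the play reaches some recurrent class $L\subseteq U_1$ almost surely, and the limit is the expectation of $\mathrm{mp}(L)=\sum_{s'\in L}\pi_L(s')\,\ExpRew(s',\sigma_1^{\epsilon,1}(s'),\sigma_2(s'))$ over the reachable $L$, with $\pi_L$ the stationary distribution. So it suffices to prove $\mathrm{mp}(L)\geq 1-\epsilon$ for each recurrent $L$. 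I would split states into $\CC_2$ (satisfying Equation~\ref{eqt: almost reach}) and $\CC_3$ (satisfying Equation~\ref{eqt: get 1}): a $\CC_3$ state playing its witness $\dist_1^\eta$ has one-step expected reward at least $1-\eta\geq 1-\epsilon$ (every $\eta$ used is below $\epsilon$), while a $\CC_2$ state only guarantees reward $\geq 0$. The key structural fact is that a $\CC_2$ state of rank $(1,k)$ reaches rank $<k$ (into $Y_{1,k-1}$) with positive probability, so the minimal-rank states of any recurrent class lie in $\CC_3$. It then remains to bound the stationary mass of $\CC_2$ in $L$ by $O(\epsilon)$, giving $\mathrm{mp}(L)\geq(1-\epsilon)\pi_L(\CC_3)\geq 1-\epsilon$.

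The heart of the proof is that mass bound, carried out by the top-down induction on layers. I would set $\patc(j)$ to be the patience of $\sigma_1^{\epsilon,1}$ at rank $(1,\rank(1)-j)$ and define $\kappa(j)$ as an upper bound on the expected number of $\CC_2$-visits not yet ``paid for'' by a $\CC_3$-visit before the play descends out of layer $j$, with $\kappa(0)=(\delta_{\min})^{-1}\patc(0)$ and $\kappa(j+1)=(\kappa(j)+1)(\delta_{\min})^{-1}\patc(j)$; these geometric bounds come from the per-visit downward escape probability $\geq\delta_{\min}/\patc(j)$ of a $\CC_2$ state. At layer $j$ the construction plays $\dist_1^\eta$ with $\eta=(\tfrac{\epsilon\delta_{\min}}{4})^{(2m)^{j}}$, and using the patience estimate of Lemma~\ref{lemm:patience1} I would verify $\eta\cdot\kappa(j)\leq\epsilon$, which is precisely what forces the reward-$0$ time (spent in $\CC_2$, plus the $\eta$-leak of $\CC_3$ states) to be at most an $\epsilon$-fraction, so that staying inside $U_1\setminus Y_{1,\rank(1)-(j+1)}$ already yields mean-payoff at least $1-\epsilon$. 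Reading this invariant off at the bottom layer, where the set is all of $U_1$, gives the lemma.

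The main obstacle is exactly this quantitative accounting: making the informal phrase ``$\CC_2$-visits not accounted for by $\CC_3$-visits'' precise, propagating $\kappa(j)$ correctly across layers when the play may re-enter higher ranks, and checking that the doubly-exponentially small $\eta$ dominates the tower-like growth of $\kappa(j)$ so that $\eta\cdot\kappa(j)\leq\epsilon$ holds with the exact constants of the construction (the factor $\tfrac14$ and the powers of $\delta_{\min}$ are chosen for this). I expect the cleanest formulation is an inductive invariant of the form ``if the play stays in $U_1\setminus Y_{1,\rank(1)-(j+1)}$ its mean-payoff is at least $1-\epsilon$, and otherwise the expected number of unaccounted $\CC_2$-visits before reaching $Y_{1,\rank(1)-(j+1)}$ is at most $\kappa(j+1)$,'' matching the base and inductive pictures in Figure~\ref{fig:base_U_1} and Figure~\ref{fig:induction_U_1}; the arithmetic tying $\kappa(j)$, $\patc(j)$ and $\eta$ together is routine but delicate and is where the care is needed.
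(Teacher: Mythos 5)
Your proposal is correct and follows essentially the same route as the paper's proof: the identical safety argument from $U_0=\emptyset$ (so $\CC_1$ is empty and both remaining equations force staying in $U_1$), the reduction via recurrent classes, the observation that minimal-rank states of a recurrent class must lie in $\CC_3$, and the layered induction tying $\kappa(j)$, the patience $\patc(j)$, and the parameter $\eta$ together through $\eta\cdot\kappa(j)\leq\epsilon$. The only difference is one of packaging: the paper proves the uniform-in-$t$ count bound $\widetilde{r}(P_s^{\rank_2(s)-1},t)\leq t\cdot\epsilon+\kappa(j+1)/2$ directly, by splitting plays into rank-preserving and rank-decreasing segments, rather than expressing the same accounting as a stationary-mass bound $\pi_L(\CC_2)=O(\epsilon)$ (where, as you note, the constants need the built-in slack --- e.g.\ the top layer uses $\eta=\epsilon/2$ --- since your literal final chain $(1-\epsilon)\cdot\pi_L(\CC_3)\geq 1-\epsilon$ requires tightening to $\epsilon/2$-style factors).
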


\begin{proof}
Given $\sigma_1^{\epsilon,1}$, let $\sigma_2$ be an arbitrary positional counter-strategy
for player~2.
Let $\CC_i^{\sigma_1^{\epsilon,1},\sigma_2}\cap U_1=\CC_i$,
i.e., given $\sigma_1^{\epsilon,1}$ and $\sigma_2$,
we have that $\CC_1, \CC_2, \CC_3$ are the set of states of $U_1$ that 
satisfy Equation~\ref{eqt: limit reach},  
Equation~\ref{eqt: almost reach},  
Equation~\ref{eqt: get 1},  respectively.
Notice that since $U_0$ is the empty set we have that $\CC_1$ is also empty.
Therefore we cannot leave $U_1$ if player~1 follows $\sigma_1^{\epsilon,1}$ (because both Equation~\ref{eqt: almost reach} and Equation~\ref{eqt: get 1} require that we stay in $U_1$).
This ensures that $\Safe(U_1)$ is satisfied with probability~1.
We now focus on the mean-payoff. 
%Since  $\sigma_1^{\epsilon,1}$ is stationary and the set $U_1$ is never left,
%we only need to consider positional counter-strategies $\sigma_2$ for player~2.
%We fix a positional strategy $\sigma_2$ for player~2.

\smallskip\noindent{\em Basic notations.}
Let us consider the Markov chain obtained given  $\sigma_1^{\epsilon,1}$ and 
$\sigma_2$.
For a state $s \in U_1$, let the rank of $s$ be $\rank(s)=(1,j)$, and then 
we denote $j$ by $\rank_2(s)$ (the second component of the rank).
Given a play $P$ in the Markov chain, and a number $t \in \N$,  
let $\widetilde{r}(P,t)$ be the expected number of times we get reward $0$ in the first 
$t$ steps of $P$. This implies that $\widetilde{r}(P,0)=0$.
%%We will also denote by $|P|$ the expected length of the play $P$.
%(if the play has ended before step $t$, then it is the expected number of reward $0$ in the play).
%let $|P|$ be the (posible infinite) expected length of $P$ and let $\widetilde{r}(P,t)$ be the expected mean payoff of the first $t$ steps of $P$, where $1\leq t\leq |P|$ (or strict if the walk has infinite expected length) .
For each state $s\in U_1$, %consider the Markov chain, which starts in $s$ and where the players follows their respective strategy. 
let $P_s^{j}$ be (a prefix of) a play in the Markov chain, which ends if a 
state in $Y_{1,j}$ is reached after the starting point $s$ (i.e., the play does not 
end at $s$ if $s \in Y_{1,j}$), and if $Y_{1,j}$ is not reached, then the walk does not end. 
We will also use the following notations: for $0 \leq j \leq \rank(1)-1$, 
let us denote by $\kappa(j+1)=\frac{\epsilon}{2}\cdot \left(\frac{\epsilon\cdot\delta_{\min}}{4}\right)^{-(2m)^{j+1}}$;  and let $\patc(j)= \left(\frac{\epsilon\cdot\delta_{\min}}{4}\right)^{-(\frac{(2m)^{j+1}}{2}-1)}$,
the patience of $\sigma_1^{\epsilon,1}$ for states in $U_1$ of rank $(1,\rank(1)-j)$ (by Lemma~\ref{lemm:patience1}).

\smallskip\noindent{\em Using recurrent class property.}
First, observe that since $Y_{1,0}$ is the empty set, 
the set $Y_{1,0}$ can never be reached, and hence $P_s^0$ represents the 
entire play from the start state $s$, for $s \in U_1$.
%Therefore it suffices to show that $\inf_{t\rightarrow \infty}\widetilde{r}(P_s^0,t)\geq 1-\epsilon$, 
%for each state $s\in U_1$. 
By Markov property~\ref{markov property recurrent class}
in the Markov chain, the recurrent classes are reached in a finite number 
of steps with probability~1, and given a recurrent class $L$ is reached,
every state in $L$ is reached with probability~1 
in a finite number of steps.
Given a recurrent class $L$ in $U_1$, and consider a state $s^*$ in $L$ that has the maximum 
rank among states in $L$ (i.e., $\rank_2(s^*)= \max_{s' \in L} \rank_2(s')$). 
Then all states visited after $s^*$ has rank at most the rank of $s^*$.
Hence every play $P_s^0$  with probability~1, after finitely many steps reaches a  state $s^*$ 
such that all states $s'$ visited after $s^*$ satisfy that $\rank_2(s')\geq \rank_2(s^*)$.
Since the mean-payoff is invariant under finite prefixes, we only need to obtain bounds for 
the mean-payoff of $P_{s^*}^{\rank(s^*)-1}$ (and this play has infinite length 
by definition as no state with smaller rank is reached in the Markov chain after $s^*$).

\smallskip\noindent{\em Inductive proof statement.}
We will show, inductively, that for all $0\leq j\leq \rank(1)$, all $t\geq 1$, 
and all states $s\in U_1$, if $\rank_2(s)=\rank(1)-j$, then 
\begin{align*}
\widetilde{r}(P_s^{\rank_2(s)-1},t)\leq t\cdot  \epsilon+ \frac{\kappa(j+1)}{2} =
t\cdot  \epsilon+\frac{\epsilon}{4}\cdot \left(\frac{\epsilon\cdot\delta_{\min}}{4}\right)^{- (2m)^{j+1}}
\end{align*}  
This will imply the desired result, since then the mean-payoff of 
$P_{s^*}^{\rank_2(s^*)-1}$ is at least $1-\epsilon$: 
the play $P_{s^*}^{\rank_2(s^*)-1}$ has infinite length and therefore the expected number of 
reward 1's must be $t-\widetilde{r}(P_{s^*}^{\rank_2(s^*)-1},t)$ in the first $t$ steps for all $t$, 
because all rewards are either~0 or~1, and hence the mean-payoff of 
$P_{s^*}^{\rank_2(s^*)-1}$ is $\inf_{t\rightarrow \infty}\frac{t-\widetilde{r}(P_{s^*}^{\rank_2(s^*)-1},t)}{t} \geq 1-\epsilon$.

\smallskip\noindent{\em Splitting the play.}
Consider a play $P_{s}^{\rank_2(s)-1}$ for $s \in U_1$.
We will split up the play $P_{s}^{\rank_2(s)-1}$ into a (possible infinite) 
sequence of \emph{rank preserving} plays $(P_{s_i}^{\rank_2(s_i)})_{i\geq 0}$, 
such that $s_0=s$, and for $i\geq 0$, the play $P_{s_i}^{\rank_2(s_i)}$ ends in state $s_{i+1}$ 
(which is formally a random variable and must be such that $\rank_2(s_i)=\rank_2(s_{i+1})$ by definition of $P_{s_i}^{\rank_2(s_i)}$ and 
since if a state of lower rank than $\rank_2(s)$ is reached, then the play  $P_{s}^{\rank_2(s)-1}$ 
ends). 
In other words, the next play begins where the previous play ends, and 
all the starting points of the play has the same rank.
Similarly, we will split up plays $P_{s}^{j}$, for $0\leq j< \rank_2(s)$, into a 
finite sequence of \emph{rank decreasing} plays $(P_{s_i}^{\rank_2(s_i)-1})_{i\geq 0}$, 
such that $s_0=s$, and for $i\geq 0$, the play $P_{s_i}^{\rank_2(s_i)-1}$ ends in state $s_{i+1}$ 
(which must be such that $\rank_2(s_i) > \rank_2(s_{i+1})>j$).
Note that since the play sequence is decreasing, the sequence of plays is finite
and the length of the sequence is at most $\rank_2(s)-j$.
Pictorial illustrations of rank preserving (both when the sequence is finite and infinite) 
and rank decreasing plays are given in Figure~\ref{fig:fin rank preserving}, Figure~\ref{fig:inf rank preserving}, 
and Figure~\ref{fig: rank decreasing}, respectively.

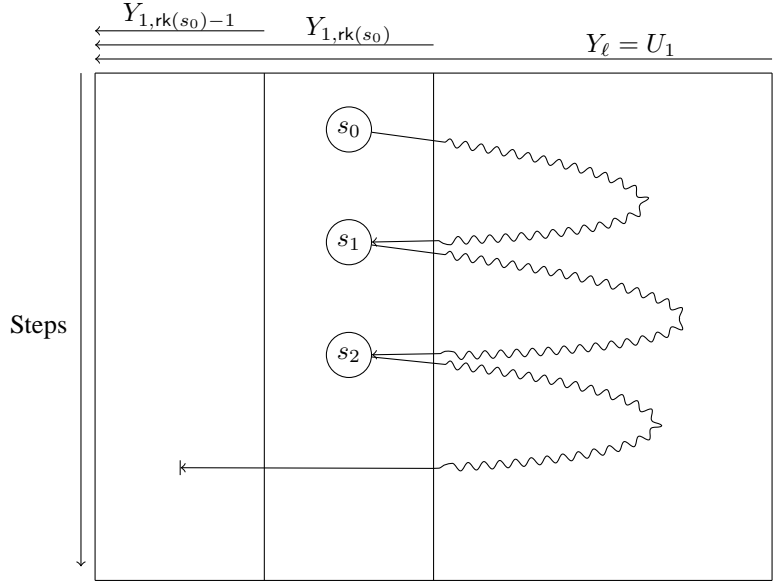
\begin{figure}
\begin{center}

\def\scalefactor{0.75}
\begin{tikzpicture}[scale=\scalefactor,node distance=3cm,decoration={snake,amplitude=2*\scalefactor,segment length=8*\scalefactor, post length=\scalefactor*1.1cm,pre length=\scalefactor*1.3cm}]
\tikzstyle{every state}=[fill=white,draw=black,text=black,font=\small , inner sep=0pt,minimum size=6mm]
\tikzstyle{txt}=[state,draw=white]

\foreach \x / \y / \z in {$Y_{1,\rank(s_0)-1}$/1.5/1,$Y_{1,\rank(s_0)}$/4.5/0.75,$Y_{\ell}=U_1$/9.5/0.5} {\node[txt]  at (\y,\z) {\x};  }
\foreach \y / \z in {3/0.75,6/0.50,12/0.25} {\draw[<-] (0,\z) -- (\y,\z); }

\foreach \x in {0,3,6,12} {\draw (\x,0) -- (\x,-9);  }
\draw (0,0) -- (12,0); 
\draw (0,-9) -- (12,-9);

\draw [->] (-0.25,0) -- (-0.25,-8.75);
\node at 	(-1,-4.5) [txt] (s1) {Steps};

\node at 	(4.5,-1) [state] (s1) {$s_0$}; 
\node at 	(4.5,-3) [state] (s2) {$s_1$}; 
\node at 	(4.5,-5) [state] (s3) {$s_2$};

\draw [decorate,->]  (s1) .. controls +(-7:6.8cm) and +(0.5:6.8cm) .. (s2);
\draw [decorate,->]  (s2) .. controls +(-7:7.7cm) and +(0.5:7.7cm) .. (s3);
\draw [decorate,->|,decoration={snake,amplitude=\scalefactor*2,segment length=\scalefactor*8,post length=\scalefactor*4.5cm,pre length=\scalefactor*1.3cm}]  (s3) .. controls +(-5:8cm) and +(-2:10cm) .. (1.5,-7);

\end{tikzpicture}
\end{center}
\caption{Pictorial illustration of a play $P_{s_0}^{\rank(s_0)-1}$ split into a finite sequence $\left(P_{s_i}^{\rank(s_i)}\right)_{i\geq 0}$ of rank preserving plays. Straight line segments indicate that all states are shown on them, while non-straight segements indicate that there might be states which are not shown. }\label{fig:fin rank preserving}
\end{figure}

\begin{figure}
\begin{center}

\def\scalefactor{0.75}
\begin{tikzpicture}[scale=\scalefactor,node distance=3cm,decoration={amplitude=\scalefactor*2,segment length=\scalefactor*8,snake,post length=\scalefactor*1.1cm,pre length=\scalefactor*1.3cm}]
\tikzstyle{every state}=[fill=white,draw=black,text=black,font=\small , inner sep=0pt,minimum size=6mm]
\tikzstyle{txt}=[state,draw=white]

\foreach \x / \y / \z in {$Y_{1,\rank(s_0)-1}$/1.5/1,$Y_{1,\rank(s_0)}$/4.5/0.75,$Y_{\ell}=U_1$/9.5/0.5} {\node[txt]  at (\y,\z) {\x};  }
\foreach \y / \z in {3/0.75,6/0.50,12/0.25} {\draw[<-] (0,\z) -- (\y,\z); }

\foreach \x in {0,3,6,12} {\draw (\x,0) -- (\x,-8);  }
\draw (0,0) -- (12,0); 
\draw (0,-8) -- (12,-8);

\draw [->] (-0.25,0) -- (-0.25,-7.75);
\node at 	(-1,-4) [txt] (s1) {Steps}; 

\node at 	(4.5,-1) [state] (s1) {$s_0$}; 
\node at 	(4.5,-3) [state] (s2) {$s_1$}; 
\node at 	(4.5,-5) [state] (s3) {$s_2$}; 
\node at 	(4.5,-7) [txt] (s4) {$\vdots$};

\draw [decorate,->]  (s1) .. controls +(-7:7.5cm) and +(0:7.5cm) .. (s2);
\draw [decorate,->]  (s2) .. controls +(-7:7cm) and +(0:7cm) .. (s3);
\draw [decorate,->]  (s3) .. controls +(-7:8cm) and +(0:8cm) .. (s4.east);

\end{tikzpicture}
\end{center}
\caption{Pictorial illustration of a play $P_{s_0}^{\rank(s)-1}$ split into an infinite sequence $\left(P_{s_i}^{\rank(s_i)}\right)_{i\geq 0}$ of rank preserving plays. Note that the last play could be infinite (which is not pictorially illustrated). Straight line segments indicate that all states are shown on them, while non-straight segements indicate that there might be states which are not shown. }\label{fig:inf rank preserving}
\end{figure}

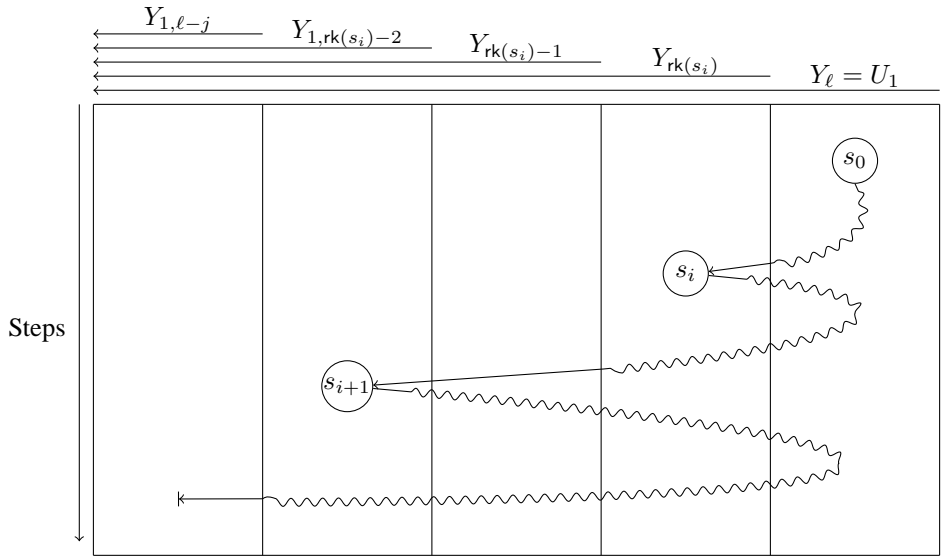
\begin{figure}
\begin{center}
\def\scalefactor{0.75}
\begin{tikzpicture}[scale=\scalefactor,node distance=3cm,decoration={amplitude=\scalefactor*2,segment length=\scalefactor*8,snake,post length=\scalefactor*1.3cm,pre length=\scalefactor*1.3cm}]
\tikzstyle{every state}=[fill=white,draw=black,text=black,font=\small , inner sep=0pt,minimum size=6mm]
\tikzstyle{txt}=[state,draw=white]
\foreach \x / \y / \z in {$Y_{1,\ell-j}$/1.5/1.5,$Y_{1,\rank(s_i)-2}$/4.5/1.25,$Y_{\rank(s_i)-1}$/7.5/1,$Y_{\rank(s_i)}$/10.5/0.75,$Y_{\ell}=U_1$/13.5/0.5} {\node[txt]  at (\y,\z) {\x};  }
\foreach \y / \z in {3/1.25,6/1,9/0.75,12/0.5,15/0.25} {\draw[<-] (0,\z) -- (\y,\z); }

\foreach \x in {0,3,6,9,12,15} {\draw (\x,0) -- (\x,-8);  }
\draw (0,0) -- (15,0); 
\draw (0,-8) -- (15,-8);
\draw [->] (-0.25,0) -- (-0.25,-7.75);
\node at 	(-1,-4) [txt] (s1) {Steps};

\node at 	(13.5,-1) [state] (s1) {$s_0$}; 
\node at 	(10.5,-3) [state] (s2) {$s_{i}$}; 
\node at 	(4.5,-5) [state] (s3) {$s_{i+1}$};

\draw [decorate,->,decoration={amplitude=\scalefactor*2,segment length=\scalefactor*8,snake,post length=\scalefactor*1.1cm,pre length=.1cm}]  (s1) .. controls +(-90:.5cm) and +(5:4cm) .. (s2);
\draw [decorate,->,decoration={amplitude=\scalefactor*2,segment length=\scalefactor*8,snake,post length=\scalefactor*4.1cm,pre length=.5cm}]  (s2) .. controls +(-5:6cm) and +(2.5:8cm) .. (s3);
\draw [decorate,->|,decoration={amplitude=\scalefactor*2,segment length=\scalefactor*8,snake,post length=\scalefactor*1.5cm,pre length=0.5cm}]  (s3) .. controls +(-5:12cm) and +(0:15cm) .. (1.5,-7);

\end{tikzpicture}
\end{center}
\caption{Pictorial illustration of a play $P_{s_0}^{\ell-j}$ split into a (always finite) sequence $\left(P_{s_i}^{\rank(s_i)-1}\right)_{i\geq 0}$ of rank decreasing plays. Note that the last play could be infinite (which is not pictorially illustrated). Straight line segments indicate that all states are shown on them, while non-straight segements indicate that there might be states which are not shown. }\label{fig: rank decreasing}
\end{figure}

\smallskip\noindent{\bf (Base case).} 
We first consider the base case, where $j=0$, i.e., we consider $s$ 
such that $\rank_2(s)=\rank(1)$. 
Consider the rank preserving split up of the play $P_{s}^{\rank_2(s)-1}$ into 
the sequence of plays $(P_{s_i}^{\rank_2(s_i)})_{i\geq 0}$,
mentioned above.
As already mentioned, safety in $U_1=Y_{1,\rank(1)}$ is guaranteed, and 
hence each play $P_{s_i}^{\rank_2(s_i)}$ has length~1.
We will consider $\widetilde{r}(P_{s'}^{\rank_2(s')},t)$, for all $s'$ such that $\rank(s')=\rank(s)$. 
We will now split the proof into the following two cases: 
(1)~$s'\in \CC_2$; and (2)~$s'\in \CC_3$; (as already argued at the start of the proof of this lemma, 
the set $\CC_1$ is empty).

\begin{enumerate}
\item 
In each state $s'$ in $(\CC_2\cap (Y_{1,\rank(1)}\setminus Y_{1,\rank(1)-1}))$ 
we reach a state $s''$ of rank $\rank_2(s'')=\rank_2(s)-1$ in the next step
with probability at least $\left(\frac{\epsilon\cdot \delta_{\min}}{4}\right)^{m-1} \cdot\delta_{\min}=\frac{4}{\epsilon}\cdot \left(\frac{\epsilon\cdot \delta_{\min}}{4}\right)^{m}$ (since $\left(\frac{\epsilon\cdot \delta_{\min}}{4}\right)^{-(m-1)}$ is an upper bound on the patience of states of rank $(1,\rank(1))$ in $\sigma_1^{\epsilon,1}$ by Lemma~\ref{lemm:patience1}), 
otherwise we reach a state of rank $\rank(s)$. 
Hence the expected number of visits to states in $\CC_2$ is at most 
$\frac{\epsilon}{4}\cdot \left(\frac{\epsilon\cdot \delta_{\min}}{4}\right)^{-m}$  before we reach $Y_{1,\rank(1)-1}$. 
In the worst case we get a reward of $0$ in each such step.

\item In each step we are in state $s'$ in $(\CC_3\cap (Y_{1,\rank(1)}\setminus Y_{1,\rank(1)-1}))$ 
we get reward~1 with probability at least $1-\epsilon$ (by Equation~\ref{eqt: get 1}). 
\end{enumerate}
%We will now consider $\widetilde{r}(P_s^{\rank_2(s)-1},t)$. %As argued (in the case where $s'\in \CC_2$), 
For the play $P_s^{\rank_2(s)-1}=(P_{s_i}^{\rank_2(s_i)})_{i\geq 0}$,
the expected number of indices $i$ such that $s_i \in \CC_2$ is at most  
$\frac{\epsilon}{4}\cdot \left(\frac{\epsilon\cdot \delta_{\min}}{4}\right)^{-m}$ (by the first item above).
The remaining (in the worst case, at least $t-\frac{\epsilon}{4}\cdot \left(\frac{\epsilon\cdot \delta_{\min}}{4}\right)^{-m}$ in expectation) 
indices $i'$ are such that $s_{i'}\in \CC_3$, for which the expected reward 
is at least $1-\epsilon$ (by the second item above). 
Thus we have
\[
\widetilde{r}(P_s^{\rank(s)-1},t)\leq 
t\cdot \epsilon+\frac{\epsilon}{4}\cdot \left(\frac{\epsilon\cdot \delta_{\min}}{4}\right)^{-m}\leq t\cdot \epsilon+\frac{\epsilon}{4}\cdot \left(\frac{\epsilon\cdot \delta_{\min}}{4}\right)^{-2m}
= t\cdot \epsilon+ \frac{\kappa(1)}{2}\enspace ,
\] 
as desired.

\smallskip\noindent{\bf (Inductive case).}
We now consider the inductive case for $j\geq 1$, i.e., we now consider $s$ 
such that $\rank_2(s)=\rank(1)-j$. 
Consider the rank preserving split of the play $P_{s}^{\rank_2(s)-1}$ as $(P_{s_i}^{\rank_2(s_i)})_{i\geq 0}$
as explained before the base case.
We will consider $\widetilde{r}(P_{s'}^{\rank_2(s')},t)$, for all $s'$ with $\rank(s')=\rank(s)$. 
As in the base case, we will split the proof into the two cases: (1)~$s'\in \CC_2$; and (2)~$s'\in \CC_3$; 
(and recall $\CC_1$ is empty).
Before we consider the case analysis, we first present the use of the inductive hypothesis.

\smallskip\noindent{\em Use of inductive hypothesis.}
The inductive hypothesis will be used in the same way for both cases in the case analysis. 
Let $t \in \N$ be given. 
For all states $s''\in U_1$ such that $\rank_2(s'')>\rank_2(s)=\rank(1)-j$, 
we will use the inductive hypothesis to upper bound $\widetilde{r}(P_{s''}^{\rank(1)-j},t)$. 
%%(note the rank of the end state does not depend on the rank of $s''$).  
Consider the rank decreasing split of $P_{s''}^{\rank(1)-j}$ as 
$(P_{s_i'}^{\rank_2(s_i')-1})_{i\geq 0}$.
There are most $j$ such plays in the sequence, one for each rank strictly higher than $\rank(1)-j$. 
We only argue about the worst case, and in the worst case, 
$s_i'$ is such that $\rank_2(s_i')=\rank(1)-i$. 
%Let $t_0=\min(|P_{s_0'}^{\rank_2(s_0')-1}|,t)$ and for each $0< i<j$, let $t_i=\min(|P_{s_i'}^{\rank_2(s_i')-1}|,t-\sum_{i=0}^{j-1} t_i)$ (basically 
Let $t_i$ be the random variable indicating the number of steps among the first $t$ steps such that $P_{s''}^{\rank(1)-j}$ is exactly $P_{s_i'}^{\rank_2(s_i')-1}$. 
We see that $\widetilde{r}(P_{s''}^{\rank(1)-j},t)=\sum_{i=0}^{j-1} \widetilde{r}(P_{s_i'}^{\rank_2(s_i')-1},t_i)$. 
By the inductive hypothesis we have that $\widetilde{r}(P_{s_i'}^{\rank(s_i')-1},t')\leq t'\cdot\epsilon+ \frac{\kappa(i+1)}{2}$ for each $t'\geq 1$. 
Thus, we get that 
\[
\widetilde{r}(P_{s''}^{\rank(1)-j},t)=\sum_{i=0}^{j-1} \widetilde{r}(P_{s_i'}^{\rank(s_i')-1},t_i) 
\leq \sum_{i=0}^{j-1} \left( t_i\cdot \epsilon+ \frac{\kappa(i+1)}{2}\right) 
\leq t\cdot \epsilon + \kappa(j)
\]
The first inequality is the inductive hypothesis, 
and we now argue that $\sum_{i=0}^{j-1} \frac{\kappa(i+1)}{2} \leq \kappa(j)$.
We have 
\[
\sum_{i=0}^{j-1} \frac{\kappa(i+1)}{2}=\frac{\epsilon}{4}\cdot \sum_{i=0}^{j-1}\left(\frac{\epsilon\cdot \delta_{\min}}{4}\right)^{-(2m)^{i+1}}
\leq \frac{\epsilon}{2}\cdot \left(\frac{\epsilon\cdot\delta_{\min}}{4}\right)^{-(2m)^{j}} =  \kappa(j) \enspace ,
\]
because each term of the sum is over $4$ times as large as the preceding (because $(2m)^{i+1}\geq 1+ (2m)^{i}$, 
for $m\geq 2$ and $i\geq 0$ and the factor of 4) and thus, the last term is over $2$ times larger than the sum of all the other terms 
(we just use that it is larger).
We now consider the case analysis.

\begin{itemize}
\item \emph{(States in $\CC_2$).} In this case we consider $\widetilde{r}(P_{s'}^{\rank_2(s')},t)$, for $s'\in \CC_2$, such that $\rank(s')=\rank(s)$.
We know that $\sigma_1^{\epsilon,1}$, has patience  $\patc(j)$ for states $s''\in U_1$ such that $\rank_2(s'')=\rank_2(s)=\rank(1)-j$ (from Lemma~\ref{lemm:patience1}). 
In expectation the play $P_s^{\rank_2(s)-1}$ is therefore in a state $s''$ in $\CC_2$ such that $\rank(s'')=\rank(s)$ 
at most $\patc(j) \cdot (\delta_{\min})^{-1}$ times before reaching a state with lower rank (i.e., before the play ends). 
If the play does not end, whenever we have been in $\CC_2$, we reach some state $s''$ in $U_1$ (as safety to $U_1$ is guaranteed).  
Also, in the worst case we get a reward of~0 in the every step we are in a state of rank $\rank_2(s)$ in $\CC_2$.
There are two sub-cases. Either $\rank_2(s'')=\rank_2(s)$ or $\rank_2(s'')>\rank_2(s)$ (because if the rank is lower the walk ends).
In the first sub-case the play $P_{s'}^{\rank_2(s')}$ has length~1. 
In the other case, we have already given an upper bound on $\widetilde{r}(P_{s''}^{\rank(1)-j},t')$, for all $t'\geq 1$, using the inductive
hypothesis. We therefore have that 
\[
\widetilde{r}(P_{s'}^{\rank(s')},t) 
\leq 1+\widetilde{r}(P_{s''}^{\rank(1)-j},t-1)
\leq 1+(t-1)\cdot \epsilon+\kappa(j) 
= t\cdot \epsilon + (1-\epsilon) + \kappa(j) %%\\[2ex]
 \leq t\cdot \epsilon+2\cdot \kappa(j)
\]
where we have just explained the first inequality. 
The second inequality is our use of the inductive hypothesis as previously explained. 
The last inequality uses that $\kappa(j)=\frac{\epsilon}{2}\cdot \left(\frac{\epsilon\cdot\delta_{\min}}{4}\right)^{-(2m)^{j}}>8>1$ 
(since $4^{(2m)^j}\geq 16$ and hence  $\left(\frac{\epsilon\cdot\delta_{\min}}{4}\right)^{-(2m)^{j}} \geq \frac{16}{\epsilon}$ for $i,m\geq 1$) 
and $1-\epsilon<1$.

\item \emph{(States in $\CC_3$).} In this case we consider $\widetilde{r}(P_{s'}^{\rank_2(s')},t)$, for $s'\in \CC_3$, such that $\rank(s')=\rank(s)$. 
By construction, the strategy $\sigma_1^{\epsilon,1}$ plays the distribution $\dist^{\eta}_1$ over $\mov_1(s')$, 
for $\eta=\left(\frac{\epsilon\cdot\delta_{\min}}{4}\right)^{(2m)^{j}}=\frac{\epsilon}{2}\cdot \frac{1}{\kappa(j)}$.
For the play $P_{s'}^{\rank_2(s')}$, the next state $s_1$ after the start state $s'$ is in $U_1$ with probability~1; 
the reward is~1 with probability at least $1-\eta$, 
and as well $s'\in Y_{1,\rank(1)-i}$ with probability at least $1-\eta$
(since Equation~\ref{eqt: get 1} is ensured). 
With the remaining probability of at most $\eta$, the play $P_{s'}^{\rank_2(s')}$ goes to 
a state $s''$ in $U_1$.
As before the worst case (for the proof) is that with the remaining 
probability of at most $\eta$ the state $s''$ is such that $\rank_2(s'')>\rank_2(s)$, 
for which we have a upper bound by inductive hypothesis on $\widetilde{r}(P_{s''}^{\rank(1)-i},t')$, 
for all $t'\geq 1$. 
Thus we have that  
\begin{align*}
\widetilde{r}(P_{s'}^{\rank_2(s')},t)&\leq \eta+\eta\cdot \widetilde{r}(P_{s''}^{\rank(1)-i},t-1) 
%%\\[2ex] &
\leq \eta+\eta\cdot \left((t-1)\cdot \epsilon+\kappa(j)\right) \\[2ex]
&= \eta+(t-1)\cdot \eta \cdot \epsilon+\frac{\epsilon}{2} 
\leq \eta + (t-1)\cdot \epsilon + \frac{\epsilon}{2}   
%%\\[2ex] &
\leq t\cdot \epsilon \enspace.
\end{align*}
The first inequality is by the preceding explanation. 
The second inequality uses the inductive hypothesis as previously described.  
In the first equality, we use that by definition we have $\eta \cdot \kappa(j) = \frac{\epsilon}{2}$. 
In the third inequality  we use that $\eta \cdot \epsilon \leq \epsilon$ since $\eta\leq 1$ and $t\geq 1$;
and the final inequality uses that  since $\eta\leq \frac{\epsilon}{4}$ we have $\eta+\frac{\epsilon}{2}<\epsilon$ and $\eta\cdot \epsilon<\epsilon$, for $\epsilon<1$; 
for $i,m\geq 1$ which ensures $\eta\leq \frac{\epsilon}{4}$. 
\end{itemize}

We now combine the above case analysis to establish the inductive proof.
We will now consider $\widetilde{r}(P_{s}^{\rank_2(s)-1},t)$ and  our rank preserving split $(P_{s_i}^{\rank_2(s_i)})_{i\geq 0}$ of $P_s^{\rank_2(s)-1}$. For all $i\geq 0$, let $t_i$ be the random variable indicating the number of steps $P_s^{\rank_2(s)-1}$ is exactly $P_{s_i}^{\rank_2(s_i)}$ among the first $t$ steps of $P_{s_i}^{\rank_2(s_i)}$. We see that $\widetilde{r}(P_{s}^{\rank_2(s)-1},t)=\sum_{i=0}^{k} \widetilde{r}(P_{s_i}^{\rank_2(s_i)},t_i)$ (the random variable $k$ indicates the highest index such that $t_k\geq 1$, implying that $t_i\geq 1$ for $0\leq i\leq k$). 
Hence, we have that
\begin{align*}
\widetilde{r}(P_{s}^{\rank_2(s)-1},t)&=\sum_{i=0}^{k} \widetilde{r}(P_{s_i}^{\rank_2(s_i)},t_i) \\
&=\sum_{s_i\in \CC_2,\ i\leq k} \widetilde{r}(P_{s_i}^{\rank_2(s_i)},t_i)+\sum_{s_i\in \CC_3,\ i\leq k} \widetilde{r}(P_{s_i}^{\rank_2(s_i)},t_i) \\
& \leq \sum_{s_i\in \CC_2,\ i\leq k} (t_i\cdot \epsilon+2\cdot \kappa(j))+\sum_{s_i\in \CC_3,\ i\leq k} (t_i\cdot \epsilon) \\
& =\sum_{i=0}^{k} (t_i\cdot \epsilon) +\sum_{s_i\in \CC_2,\ i\leq k} (2\cdot \kappa(j)) \\
&\leq t\cdot \epsilon+ \patc(j)\cdot (\delta_{\min})^{-1} \cdot 2\cdot \kappa(j) \\
&=t\cdot \epsilon+ (\delta_{\min})^{-1}\cdot \epsilon\cdot \left(\frac{\epsilon\cdot\delta_{\min}}{4}\right)^{-(\frac{(2m)^{j+1}}{2}-1+(2m)^{j})} \\
&\leq t\cdot \epsilon+ (\delta_{\min})^{-1}\cdot \epsilon\cdot \left(\frac{\epsilon\cdot\delta_{\min}}{4}\right)^{-((2m)^{j+1}-1)} \\
&\leq t\cdot \epsilon+ \frac{\epsilon}{4}\cdot \left(\frac{\epsilon\cdot\delta_{\min}}{4}\right)^{-(2m)^{j+1}}  \\
&= t\cdot \epsilon+\frac{\kappa(j+1)}{2} \enspace .
\end{align*}
The first equality  follows from our preceding explanation.
The first inequality uses our bound on $\widetilde{r}(P_{s_i}^{\rank_2(s_i)},t_i)$ from the respective items above, depending on 
whether $s_i\in \CC_2$ or $s_i \in \CC_3$.
The second inequality uses that there are at most $\patc(j)\cdot (\delta_{\min})^{-1}$ indices $i$ such that $s_i\in \CC_2$, from the first item above, 
and that $t=\sum_{i=0}^{k} t_i$.
The third inequality uses that $(2m)^{j}\leq \frac{(2m)^{j+1}}{2}$ for $m\geq 2$ and $j\geq 1$.
The last follows from $\frac{\epsilon\cdot\delta_{\min}}{4}<\frac{\delta_{\min}}{4}$ and gives the expression we required to establish our inductive claim 
for $j$.

This completes the inductive proof and gives us the desired result.
\end{proof}

\smallskip\noindent{\bf The combinatorial property established in Lemma~\ref{lemm:U_1}.}
The proof of Lemma~\ref{lemm:U_1} shows that the strategy $\sigma_1^{\epsilon,1}$ against
all positional counter-strategies of the opponent ensures that in the resulting Markov
chain all recurrent classes that intersect with $U_1$ are contained in $U_1$, 
all states in $U_1$ have successors only in $U_1$;
(i.e., the recurrent classes in $U_1$ are reached with probability~1 from all 
states in $U_1$); and in every recurrent class in $U_1$ the mean-payoff value is
at least $1-\epsilon$.

\begin{lemma}\label{lemm:U_ell}{\em (Property~2).} Let $0<\epsilon<\frac{1}{2}$. 
The strategy $\sigma_1^{\epsilon,\ell}$ ensures that against all positional strategies $\sigma_2$ 
for all states  $s\in(U_{\ell}\setminus U_{\ell-1})$ we have that 
\begin{enumerate}
\item given the event $\Safe(U_{\ell}\setminus U_{\ell-1})$, the mean-payoff is at least $1-\epsilon$;
\item $\Pr_s^{\sigma_1^{\epsilon,\ell},\sigma_2}(\Safe(U_{\ell}\setminus U_{\ell-1})\cup \Reach(U_{\ell-1} \cup \ov{W}^*))=1$; and
\item $\Pr_s^{\sigma_1^{\epsilon,\ell},\sigma_2}(\Safe(U_{\ell}\setminus U_{\ell-1})\cup \Reach(U_{\ell-1}))\geq 1-\epsilon$.
\end{enumerate}
\end{lemma}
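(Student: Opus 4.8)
The plan is to treat the three claims separately. Throughout I fix $0<\epsilon<\frac12$, a state $s\in(U_\ell\setminus U_{\ell-1})$, and an arbitrary positional counter-strategy $\sigma_2$, and work in the Markov chain induced by $\sigma_1^{\epsilon,\ell}$ and $\sigma_2$. Write $A=U_\ell\setminus U_{\ell-1}$, and let $\CC_1,\CC_2,\CC_3$ be the states of $A$ for which $\sigma_1^{\epsilon,\ell}(\cdot)$ and $\sigma_2(\cdot)$ satisfy Equation~\ref{eqt: limit reach}, Equation~\ref{eqt: almost reach}, and Equation~\ref{eqt: get 1}, respectively. The first fact I would record is that every $s'\in\CC_1$ leaves $A$ in one step with positive probability: since $\sigma_1^{\epsilon,\ell}$ plays some $\dist^{\eta}_1$ with $\eta\leq\epsilon/2$ at $s'$, Equation~\ref{eqt: limit reach} forces $\trans(s')(U_{\ell-1})>0$ (otherwise its left-hand side is $0$ and cannot strictly exceed the nonnegative right-hand side). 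With this in hand, Claim~2 is essentially tautological: for any set $A$ every path either stays in $A$ forever or visits its complement, and here $S\setminus A=U_{\ell-1}\cup\ov{W}^*$, so $\Pr_s^{\sigma_1^{\epsilon,\ell},\sigma_2}(\Safe(A)\cup\Reach(U_{\ell-1}\cup\ov{W}^*))=1$.

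For Claim~1 I would argue through recurrent classes. By Markov property~\ref{markov property recurrent class} the play reaches a recurrent class $L$ almost surely, and (up to a null set) the event $\Safe(A)$ coincides with the event $L\subseteq A$, because leaving $A$ means reaching $U_{\ell-1}\cup\ov{W}^*\subseteq S\setminus A$ while a recurrent class is closed under transitions. The crucial point is that such an $L\subseteq A$ contains no $\CC_1$-state: a closed $L\subseteq A$ cannot contain a state that leaves $A$ with positive probability, so $L$ consists only of $\CC_2$- and $\CC_3$-states, and within $L$ every transition stays in $L$, so every $\CC_2$-state of $L$ descends to strictly lower rank inside $L$ and every $\CC_3$-state collects reward exactly as in the $U_1$ setting. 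Since, restricted to $A$, the strategy $\sigma_1^{\epsilon,\ell}$ plays the very same parametrized distributions $\dist^{\eta}_1$, indexed by the second rank component and with the same $\eta$'s and patience bounds (Lemma~\ref{lemm:patience1}), as $\sigma_1^{\epsilon,1}$ does on $U_1$, the rank induction carried out in the proof of Lemma~\ref{lemm:U_1} (which only ever reasons about $\CC_2$- and $\CC_3$-states) applies verbatim to $L$ and shows the mean-payoff of $L$ is at least $1-\epsilon$. As every recurrent class compatible with $\Safe(A)$ has this property, Claim~1 follows.

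For Claim~3 I would invoke Markov property~\ref{markov property rs3} with $Z=U_{\ell-1}$ and $Z'=\ov{W}^*$, so that $S\setminus(Z\cup Z')=A$. Its hypothesis $\epsilon\cdot\trans(s')(U_{\ell-1})\geq\trans(s')(\ov{W}^*)$ holds for every $s'\in A$: for $s'\in\CC_2\cup\CC_3$ we have $\trans(s')(\ov{W}^*)=0$ (these states stay in $U_\ell\subseteq W^*$ with probability~$1$), and for $s'\in\CC_1$, playing $\dist^{\eta}_1$ with $\eta\leq\epsilon/2\leq\epsilon$ makes Equation~\ref{eqt: limit reach} with parameter $\eta$ give $\epsilon\cdot\trans(s')(U_{\ell-1})\geq\eta\cdot\trans(s')(U_{\ell-1})>\trans(s')(\ov{W}^*)$. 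The property then yields $\Pr_s(\Reach(U_{\ell-1}\cup\RS(U_{\ell-1}\cup\ov{W}^*)))\geq 1-\epsilon$. To translate this into the stated event I would make $U_{\ell-1}\cup\ov{W}^*$ absorbing; this alters no transition out of $A$ (so the hypothesis above is untouched) and can only decrease $\Pr_s(\Safe(A)\cup\Reach(U_{\ell-1}))$, hence a lower bound for the absorbing chain suffices. In the absorbing chain, reaching $\RS(U_{\ell-1}\cup\ov{W}^*)\subseteq A$ forces the play to stay in $A$ forever, while reaching $U_{\ell-1}$ gives $\Reach(U_{\ell-1})$, so $\Reach(U_{\ell-1}\cup\RS(U_{\ell-1}\cup\ov{W}^*))\subseteq\Safe(A)\cup\Reach(U_{\ell-1})$, establishing Claim~3.

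I expect the main obstacle to be Claim~1: making rigorous that the recurrent-class induction of Lemma~\ref{lemm:U_1} transfers unchanged once $\CC_1$-states are excluded from the classes $L\subseteq A$. One must confirm that $\CC_1$-states lying elsewhere in $A$ are merely transient and fall outside every $L\subseteq A$, that the $\CC_2$-descent step genuinely stays inside $A$ (and hence inside $L$) rather than escaping into $U_{\ell-1}$, and that the restriction of $\sigma_1^{\epsilon,\ell}$ to $A$ literally coincides with the $U_1$-construction so that the patience bounds of Lemma~\ref{lemm:patience1} and the $\kappa$-bookkeeping carry over without modification. Claim~3, by contrast, is a clean one-shot application of Markov property~\ref{markov property rs3}, with the only care needed in the absorbing/event-inclusion step, and Claim~2 is immediate.
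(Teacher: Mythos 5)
Your proof is correct and follows essentially the same route as the paper's: the paper likewise reduces item~1 to the argument of Lemma~\ref{lemm:U_1} applied to closed subsets of $U_{\ell}\setminus U_{\ell-1}$ that avoid $\CC_1$ (phrased there via the set $\RS$ of states from which $\CC_1\cap(U_{\ell}\setminus U_{\ell-1})$ --- equivalently $U_{\ell-1}\cup\ov{W}^*$ --- is unreachable, where you argue directly with recurrent classes, which amounts to the same thing since every recurrent class in $U_\ell\setminus U_{\ell-1}$ is contained in $\RS$), and it obtains item~3 by exactly your application of Markov property~\ref{markov property rs3} with $Z=U_{\ell-1}$ and $Z'=\ov{W}^*$, using the same one-step hypothesis check ($\trans(s')(\ov{W}^*)=0$ on $\CC_2\cup\CC_3$, Equation~\ref{eqt: limit reach} with $\eta\leq\epsilon/2$ on $\CC_1$). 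Your two small deviations are, if anything, tidier than the original: observing that item~2 is a set-theoretic tautology (since $S\setminus(U_\ell\setminus U_{\ell-1})=U_{\ell-1}\cup\ov{W}^*$), where the paper routes it through Markov property~\ref{markov property RS} and $\RS$, and making explicit, via the absorbing-chain coupling, the inclusion $\Reach(U_{\ell-1}\cup\RS)\subseteq\Safe(U_{\ell}\setminus U_{\ell-1})\cup\Reach(U_{\ell-1})$ that the paper leaves implicit.
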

\begin{proof}
Given $\sigma_1^{\epsilon,\ell}$, let $\sigma_2$ be an arbitrary positional
counter-strategy for player~2.
We see that $\sigma_1^{\epsilon,\ell}$ is stationary and follows the distribution  $\dist^{\eta}$ over $\mov_1(s)$ for some $0<\eta<\epsilon$ in state $s\in (W^*\setminus U_{\ell-1})$. %%Hence $\sigma_2$ can be assumed to be positional. 
Let $\CC_i^{\sigma_1^{\epsilon,\ell},\sigma_2}=\CC_i$, i.e., 
given $\sigma_1^{\epsilon,\ell}$ and $\sigma_2$,
we have that $\CC_1, \CC_2, \CC_3$ are the set of states of $(U_\ell\setminus U_{\ell-1})$ 
that satisfy Equation~\ref{eqt: limit reach},  
Equation~\ref{eqt: almost reach},  
Equation~\ref{eqt: get 1},  respectively.
Let $\RS$ be the set of states in  $(U_\ell\setminus U_{\ell-1})$, 
from which $(\CC_1\cap (U_{\ell}\setminus U_{\ell-1}))$ is not reachable in 
the Markov chain (i.e., in the graph of the Markov chain given $\sigma_1^{\epsilon,\ell}$ and 
$\sigma_2$, the set $\RS$ is the set of states in $(U_\ell\setminus U_{\ell-1})$ 
from which no state in $(\CC_1\cap (U_{\ell}\setminus U_{\ell-1}))$ is reachable).
Equivalently, $\RS$ is the set from which $(U_{\ell-1}\cup \ov{W}^*)$ cannot be reached (the definitions are equivalent, because, from each state $s$ in $(U_{\ell}\setminus U_{\ell-1})=(S\setminus (U_{\ell-1}\cup \ov{W}^*))$, the set $(U_{\ell-1}\cup \ov{W}^*)$ can be reached in one-step iff $s\in \CC_1$).
%Let $\RS=\RS(\CC_1\cap (U_{\ell}\setminus U_{\ell-1}))$.
Consider now the segment of the play from state $s$ in $(U_\ell\setminus U_{\ell-1})$ till the play leaves $(U_\ell\setminus U_{\ell-1})$. 
\begin{enumerate}
\item 
First we consider the case when $s \in \RS$.
This corresponds to the proof of correctness for states in $U_1$ (note that in the correctness proof of $U_1$ the set $\CC_1$
was empty; and if $\CC_1$ is not reached, then the proof is identical to Lemma~\ref{lemm:U_1}, by construction of the strategy).
Hence we have that $\Safe(U_{\ell}\setminus U_{\ell-1})$ is ensured with probability~1 
(because $(U_{\ell}\setminus U_{\ell-1})$ can only be left from states in 
$\CC_1\cap (U_{\ell}\setminus U_{\ell-1})$) and 
$\lim_{t\rightarrow \infty} \frac{\sum_{i=0}^t \E^{\sigma_1^{\epsilon,\ell},\sigma_2}_s[\Theta_i]}{t}\geq 1-\epsilon$ 
(as in Lemma~\ref{lemm:U_1}).
This establishes all the required conditions of the lemma.

\item 
By Markov property~\ref{markov property RS}, we have that 
$\Reach(U_{\ell-1} \cup \overline{W}^* \cup \RS)$ happens with probability~1 (since $\RS$ is the set from which $(U_{\ell-1} \cup \overline{W}^*)$ cannot be reached).
Note that since $(S \setminus (U_{\ell-1} \cup \ov{W}^*)) = (U_{\ell}\setminus U_{\ell-1})$,
it follows that $\Reach(U_{\ell-1} \cup \overline{W}^* \cup \RS)$ with probability~1 
implies $\Reach(U_{\ell-1} \cup \overline{W}^*)  \cup \Safe(U_{\ell}\setminus U_{\ell-1})$ 
is also ensured with probability~1, since $(U_{\ell}\setminus U_{\ell-1})$ cannot be left once $\RS$ is reached.
This also shows that every recurrent class contained in $(U_{\ell}\setminus U_{\ell-1})$ must 
be contained in $\RS$ (and by the first item has mean-payoff value at least $1-\epsilon$).
This shows that given the event $\Safe(U_{\ell}\setminus U_{\ell-1})$, the mean-payoff is at least $1-\epsilon$.
From every state in $(U_{\ell}\setminus U_{\ell-1})$, in the Markov chain, we have that $\trans(s)(U_{\ell-1})\cdot \epsilon\geq \trans(s)(\ov{W}^*)$ 
(from states which are not in $\CC_1$, both probabilities are 0 and $C_1$ by Equation~\ref{eqt: limit reach}). 
Hence, Markov property~\ref{markov property rs3} implies that event $\Reach(U_{\ell-1} \cup \RS)$ happens with probability $1-\epsilon$  (since $\RS$ is the set from which $(U_{\ell-1} \cup \overline{W}^*)$ cannot be reached), 
i.e., we have $\Pr_s^{\sigma_1^{\epsilon,\ell},\sigma_2}(\Safe(U_{\ell}\setminus U_{\ell-1})\cup \Reach(U_{\ell-1}))\geq 1-\epsilon$.
\end{enumerate}
The desired result follows.
\end{proof}

\begin{figure}
\begin{center}

\begin{tikzpicture}[scale=0.95,node distance=3cm]
\tikzstyle{every state}=[fill=white,draw=black,text=black,font=\small ,minimum size=4mm]
\tikzstyle{txt}=[state,draw=white]

    \node[txt] (Ulm) at (2,0.25) {$(U_{\ell-(i+1)}\cup \RS)$};
    \node[txt] (s3) at (6,0.25) {$(U_{\ell-i}\setminus (U_{\ell-(i+1)}\cup \RS))$};
    \node[txt] (s4) at (10,0.25) {$(U_{\ell}\setminus (U_{\ell-i}\cup \RS))$};
    \node[txt] (s1) at (15,0.25) {$\ov{W}^*$};

\foreach \x in {0,4,8,12,13,17} {\draw (\x,0) -- (\x,-4);  }
\draw (0,0) -- (12,0); 
\draw (0,-4) -- (12,-4);

\draw (13,0) -- (17,0); 
\draw (13,-4) -- (17,-4);

    \node[state] (s2) at (2,-2) {$s_1$};
    \node[state] (s3) at (6,-2) {$s_2$};
    \node[state] (s4) at (10,-2) {$s_3$};
    \node[state] (s1) at (15,-2) {$s_4$};
    
\path (s1)  edge [>=stealth',shorten >=1pt,loop above,->] (s1);
\path (s2)  edge [>=stealth',shorten >=1pt,loop above,->] (s2);
\path (s3)  edge [>=stealth',shorten >=1pt,out=-40,in=180+40,->] node[above] {$\frac{\epsilon}{2}\cdot x$} (s1);
\path (s3)  edge [>=stealth',shorten >=1pt,->] node[above,near start] {$x$} (s2);
\path (s3)  edge [>=stealth',shorten >=1pt,bend left,->] node [above,fill=white,yshift=0.1cm]{$1-(1+\frac{\epsilon}{2})\cdot x$} (s4);
\path (s4)  edge [>=stealth',shorten >=1pt,bend left,->]  node [below,near start] {$1-\eta$} (s3);
\path (s4)  edge [>=stealth',shorten >=1pt,->] node  [above]{$\eta$} (s1);

\end{tikzpicture}
\end{center}
\caption{Pictorial illustration of the Markov chain $G_4^{x,\epsilon,\eta}$.}\label{fig:simple chain}
\end{figure}
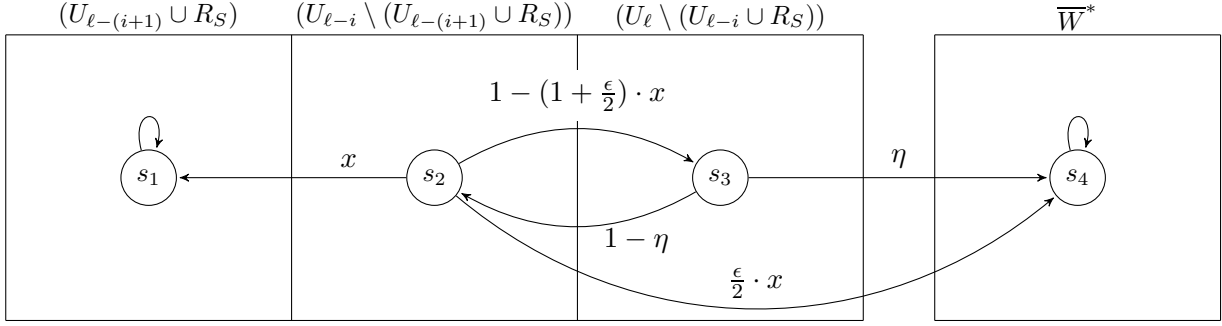

\begin{remark}
Lemma~\ref{lemm:U_ell} proves the desired result only for states in $(U_{\ell}\setminus U_{\ell-1})$
and can be considered as the base case of Lemma~\ref{lemm:g5} which proves
a similar result for states in $(U_{\ell-i}\setminus U_{\ell-(i+1)})$, for $1\leq i\leq \ell-2$. The case for states $(U_1\setminus U_0)=U_1$ is
handled by Lemma~\ref{lemm:U_1}.
Note that $\Safe(U_{\ell}\setminus U_{\ell-1}) \subseteq \coBuchi(U_{\ell}\setminus U_{\ell-1})$
and since mean-payoff objectives are independent of finite prefixes, it also follows 
from Lemma~\ref{lemm:U_ell} that given the event $\coBuchi(U_{\ell}\setminus U_{\ell-1})$, 
we have that the mean-payoff is at least $1-\epsilon$. 
\end{remark}

Before presenting the proof for Property~3 we first present a lemma that we will use to prove
the property.

\begin{lemma}\label{lemm:g4}
Given $0\leq  x\leq \frac{1}{2}$ and $0\leq \epsilon,\eta\leq 1$, consider the four-state Markov chain 
$G_4^{x,\epsilon,\eta}$ shown in Figure~\ref{fig:simple chain}. 
The probability to eventually reach $s_1$ from $s_2$ and $s_3$ is 
$\frac{x}{\eta+(1+\frac{\epsilon}{2})\cdot x\cdot (1-\eta)}$ and $\frac{x\cdot (1-\eta)}{\eta+(1+\frac{\epsilon}{2})\cdot x\cdot (1-\eta)}$, respectively.
\end{lemma}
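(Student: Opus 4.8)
The plan is to treat this as a standard first-step (harmonic) analysis on an absorbing Markov chain. First I would observe that in $G_4^{x,\epsilon,\eta}$ the states $s_1$ and $s_4$ are absorbing (each carries a self-loop of probability~$1$), while $s_2$ and $s_3$ are the only transient states; consequently the random walk leaves $\set{s_2,s_3}$ almost surely and is absorbed into $\set{s_1,s_4}$ in finitely many steps (this is just Markov property~\ref{markov property recurrent class} applied to the two singleton recurrent classes). Reaching $s_4$ is exactly the complementary absorbing event, so it contributes $0$ to the probability of reaching $s_1$, and the two quantities in the statement are genuine reachability probabilities rather than liminf bounds.

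Next I would read off the one-step transition weights from Figure~\ref{fig:simple chain} and set up two linear equations. Writing $p_2$ (resp. $p_3$) for the probability of eventually reaching $s_1$ from $s_2$ (resp. $s_3$), conditioning on the first transition gives
\[
p_2 = x + \left(1-\left(1+\tfrac{\epsilon}{2}\right)x\right)\cdot p_3, \qquad p_3 = (1-\eta)\cdot p_2 ,
\]
where the leading $x$ is the direct jump $s_2 \to s_1$, the mass $\tfrac{\epsilon}{2}x$ is absorbed into the ``wrong'' state $s_4$ and therefore dropped, and from $s_3$ the mass $\eta$ is likewise absorbed into $s_4$.

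Then I would eliminate $p_3$ by substitution. Plugging $p_3=(1-\eta)p_2$ into the first equation yields $p_2\bigl[\,1-(1-(1+\tfrac{\epsilon}{2})x)(1-\eta)\,\bigr]=x$; expanding the bracket as $\eta+(1+\tfrac{\epsilon}{2})x(1-\eta)$ gives
\[
p_2=\frac{x}{\eta+(1+\tfrac{\epsilon}{2})\cdot x\cdot(1-\eta)}, \qquad p_3=(1-\eta)\,p_2=\frac{x\cdot(1-\eta)}{\eta+(1+\tfrac{\epsilon}{2})\cdot x\cdot(1-\eta)},
\]
which are exactly the claimed values. The only side conditions to record are that the denominator is strictly positive in the nondegenerate case (it vanishes only when $\eta=0$ and $x=0$ simultaneously) and that absorption is almost sure, both of which are immediate.

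There is no real obstacle here: the lemma is a routine two-variable linear solve. The only point requiring care is faithfully transcribing the transition weights from the picture---in particular recognizing that the $\tfrac{\epsilon}{2}x$ edge out of $s_2$ and the $\eta$ edge out of $s_3$ both lead to the absorbing state $s_4$ and hence represent ``lost'' probability mass, so that the factor $(1-\eta)$ naturally appears when passing from $p_2$ to $p_3$.
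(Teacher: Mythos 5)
Your proposal is correct and follows essentially the same route as the paper: the same first-step analysis yielding $p_2 = x + (1-(1+\frac{\epsilon}{2})x)\,p_3$ and $p_3=(1-\eta)\,p_2$, followed by the same substitution and solve. The extra remarks on almost-sure absorption and positivity of the denominator are accurate but not needed beyond what the paper records.
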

\begin{proof}
Let $y_2$ and $y_3$ denote the probability to reach $s_1$ from $s_2$ and
$s_3$, respectively.
Then we have 
\[
y_2= x+ (1-(1+\frac{\epsilon}{2})\cdot x) \cdot y_3; \qquad \qquad y_3= (1-\eta)\cdot y_2 \enspace .
\]
Hence we have 
\[
y_2= x+ (1-(1+\frac{\epsilon}{2})\cdot x) \cdot (1-\eta)\cdot y_2 \enspace .
\]
Solving for $y_2$, and then inserting into $y_3= (1-\eta)\cdot y_2$, we obtain the desired result.
\end{proof}

\begin{lemma}\label{lemm:U_i}{\em (Property~3).}\label{lemm:g5} 
Let $0<\epsilon<\frac{1}{2}$ and $1\leq i \leq \ell-2$. 
The strategy $\sigma_1^{\epsilon,\ell-i}$ ensures that against all positional strategies $\sigma_2$ 
for all states  $s\in (U_{\ell}\setminus U_{\ell-(i+1)})$ we have that
\begin{enumerate}
\item given the event $\bigcup_{j\leq i}\coBuchi(U_{\ell-j}\setminus U_{\ell-(j+1)})$,
 the mean-payoff is at least $1-\epsilon$;
 \item $\Pr_s^{\sigma_1^{\epsilon,\ell-i},\sigma_2}(\bigcup_{j \leq i}\coBuchi(U_{\ell-j}\setminus U_{\ell-(j+1)})\cup \Reach(U_{\ell-(i+1)}\cup \ov{W}^*))=1$; and
 \item $\Pr_s^{\sigma_1^{\epsilon,\ell-i},\sigma_2}(\bigcup_{j \leq i}\coBuchi(U_{\ell-j}\setminus U_{\ell-(j+1)})\cup \Reach(U_{\ell-(i+1)}))\geq 1-\epsilon$.
\end{enumerate} 
\end{lemma}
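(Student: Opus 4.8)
The plan is to prove all three items simultaneously by induction on $i$, taking Lemma~\ref{lemm:U_ell} as the base case, since that lemma is exactly the present statement for the single top layer $(U_\ell\setminus U_{\ell-1})$, i.e.\ the ``$i=0$'' instance. So I would assume the three properties for $i-1$ with the strategy $\sigma_1^{\eta,\ell-i+1}$ for every $\eta$ (in particular for the specific double-exponentially small $\eta$ arising in the construction of $\sigma_1^{\epsilon,\ell-i}$) and derive them for $i$. Fix $0<\epsilon<\frac12$ and an arbitrary positional $\sigma_2$, and pass to the induced Markov chain. By the inductive construction of the strategy, $\sigma_1^{\epsilon,\ell-i}$ coincides with $\sigma_1^{\eta,\ell-i+1}$ on the upper region $U_\ell\setminus U_{\ell-i}$ and plays the witness distributions of $\EXP$ on the newly added layer $L=U_{\ell-i}\setminus U_{\ell-(i+1)}$. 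As in the proof of Lemma~\ref{lemm:U_ell} I would let $\CC_1,\CC_2,\CC_3$ be the states of $L$ satisfying Equation~\ref{eqt: limit reach}, Equation~\ref{eqt: almost reach}, Equation~\ref{eqt: get 1}; observe that $L$ can be left only from $\CC_1$, because $\CC_2$ and $\CC_3$ force $\trans(s,\dist_1^\eta,a_2)(U_{\ell-i})=1$ and hence forbid reaching $\ov W^*$ or climbing into $U_\ell\setminus U_{\ell-i}$; and set $\RS$ to be the states of $L$ from which $U_{\ell-(i+1)}\cup\ov W^*$ is unreachable in the chain.

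A preliminary structural fact I would establish is that every recurrent class is contained in a single layer: under $\CC_2,\CC_3$ the chain only moves to states of equal or strictly smaller rank, so there are no upward edges, and consequently any recurrent class lies wholly inside the upper region or wholly inside $\RS\subseteq L$. For the mean-payoff item~(1) I would then argue exactly as in Lemma~\ref{lemm:U_1}: by Markov property~\ref{markov property recurrent class} a recurrent class is reached with probability~$1$; if it sits in the upper region the inductive hypothesis already gives mean-payoff $\ge 1-\eta\ge 1-\epsilon$ on the event $\bigcup_{j\le i-1}\coBuchi(U_{\ell-j}\setminus U_{\ell-(j+1)})$, and if it sits in $\RS$ the telescoping $\kappa(\cdot)$-bound of Lemma~\ref{lemm:U_1} on the expected number of reward-$0$ steps (the reason $\eta$ must shrink across $Y$-ranks) gives mean-payoff $\ge 1-\epsilon$. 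Item~(2) I would obtain by chaining the probability-$1$ guarantees: the inductive hypothesis yields $\Reach(U_{\ell-i}\cup\ov W^*)$ or the upper coB\"uchi event with probability~$1$, Markov property~\ref{markov property RS} applied inside $L$ yields $\Reach(U_{\ell-(i+1)}\cup\ov W^*\cup\RS)$ with probability~$1$, and since $L$ cannot be re-entered upward once $\RS$ is reached and $\RS$ realises $\coBuchi(L)$, these combine to the desired probability-$1$ event.

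The hard part is the quantitative bound~(3), because the only route to $\ov W^*$ is through rare one-step slips: probability at most $\frac{\epsilon}{2}x$ relative to the probability $x$ of dropping from a $\CC_1$ state to $U_{\ell-(i+1)}$ (this ratio being precisely what Equation~\ref{eqt: limit reach} controls), and probability at most $\eta$ per passage through the upper region (via the inductive hypothesis). To keep these slips from accumulating I would collapse the chain to the four-state abstraction of Figure~\ref{fig:simple chain}, with $\ov W^*$ and $U_{\ell-(i+1)}\cup\RS$ as the two absorbing targets and $L\setminus\RS$ and $U_\ell\setminus(U_{\ell-i}\cup\RS)$ as the two transient lumps, and apply Lemma~\ref{lemm:g4}. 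That lemma gives that the probability of reaching $U_{\ell-(i+1)}\cup\RS$ rather than $\ov W^*$ from the two transient lumps equals $\frac{x}{\eta+(1+\frac{\epsilon}{2})x(1-\eta)}$ and $\frac{x(1-\eta)}{\eta+(1+\frac{\epsilon}{2})x(1-\eta)}$, both of which exceed $1-\epsilon$ once $\eta$ is small relative to $\epsilon$ and the least one-step drop probability $x$ — comfortably true for the $\eta$ fixed by the construction, since as $\eta\to0$ the first expression tends to $\frac{1}{1+\epsilon/2}\ge 1-\epsilon$. Combining this with item~(1) on the event $\bigcup_{j\le i}\coBuchi(U_{\ell-j}\setminus U_{\ell-(j+1)})$ then delivers $\Pr_s^{\sigma_1^{\epsilon,\ell-i},\sigma_2}\big(\bigcup_{j\le i}\coBuchi(U_{\ell-j}\setminus U_{\ell-(j+1)})\cup\Reach(U_{\ell-(i+1)})\big)\ge 1-\epsilon$.

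The step I expect to cost the most care is justifying the soundness of the lumping into Figure~\ref{fig:simple chain}: I would need to verify that the aggregated transition probabilities out of each transient lump are dominated in the correct direction by $x$, $\frac{\epsilon}{2}x$ and $\eta$ \emph{uniformly} over all states of the lump, using Equation~\ref{eqt: limit reach} for the $\CC_1$-slips and the inductive hypothesis (item~(3) for $i-1$) for the upper-region slips, so that the monotone comparison with the explicit recurrence solved in Lemma~\ref{lemm:g4} is valid. Everything else is the now-routine recurrent-class and Markov-property bookkeeping already exercised in Lemma~\ref{lemm:U_1} and Lemma~\ref{lemm:U_ell}.
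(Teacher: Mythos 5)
Your proposal is correct and follows essentially the same route as the paper's proof: the same induction on $i$ with Lemma~\ref{lemm:U_ell} as the $i=0$ base case, the same $\CC_1,\CC_2,\CC_3$ and $\RS$ case analysis (recurrent classes plus Markov properties) for items (1) and (2), and the same four-state lumping of Figure~\ref{fig:simple chain} analyzed via Lemma~\ref{lemm:g4} for item (3). The only step you assert rather than verify --- that the constructed $\eta$ is small enough relative to $x=\delta_{\min}/\patc(\ell-i)$ for the lumped chain to give probability at least $1-\epsilon$ --- is exactly what the paper checks explicitly by computing $\frac{2\cdot\eta}{x\cdot(1-\eta)\cdot\epsilon}<\eta^{\frac{1}{2}}<1$, which does hold for the $\eta$ fixed by the strategy construction.
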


\begin{proof}
Given $\sigma_1^{\epsilon,\ell-i}$, let $\sigma_2$ be an arbitrary positional
counter-strategy for player~2.
Let $\CC_i^{\sigma_1^{\epsilon,\ell-i},\sigma_2}=\CC_i$, i.e., 
given $\sigma_1^{\epsilon,\ell-i}$ and $\sigma_2$,
we have that $\CC_1, \CC_2, \CC_3$ are the set of states of $(U_\ell\setminus U_{\ell-(i+1)})$ 
that satisfy Equation~\ref{eqt: limit reach},  
Equation~\ref{eqt: almost reach},  
Equation~\ref{eqt: get 1},  respectively.
This proof is similar to the proof of Lemma~\ref{lemm:U_ell}.
The proof will be by induction in $i$, where $i=0$ is the base case. Hence, the base case is settled by Lemma~\ref{lemm:U_ell}.
We see that $\sigma_1^{\epsilon,\ell-i}$ is stationary and follows the distribution $\dist_1^{\eta}$ over $\mov_1(s)$ 
for some $\eta>0$ in state $s\in (W^*\setminus U_{\ell-(i+1)})$. 
%%Hence $\sigma_2$ can be assumed to be positional. Also let $W_i^{\sigma_1^{\epsilon,\ell-i},\sigma_2}=W_i$. 
We consider the Markov chain obtained by fixing the two strategies.
In the worst case, states in $\ov{W}^*$ are absorbing with reward~0;
and since the target is to reach $U_{\ell-(i+1)}$ we consider that  
the plays end if they leave $T=(W^*\setminus U_{\ell-(i+1)})$, i.e., we 
are interested in the segment of the play in $(W^* \setminus U_{\ell-(i+1)})$. 
The play can only end from a state in 
$\CC_1\cap T$ because $T=\bigcup_{j\leq i}(U_{\ell-j}\setminus U_{\ell-(j+1)})$ 
and if a state $s$ in $(U_{\ell-j}\setminus U_{\ell-(j+1)})$ satisfies either 
Equation~\ref{eqt: almost reach} (in $\CC_2$) or 
Equation~\ref{eqt: get 1} (in $\CC_3$), 
then the set $(U_{\ell-j}\setminus U_{\ell-(j+1)})$ is not left from $s$ in one-step. 
Now consider a play $P$ in the Markov chain. 
Let  $\RS$ be the subset of $T$, from which $\CC_1\cap T$ is not reachable
in the Markov chain. 
%Let $\RS=\RS(\CC_1\cap T)$.
There are two cases
\begin{enumerate}
\item {\bf($P$ starts in $s\in \RS$)}. 
Let $(\ell-i',j')=\rank(s)$. 
Note that $i'\leq i$, by definition of $\RS$. 
Precisely, like in the proof of Lemma~\ref{lemm:U_ell}, 
we have that $\Safe(U_{\ell-i'}\setminus U_{\ell-(i'+1)})$ is ensured with probability~1, because the set $(U_{\ell-i'}\setminus U_{\ell-(i'+1)})$ cannot be left from states in $\CC_2$ or $\CC_3$.
Hence, if $i'< i$, then we are done, by induction, since $\sigma_1^{\epsilon,\ell-i}$ follows $\sigma_1^{\eta,\ell-i+1}$ in such states, by construction of $\sigma_1^{\epsilon,\ell-i}$, for  $\eta=\left(\frac{\epsilon\cdot\delta_{\min}}{4}\right)^{(2m)^{\rank(\ell-i)}}$ and we have that $\eta<\epsilon$, for $m\geq 2$ and $\rank(\ell-i)\geq 1$.
If $i'=i$, then, precisely like in the proof of Lemma~\ref{lemm:U_ell}, the set $(U_{\ell-i}\setminus U_{\ell-(i+1)})$ cannot be left in $\CC_2$ or $\CC_3$ and hence, using an argument like Lemma~\ref{lemm:U_1}, we have 
that $\lim_{t\rightarrow \infty} \frac{\sum_{i=0}^t \E^{\sigma_1^{\epsilon,\ell-i},\sigma_2}_s[\Theta_i]}{t}\geq 1-\epsilon$, because of the similarities between the construction of the strategy $\sigma_1^{\epsilon,i}$ and $\sigma_1^{\epsilon,1}$ for states in $(U_{\ell-i}\setminus U_{\ell-(i+1)})$ and states in $U_{1}$, respectively.
Observe that this case is the same as the corresponding case in Lemma~\ref{lemm:U_ell} and ensures
all the required items of the lemma.

\item {\bf($P$ starts outside $\RS$: Item (1) of the lemma statement)}.
First observe that we can only ensure $\Safe(U_{\ell-j}\setminus U_{\ell-(j+1)})$, 
for some $j\leq i$, from states in $\RS$, since from all other states 
$\CC_1$ is reachable and for every $j$, 
states in $(\CC_1\cap (U_{\ell-j}\setminus U_{\ell-(j+1)}))$, can reach $U_{\ell-(j+1)}$ in one-step with 
positive probability, by Equation~\ref{eqt: limit reach}. 
Hence, if $\bigcup_{j\leq i}\coBuchi(U_{\ell-j}\setminus U_{\ell-(j+1)})$ is 
ensured, then given the event 
$\bigcup_{j\leq i}\coBuchi(U_{\ell-j}\setminus U_{\ell-(j+1)})$ a recurrent
class that is reached must be contained in $\RS$.
Hence given the event $\bigcup_{j\leq i}\coBuchi(U_{\ell-j}\setminus U_{\ell-(j+1)})$, 
the set $\RS$ is reached in a finite number of steps with probability~1.
Since mean-payoffs are independent of finite-prefixes, 
the finite prefix to reach $\RS$ does not change the mean-payoff.
Moreover, since if we start in $\RS$ the mean-payoff is at least $1-\epsilon$, 
it follows that given the event $\bigcup_{j\leq i}\coBuchi(U_{\ell-j}\setminus U_{\ell-(j+1)})$ 
we have that the mean-payoff is at least $1-\epsilon$.

\item {\bf($P$ starts outside $\RS$: Item (2) of the lemma statement)}. 
For $0 \leq i' \leq i$, let $\cale_{i'}$ denote the following 
event,
\[
\cale_{i'}=\bigcup_{j \leq i'}\coBuchi(U_{\ell-j}\setminus U_{\ell-(j+1)})\cup \Reach(U_{\ell-(i'+1)}\cup \ov{W}^*).
\] 
Let $\SP(s,\ell-i')=\Pr\nolimits_s^{\sigma_1^{\epsilon,\ell-i'},\sigma_2}(\cale_{i'})$,
for all $0\leq i'\leq i$, denote the \emph{success probability} of the 
event $\cale_{i'}$.
%%given $s$, $\sigma_1^{\epsilon,\ell-i'}$, and $\sigma_2$ 
We need to argue that $\SP(s,\ell-i)=1$, for all states in $(U_{\ell}\setminus U_{\ell-(i+1)})$.
By induction we have that $\SP(s,\ell-(i-1))=1$, 
from states in $(U_{\ell}\setminus U_{\ell-i})$. 
Since $\sigma_1^{\epsilon,\ell-i}$ has the same support as $\sigma_1^{\epsilon,\ell-(i-1)}$ for all states in 
$(U_{\ell}\setminus U_{\ell-i})$, it follows that for each state $s$ in $(U_{\ell}\setminus U_{\ell-i})$ we have  
$\SP(s,\ell-i)=1$. 
If the event $\bigcup_{j \leq \ell-(i+1)}\coBuchi(U_{\ell-j}\setminus U_{\ell-(j+1)})\cup \Reach(\ov{W}^*)$ happens, then we are done. 
Thus, in the worst case we have that $\Pr_s^{\sigma_1^{\epsilon,\ell-i},\sigma_2}(\Reach(U_{\ell-i}))=1$ from state $s$ in $(U_{\ell}\setminus U_{\ell-i})$ 
(clearly, from such states $U_{\ell-i}$ is reachable in the Markov chain 
since they are reached with probability~1). 
We only need to argue about the worst case.
%%We therefore only need to consider $((U_{\ell-i}\setminus (U_{\ell-(i+1)}\cup \RS))$. 
Let $\RS'$ be the subset of $(U_{\ell-i}\setminus U_{\ell-(i+1)})$, 
from which $(\CC_1\cap (U_{\ell-i}\setminus U_{\ell-(i+1)}))$ cannot be reached in the Markov chain. 
Hence, for each state $s$  in $(U_{\ell-i}\setminus U_{\ell-(i+1)})$, the state $s$ must either be in $\RS'$ (in which case $\RS'$ is reachable) or the set $(\CC_1\cap (U_{\ell-i}\setminus U_{\ell-(i+1)}))$ must be reachable from $s$. From the set $(\CC_1\cap (U_{\ell-i}\setminus U_{\ell-(i+1)}))$, the set $U_{\ell-(i+1)}$ is reached in one-step with positive probability.
We therefore get that from any state in $T=((U_{\ell}\setminus U_{\ell-i})\cup (U_{\ell-i}\setminus U_{\ell-(i+1)}))$, the set $(U_{\ell-(i+1)}\cup \RS')$ is reachable, by transitivity of reachabillity.
Hence, by Markov property~\ref{markov property reachability} we have that $\Pr_s^{\sigma_1^{\epsilon,\ell-i},\sigma_2}\Reach((S\setminus T)\cup U_{\ell-(i+1)}\cup \RS')=1$, from any state $s\in T$.
Note that from states in $\RS'$ no state in $\CC_1 \cap (U_{\ell-i}\setminus U_{\ell-(i+1)})$ is reachable, 
and the set $(U_{\ell-i}\setminus U_{\ell-(i+1)})$ can be left only from states in $\CC_1 \cap (U_{\ell-i}\setminus U_{\ell-(i+1)})$.
Hence reachability to $\RS'$ ensures $\coBuchi((U_{\ell-i}\setminus U_{\ell-(i+1)}))$.
Thus we have that 
\begin{align*}
\Reach((S\setminus T)\cup U_{\ell-(i+1)}\cup \RS') &= \Reach(U_{\ell-(i+1)}\cup \ov{W}^*\cup U_{\ell-(i+1)}\cup \RS') \\
& = \Reach(U_{\ell-(i+1)}\cup \ov{W}^*\cup \RS') \\ 
& \subseteq \Reach(U_{\ell-(i+1)}\cup \ov{W}^*) \cup \coBuchi(U_{\ell-i}\setminus U_{\ell-(i+1)}) \subseteq \cale_i \enspace .
\end{align*}
The first equality uses that $(S\setminus T)=(U_{\ell-(i+1)}\cup \ov{W}^*)$. 
The first inclusion uses that $\Reach(\RS')$ ensures $\coBuchi(U_{\ell-i}\setminus U_{\ell-(i+1)})$.
Hence, from each state $s\in T$ we have that $\SP(s,\ell-i)=1$ as desired.

\item {\bf($P$ starts outside $(\RS \cap T)$: Item (3) of the lemma statement.)}.
We will now show that the probability of the event $(\bigcup_{j \leq i}\coBuchi(U_{\ell-j}\setminus U_{\ell-(j+1)})\cup \Reach(U_{\ell-i}))$ is at least $1-\epsilon$.
We will do so by modeling the worst case using the Markov chain $G_4^{x,\epsilon,\eta}$ 
of Lemma~\ref{lemm:g4}.  There is an illustration of the Markov chain $G_4^{x,\epsilon,\eta}$ in Figure~\ref{fig:simple chain}.
We have one state representing each of the following sets 
\begin{enumerate}
\renewcommand{\labelenumii}{(\arabic{enumii})}
\item $(U_{\ell-(i+1)}\cup \RS)$
\item $(U_{\ell-i}\setminus (U_{\ell-(i+1)}\cup \RS))$
\item $(U_{\ell}\setminus (U_{\ell-i}\cup \RS))$
\item $\ov{W}^*$
\end{enumerate}
We will refer to the states as $s_1$, $s_2$, $s_3$ and $s_4$, respectively. 
We will now argue about the transition probabilities, and first consider
the absorbing states.

\smallskip\noindent{\bf The state $s_1$.} 
We are interested in the probability that $(U_{\ell-(i+1)}\cup \RS)$ is eventually reached. This probability does not depend on what happens after $(U_{\ell-(i+1)}\cup \RS)$ is reached.
Hence, we consider $s_1$ as absorbing, like in $G_4^{x,\epsilon,\eta}$.

\smallskip\noindent{\bf The state $s_4$.} In the worst case $\ov{W}^*$ cannot be left, once reached. Thus $s_4$ is an absorbing state, like in $G_4^{x,\epsilon,\eta}$.

\smallskip\noindent{\bf The state $s_2$.}  For each state $s\in (U_{\ell-i}\setminus (U_{\ell-(i+1)}\cup \RS))\subseteq (U_{\ell-i}\setminus U_{\ell-(i+1)})$, we must eventually reach a state in either $(\CC_1\cap (U_{\ell-i}\setminus U_{\ell-(i+1)}))=((\CC_1\cap T)\cap (U_{\ell-i}\setminus U_{\ell-(i+1)}))$ or $(\RS\cap (U_{\ell-i}\setminus U_{\ell-(i+1)}))$, with probability~1, by Markov property~\ref{markov property rs2new} (recall that we cannot reach states outside $(U_{\ell-i}\setminus U_{\ell-(i+1)})$, except from states in $(\CC_1\cap (U_{\ell-i}\setminus U_{\ell-(i+1)}))$ by Equation~\ref{eqt: limit reach}, Equation~\ref{eqt: almost reach} and Equation~\ref{eqt: get 1}. Also, $(\RS\cap (U_{\ell-i}\setminus U_{\ell-(i+1)}))$ is the subset of $(U_{\ell-i}\setminus U_{\ell-(i+1)})$ from which $(\CC_1\cap T)$ cannot be reached).
If we reach $\RS$,  an argument similar to the first item in the proof of this lemma shows that we satisfy the desired statement. 
Thus, in the worst case we always reach $(\CC_1\cap (U_{\ell-i}\setminus U_{\ell-(i+1)}))$. 
For each state $s$ in $(\CC_1\cap (U_{\ell-i}\setminus U_{\ell-(i+1)}))$, let 
$x_s = \trans(s,\sigma_1^{\epsilon,\ell-i},\sigma_2)(U_{\ell-(i+1)})$ be the one-step
transition probability to $U_{\ell-(i+1)}$. 
By Equation~\ref{eqt: limit reach}, and the construction of the strategy, 
we have that $\frac{\epsilon}{2}\cdot x_s>\trans(s,\sigma_1^{\epsilon,\ell-i},\sigma_2)(\ov{W}^*)$. 
Clearly, in the worst case we have that $\frac{\epsilon}{2}\cdot x_s=\trans(s,\sigma_1^{\epsilon,\ell-i},\sigma_2)(\ov{W}^*)$
(recall that $\ov{W}^*$ is absorbing). 
Also, the fact $x_s>\trans(s,\sigma_1^{\epsilon,\ell-i},\sigma_2)(\ov{W}^*)$ implies that $x_s>0$ and 
therefore we have that $x_s\geq \frac{\delta_{\min}}{\patc(\ell-i)}$, where  $\patc(\ell-i)=\left(\frac{\epsilon\cdot\delta_{\min}}{4}\right)^{-(\frac{(2m)^{\rank(\ell-i)}}{2}-1)}$, is an upper bound on the patience of the distribution $\sigma_1^{\epsilon,\ell-i}(s)$, by Lemma~\ref{lemm:patience1}. 
Thus with probability $x_s$ we go to $U_{\ell-(i+1)}$, with 
probability $\frac{\epsilon}{2}\cdot x_s$ we go to $\ov{W}^*$,
and with the remaining probability of $(1-(1+\frac{\epsilon}{2})\cdot x_s)$ 
we go to a state in $T$, which in the worst case is a state in $(U_{\ell}\setminus (U_{\ell-i} \cup \RS))$. 
This is so, because, in the worst case, to reach  $(U_{\ell-(i+1)} \cup \RS)$ from $(U_{\ell}\setminus (U_{\ell-i}\cup\RS))$ we must go through 
a state in $(U_{\ell-i}\setminus (U_{\ell-(i+1)}\cup \RS))$, and hence the probability to reach $U_{\ell-(i+1)}$ 
is minimized when $x_s$ is as small as possible, for all $s$. That is, $x_s=\frac{\delta_{\min}}{\patc(\ell-i)}$, 
for all $s\in (\CC_1\cap (U_{\ell-i}\setminus U_{\ell}))$. 
Let $x=\frac{\delta_{\min}}{\patc(\ell-i)}$. Thus, the transition probabilities are as follows:
(i)~from $s_2$ to $s_4$ is $\frac{\epsilon}{2}\cdot x$;
(ii)~from $s_2$ to $s_1$ is  $x$; and 
(iii)~from $s_2$ to $s_3$ is $1-(1+\frac{\epsilon}{2})\cdot x$. 
Thus, $s_2$ is like in $G_4^{x,\epsilon,\eta}$.

\smallskip\noindent{\bf The state $s_3$.} For each state $s\in (U_{\ell}\setminus (U_{\ell-i}\cup \RS))\subseteq (U_{\ell}\setminus U_{\ell-i})$, by induction and since $\sigma_1^{\epsilon,\ell-i}$ follows $\sigma_1^{\eta,\ell-i}$, 
we satisfy that $\Pr_s^{\sigma_1^{\epsilon,\ell-i},\sigma_2}(\bigcup_{j \leq i-1}\coBuchi(U_{\ell-j}\setminus U_{\ell-(j+1)})\cup \Reach(U_{\ell-i}))\geq 1-\eta$, 
where $\eta$ is $\left(\frac{\epsilon\cdot\delta_{\min}}{4}\right)^{(2m)^{\rank(\ell-i)}}$. By item (2) of the lemma statement, we enter $\ov{W}^*$ with the remaining probability (which is absorbing).
Hence, the worst case must be where 
$\Pr_s^{\sigma_1^{\epsilon,\ell-i},\sigma_2}(\bigcup_{j \leq i-1}\coBuchi(U_{\ell-j}\setminus U_{\ell-(j+1)})\cup \Reach(U_{\ell-i}))= 1-\eta$ (and thus $\Pr_s^{\sigma_1^{\epsilon,\ell-i},\sigma_2}(\Reach(\ov{W}^*))=\eta$).
As previously argued, in the first item and second item of this lemma, 
%%we have that $\RS=\bigcup_{j \leq i}\Safe(U_{\ell-j}\setminus U_{\ell-(j+1)}))$. Thus, 
the event $\bigcup_{j \leq i-1}\coBuchi(U_{\ell-j}\setminus U_{\ell-(j+1)})$ ensures reachability to $\RS$ (i.e., ensures $\Reach(\RS)$).
In the worst case for the proof the probability to reach $(\RS\cup U_{\ell-i-1})$ is minimized, and thus in the worst case 
we have $\Pr_s^{\sigma_1^{\epsilon,\ell-i},\sigma_2}(\Reach((U_{\ell-i}\setminus (U_{\ell-(i+1)}\cup \RS)))=1-\eta$ and 
$\Pr_s^{\sigma_1^{\epsilon,\ell-i},\sigma_2}(\Reach(\ov{W}^*))=\eta$. 
%Again, since we are trying to minimize the proability of eventually reaching $(\RS\cup U_{\ell-i-1})$ and $(\RS\cup U_{\ell-i-1})\subseteq U_{\ell-i}$, in the worst case we must have that we reach $(U_{\ell-i}\setminus (U_{\ell-(i+1)}\cup \RS))$ with probability $1-\eta$.
Thus, from $s_3$ the transition probability to $s_2$ and $s_4$ are $1-\eta$ and $\eta$, respectively. 
Thus, $s_3$ is like in $G_4^{x,\epsilon,\eta}$.

\smallskip\noindent{\bf The probability to eventually reach $s_1$ from $s_2$ or $s_3$.}  
We have that $x\leq \frac{1}{2}$ (since $\patc(\ell-i)\leq \frac{1}{2}$, for $m\geq 2$ and $\rank(\ell-i)\geq 1$). Also, $0<\eta,\epsilon<1$ (in the case of $\eta$, because $m\geq 2$ and $\rank(\ell-i)\geq 1$). Hence we can apply Lemma~\ref{lemm:g4} and get that the probability to eventually reach $s_1$ from $s_2$ and $s_3$ is 
$\frac{x}{\eta+(1+\frac{\epsilon}{2})\cdot x\cdot (1-\eta)}$ and $\frac{x\cdot (1-\eta)}{\eta+(1+\frac{\epsilon}{2})\cdot x\cdot (1-\eta)}$, respectively. Cleary, the probability from $s_3$ is the smallest. We will show that it is greater than $1-\epsilon$. We have that \begin{align*}
\frac{x\cdot (1-\eta)}{\eta+(1+\frac{\epsilon}{2})\cdot x\cdot (1-\eta)} &=
\frac{1}{\frac{\eta}{x\cdot (1-\eta)}+1+\frac{\epsilon}{2}} \geq \frac{1}{1+\epsilon} \geq 1-\epsilon \enspace .
\end{align*}
We will argue about the first inequality last. The second inequality follows from $1>1-\epsilon^2=(1+\epsilon)\cdot (1-\epsilon)\Rightarrow \frac{1}{1+\epsilon}>1-\epsilon$.
To show the first inequality we will argue that $\frac{\eta}{x\cdot (1-\eta)}\leq \frac{\epsilon}{2}$ or, equivalently, that $\frac{2\cdot \eta}{x\cdot (1-\eta)\cdot \epsilon}\leq 1$, since $\epsilon>0$. We have that
\begin{align*}
\frac{2\cdot \eta}{x\cdot (1-\eta)\cdot \epsilon}&<\frac{4\cdot \eta}{x\cdot \epsilon} =\frac{4\cdot \eta\cdot \patc(\ell-i)}{\delta_{\min}\cdot \epsilon} =\eta \cdot \left(\frac{\epsilon\cdot\delta_{\min}}{4}\right)^{-\frac{(2m)^{\rank(\ell-i)}}{2}} = \eta^{\frac{1}{2}} < 1 \enspace .
\end{align*}
The inequalities comes from $\eta<\frac{1}{2}$ (which is the case because $m\geq 2$ and $\rank(\ell-i)\geq 1$). The first equality is because $x=\frac{\delta_{\min}}{\patc(\ell-i)}$, by definition. The second equality is because
$\patc(\ell-i)=\left(\frac{\epsilon\cdot\delta_{\min}}{4}\right)^{-(\frac{(2m)^{\rank(\ell-i)}}{2}-1)}$, by definition. The third equality uses that $\eta=\left(\frac{\epsilon\cdot\delta_{\min}}{4}\right)^{(2m)^{\rank(\ell-i)}}$, by definition.

\smallskip\noindent{\bf Ensuring item (3) of the lemma statement.}  We see that the probability to reach $(U_{\ell-(i+1)}\cup \RS)$ from $T$ is more than $1-\epsilon$ (by recalling the definition of $s_1$, $s_2$ and $s_3$) 
and thus item (3) of the lemma statement is ensured, because from states in $\RS$ the event $\bigcup_{j \leq i}\Safe(U_{\ell-j}\setminus U_{\ell-(j+1)})$
is ensured (as argued in the beginning of the lemma) and hence reaching $\RS$ ensures $\bigcup_{j \leq i}\coBuchi(U_{\ell-j}\setminus U_{\ell-(j+1)})$.

\end{enumerate}
The desired result follows.
\end{proof}

\begin{lemma}\label{lemm:final_val1}
Let $0<\epsilon<\frac{1}{2}$.
The stationary strategy $\sigma_1^{\epsilon}$ ensures that for all states $s \in W^*$ and 
all strategies $\sigma_2$ we have 
$\Exp_s^{\sigma_1^\epsilon,\sigma_2}[\LimSupAvg] \geq 
\Exp_s^{\sigma_1^\epsilon,\sigma_2}[\LimInfAvg] \geq 1-\epsilon$.
\end{lemma}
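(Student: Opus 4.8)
The plan is to combine the three regime lemmas (Lemma~\ref{lemm:U_1}, Lemma~\ref{lemm:U_ell}, and Lemma~\ref{lemm:g5}) via the recurrent-class structure of the induced Markov chain, exploiting that $\sigma_1^\epsilon$ is built from the parameter $\beta=\frac{\epsilon}{2}$. First I would reduce to positional counter-strategies for player~2. Since $\sigma_1^\epsilon$ is stationary, fixing it turns the game into a player-2 MDP with a limit-average reward, and in a finite MDP the infimum of $\Exp_s^{\sigma_1^\epsilon,\sigma_2}[\LimInfAvg]$ over all $\sigma_2$ is attained by a positional strategy; hence it suffices to prove the bound against every positional $\sigma_2$, which is exactly the form in which Lemma~\ref{lemm:U_1}, Lemma~\ref{lemm:U_ell}, and Lemma~\ref{lemm:g5} are stated. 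Moreover, since $\LimSupAvg(\pat)\geq\LimInfAvg(\pat)$ for every path $\pat$, the first inequality of the statement is immediate by monotonicity of expectation, so only the lower bound on $\Exp_s^{\sigma_1^\epsilon,\sigma_2}[\LimInfAvg]$ needs work.

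Next I would fix a positional $\sigma_2$ and analyze the induced finite Markov chain. By Markov property~\ref{markov property recurrent class} the play reaches a recurrent class $L$ with probability~1, and within $L$ the empirical average converges almost surely, so $\LimInfAvg=\LimSupAvg$ equals the mean reward $r(L)$ of $L$ almost surely on the event of reaching $L$; thus $\Exp_s^{\sigma_1^\epsilon,\sigma_2}[\LimInfAvg]=\sum_L \Pr_s(\Reach(L))\cdot r(L)$. This passage through recurrent classes is what upgrades the per-regime guarantees (which are about Cesàro limits of expected rewards, or mean-payoff conditioned on safety/coB\"uchi events) into an almost-sure statement about $\LimInfAvg$. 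I would then classify $L$. Recall $\sigma_1^\epsilon$ plays $\sigma_1^{\beta,1}$ on $U_1$ and $\sigma_1^{\beta,2}$ on $S\setminus U_1$. A recurrent class meeting $U_1$ is contained in $U_1$, because $\sigma_1^{\beta,1}$ ensures $\Safe(U_1)$ (Lemma~\ref{lemm:U_1}), and by the combinatorial property established after Lemma~\ref{lemm:U_1} it has $r(L)\geq 1-\beta$. A recurrent class inside $W^*\setminus U_1$ is, by the $\RS$/coB\"uchi analysis in the proofs of Lemma~\ref{lemm:U_ell} and Lemma~\ref{lemm:g5}, confined to a single level $U_{\ell-j}\setminus U_{\ell-(j+1)}$ and again satisfies $r(L)\geq 1-\beta$ (item~(1) of those lemmas, instantiated with $\beta$). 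The only remaining classes lie in $\ov{W}^*$, for which I use merely $r(L)\geq 0$.

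The crucial step is to bound the probability mass that ends in $\ov{W}^*$. I would apply Lemma~\ref{lemm:g5} with $i=\ell-2$, so that $U_{\ell-(i+1)}=U_1$ and the relevant strategy is $\sigma_1^{\beta,2}=\sigma_1^\epsilon$ on $W^*\setminus U_1$ (for $\ell=2$ this is Lemma~\ref{lemm:U_ell}, and for $\ell=1$ only Lemma~\ref{lemm:U_1} is needed). Writing $\mathrm{good}=\bigcup_{j\leq \ell-2}\coBuchi(U_{\ell-j}\setminus U_{\ell-(j+1)})\cup\Reach(U_1)$, item~(3) gives $\Pr_s^{\sigma_1^\epsilon,\sigma_2}(\mathrm{good})\geq 1-\beta$, and item~(2) gives that off $\mathrm{good}$ the play reaches $\ov{W}^*$. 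Here I would note that $\mathrm{good}$ and the reachability event $\Reach(U_1\cup\ov{W}^*)$ depend only on the behaviour before $U_1$ is entered, so replacing the strategy by $\sigma_1^{\beta,1}$ on $U_1$ leaves these probabilities unchanged. On $\mathrm{good}$ the play ends in a recurrent class of $W^*$, so $\LimInfAvg\geq 1-\beta$ almost surely (Lemma~\ref{lemm:U_1} once $U_1$ is reached, and Lemma~\ref{lemm:g5}(1) on the coB\"uchi event); the complementary event has probability at most $\beta$ and contributes at least~$0$. Hence
\[
\Exp_s^{\sigma_1^\epsilon,\sigma_2}[\LimInfAvg]\geq (1-\beta)\cdot(1-\beta)=\Big(1-\tfrac{\epsilon}{2}\Big)^2=1-\epsilon+\tfrac{\epsilon^2}{4}\geq 1-\epsilon .
\]

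The hard part will not be any single estimate but the bookkeeping that glues the three regimes together: I must verify that recurrent classes cannot straddle two levels or cross the $U_1$ boundary (which is what lets the per-regime mean-payoff guarantees apply to $\LimInfAvg$ almost surely rather than only to Cesàro averages of expectations), and that the reachability guarantees of Lemma~\ref{lemm:g5} genuinely survive overriding the strategy on $U_1$. The choice $\beta=\frac{\epsilon}{2}$ is exactly what makes the argument close: an at-most-$\beta$ leak into $\ov{W}^*$ together with the conditional guarantee $1-\beta$ still clears the target through $(1-\beta)^2\geq 1-\epsilon$.
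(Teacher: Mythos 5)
Your proposal is correct and follows essentially the same route as the paper's own proof: reduce to positional counter-strategies via the stationary-strategy/MDP argument, apply Lemma~\ref{lemm:U_1} on $U_1$ and Lemma~\ref{lemm:U_i} (instantiated at the outermost level, $i=\ell-2$, with parameter $\beta=\frac{\epsilon}{2}$) on $W^*\setminus U_1$, and conclude via the product bound $(1-\beta)^2\geq 1-\epsilon$ using nonnegativity of rewards. The only difference is that you spell out two points the paper leaves implicit---the recurrent-class decomposition that upgrades the Ces\`aro-type guarantees of the regime lemmas to a bound on $\Exp_s^{\sigma_1^\epsilon,\sigma_2}[\LimInfAvg]$, and the check that overriding the strategy on $U_1$ leaves the relevant reachability and $\coBuchi$ probabilities unchanged---which is added rigor rather than a different method.
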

\begin{proof}
By construction $\sigma_1^{\epsilon}$ plays as $\sigma_1^{\beta,1}$ in $U_1$ and 
$\sigma_1^{\beta,2}$, for $\beta=\frac{\epsilon}{2}$, in the remaining states.
Therefore $\sigma_1^{\epsilon}$ ensures that the mean-payoff of any play  that starts in $U_1$ is at least $1-\beta$, by Lemma~\ref{lemm:U_1}. 
Since $\sigma_1^{\epsilon}$ is stationary, once $\sigma_1^\epsilon$ is fixed we obtain an MDP
for player~2, and in MDPs positional strategies always suffice to minimize mean-payoff objectives~\cite{FV97}.
Hence, Lemma~\ref{lemm:U_i} shows that if the play starts in $s\in (U_{\ell}\setminus U_{1})$, then with probability 
$1-\beta$ the play either stays in $(U_{j}\setminus U_{j-1})$ for some $j\geq 2$ and ensures mean-payoff of at least $1-\beta$ or reaches $U_{1}$, 
from which we will get mean-payoff $1-\beta$.
By simple multiplication (using that rewards are at least 0) we therefore see that we get mean-payoff at least 
\[
\left(1-\beta\right)^2 = 1+\beta^2-2\beta\geq 1-\epsilon.
\]
The desired result follows.
\end{proof}

Lemma~\ref{lemm:final_val1} implies the following inclusion.
\begin{lemma}\label{lemm:incl1}
We have $W^* \subseteq \val_1(\LimInfAvg(\cost),\bigstra_1^S) \subseteq \val_1(\LimSupAvg(\cost),\bigstra_1^S)$.
\end{lemma}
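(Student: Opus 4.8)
The plan is to derive both inclusions directly from Lemma~\ref{lemm:final_val1}, using only the monotonicity of the value in the objective and the fact that every play has payoff at most $1$. Since all rewards lie in $[0,1]$, both $\LimInfAvg(\cost)$ and $\LimSupAvg(\cost)$ are bounded above by $1$ on every play, so $\val(\Phi,\C)(s)\leq 1$ for any objective $\Phi\in\{\LimInfAvg(\cost),\LimSupAvg(\cost)\}$ and any class $\C$; this a priori upper bound is what turns the $(1-\epsilon)$ guarantees of Lemma~\ref{lemm:final_val1} into an exact value of $1$.

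For the first inclusion I would fix a state $s\in W^*$. For every $0<\epsilon<\frac12$, Lemma~\ref{lemm:final_val1} supplies a stationary strategy $\sigma_1^\epsilon\in\bigstra_1^S$ with $\Exp_s^{\sigma_1^\epsilon,\sigma_2}[\LimInfAvg(\cost)]\geq 1-\epsilon$ for all $\sigma_2\in\bigstra_2$, whence $\inf_{\sigma_2\in\bigstra_2}\Exp_s^{\sigma_1^\epsilon,\sigma_2}[\LimInfAvg(\cost)]\geq 1-\epsilon$. Taking the supremum over stationary player~1 strategies then gives $\val(\LimInfAvg(\cost),\bigstra_1^S)(s)\geq 1-\epsilon$. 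As this holds for every such $\epsilon$ and the value is at most $1$, I conclude $\val(\LimInfAvg(\cost),\bigstra_1^S)(s)=1$, i.e. $s\in\val_1(\LimInfAvg(\cost),\bigstra_1^S)$.

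For the second inclusion I would use that $\LimInfAvg(\pat)\leq\LimSupAvg(\pat)$ holds pointwise on every play $\pat$, so for any strategy pair $\Exp_s^{\sigma_1,\sigma_2}[\LimInfAvg(\cost)]\leq\Exp_s^{\sigma_1,\sigma_2}[\LimSupAvg(\cost)]$; taking the infimum over $\sigma_2\in\bigstra_2$ and then the supremum over $\sigma_1\in\bigstra_1^S$ preserves the inequality, giving $\val(\LimInfAvg(\cost),\bigstra_1^S)(s)\leq\val(\LimSupAvg(\cost),\bigstra_1^S)(s)$. Combined with the upper bound of $1$, membership of $s$ in $\val_1(\LimInfAvg(\cost),\bigstra_1^S)$ forces $\val(\LimSupAvg(\cost),\bigstra_1^S)(s)=1$, which is the desired inclusion. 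Alternatively, the very same $\sigma_1^\epsilon$ already meets the $\LimSupAvg$ guarantee stated in Lemma~\ref{lemm:final_val1}, so the second inclusion can be obtained in exactly the same way as the first.

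There is essentially no substantive obstacle here, as all the analytic content is encapsulated in Lemma~\ref{lemm:final_val1}; the only points requiring care are the order of the quantifiers in the definition of the value (a supremum over player~1 of an infimum over player~2, so that exhibiting one good $\sigma_1^\epsilon$ suffices to lower-bound the value) and the systematic use of the upper bound of $1$ on the payoff to pass from approximate to exact value.
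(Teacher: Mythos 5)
Your proposal is correct and matches the paper exactly: the paper presents this lemma as an immediate consequence of Lemma~\ref{lemm:final_val1}, which already supplies, for each $0<\epsilon<\frac{1}{2}$, a stationary strategy guaranteeing $\Exp_s^{\sigma_1^\epsilon,\sigma_2}[\LimSupAvg(\cost)] \geq \Exp_s^{\sigma_1^\epsilon,\sigma_2}[\LimInfAvg(\cost)] \geq 1-\epsilon$ against all $\sigma_2$. You merely make explicit the routine steps the paper leaves implicit --- the sup-inf quantifier order, the upper bound of $1$ turning the $(1-\epsilon)$-guarantees into exact value~$1$, and the pointwise inequality $\LimInfAvg \leq \LimSupAvg$ --- all of which are sound.
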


\subsubsection{Second inclusion: $\ov{W}^* \subseteq S \setminus \val_1(\LimInfAvg,\bigstra_1^F)$\label{sec:incl2}}
We will now show that for all states $s\in \ov{W}^*$ that there exists a constant $c>0$
such that no finite-memory strategy $\sigma_1$ for player~1 can ensure value more than  $1-\frac{c^n}{n}$.
Again the statement is trivially true if $\ov{W}^*$ is empty, and hence we assume that 
this is not the case.

\smallskip\noindent{\bf Computation of $\overline{W}^*$.} %%and counter-strategy $\sigma_2$.}
We first analyze the computation of $\overline{W}^*$. To analyze the computation of $\ov{W}^*$ 
we consider the iterative computation $W^*$
\begin{itemize}
\item Let $W_0$ be $S$ and $W_i$ be $\mu U. \nu X. \mu Y. \nu Z. \EXP(W_{i-1},U,X,Y,Z)$.
\item Let $X_{i,0}$ be $S$ and $X_{i,j}$ be  $\nu X. \mu Y. \nu Z. \EXP(W_{i-1},W_i,X_{i,j-1},Y,Z)$.
\item Also let $Z_{i,j,0}$ be $S$ and $Z_{i,j,k}$ be  $\EXP(W_{i-1},W_i,X_{i,j-1},X_{i,j},Z_{i,j,k-1})$.
\end{itemize}   
Let $\ov{\ell}\geq 0$ be the smallest number such that $W_{\ov{\ell}}=W_{\ov{\ell}+1}=W^*$. Let $\ov{\rank}(i)$, be the smallest number $j$ such that $X_{i,j}=X_{i,j+1}$. Also, let $\ov{\rank}(i,j)$, be the smallest number $k$ such that $Z_{i,j,k}=Z_{i,j,k+1}$.
We have that for any state $s$  in $\overline{W}^*$, there must be some smallest number $i$ such that $s$ is not in $W_i$ 
(since $W_0$ is $S$, we have that $i>0$).
Also, there must be some smallest $j$ such that $s$ is not in $X_{i,j}$ and similar for $k$ and $Z_{i,j,k}$. 
We define the rank of a state $s\in \ov{W}^*$ as $\ov{\rank}(s)=(i,j,k)$, where $i$ (resp. $j$, and $k$)
is the smallest number such that $s$ not in $W_i$ (resp. $X_{i,j}$ and $Z_{i,j,k}$).  
By definition of $\overline{W}^*$,  there exists a constant $c>0$, such that for a state $s$, 
with $\ov{\rank}(s)=(i,j,k)$, for all distributions $\dist_1$ over $\mov_1(s)$ 
there must exist an counter-action $a_2^{s,\dist_1}\in \mov_2(s)$ for player~2  
such that all the following conditions hold (i.e., the negation of the conditions of 
$\EXP$ hold):
\begin{align*}
&(c \cdot \trans(s,\dist_1,a_2^s)(W_i) \leq  \trans(s,\dist_1,a_2^s)(\overline{W}_{i-1})) \\[2ex]
\wedge&(\trans(s,\dist_1,a_2^s)(X_{i,j-1})<1 \vee \trans(s,\dist_1,a_2^s)(X_{i,j}) = 0) \\[2ex]
\wedge&(\trans(s,\dist_1,a_2^s)(X_{i,j-1})<1 \vee
\ExpRew(s,\dist_1,a_2^s) < 1-c  \vee
 \trans(s,\dist_1,a_2^s)(Z_{i,j,k-1}) < 1- c) \enspace .
\end{align*}
If the above conditions hold, then one of the following three 
conditions hold as well.
We first explain the following cases: (i)~if $\trans(s,\dist_1,a_2^{s,\dist_1})(W_{i})>0$, 
then $c \cdot \trans(s,\dist_1,a_2^{s,\dist_1})(W_{i}) \leq  \trans(s,\dist_1,a_2^{s,\dist_1})(\overline{W}_{i-1})$
must hold to ensure the first condition above (this corresponds to Case~(3) below);
(ii)~if $\trans(s,\dist_1,a_2^{s,\dist_1})(W_{i})=0$, then the first condition above is satisfied;
then we have two sub-cases: (a)~if $\trans(s,\dist_1,a_2^{s,\dist_1})(X_{i,j-1})<1$, then both 
the second and third condition is satisfied (this corresponds to Case~(2) below);
(b)~otherwise we must have $\trans(s,\dist_1,a_2^{s,\dist_1})(X_{i,j})=0$ to satisfy 
the second condition above and 
$(\ExpRew(s,\dist_1,a_2^{s,\dist_1}) < 1-c \quad \vee \quad  \trans(s,\dist_1,a_2^{s,\dist_1})(Z_{1,j,i-1})<1-c)$ 
to satisfy the third condition above (this corresponds to Case~(1) below).
Thus we have that either
\begin{itemize}
\item {\bf Case~$(1)$.} There is a  $a_2^{s,\dist_1}$ such that
\begin{align*}
&\quad\trans(s,\dist_1,a_2^{s,\dist_1})(W_{i})=0 \\
\wedge&\quad\trans(s,\dist_1,a_2^{s,\dist_1})(X_{i,j})=0\\
\wedge&\quad\big(\ExpRew(s,\dist_1,a_2^{s,\dist_1}) < 1-c \quad \vee \quad \trans(s,\dist_1,a_2^{s,\dist_1})(Z_{1,j,i-1})<1-c\big)
\end{align*}
or;
\item {\bf Case~$(2)$.} There is a $a_2^{s,\dist_1}$ such that \begin{align*}
&\big(\trans(s,\dist_1,a_2^{s,\dist_1})(W_{i})=0 \big) \quad \wedge \quad \big(\trans(s,\dist_1,a_2^{s,\dist_1})(X_{i,j-1})<1 \big)
\end{align*}
or;
\item  {\bf Case~$(3)$.} There is a  $a_2^{s,\dist_1}$ such that 
\begin{align*}
& \big(c \cdot \trans(s,\dist_1,a_2^{s,\dist_1})(W_{i}) \leq  \trans(s,\dist_1,a_2^{s,\dist_1})(\overline{W}_{i-1}) \big) \quad \wedge \quad \big(\trans(s,\dist_1,a_2^{s,\dist_1})(W_{i})>0\big) \enspace .
\end{align*}
\end{itemize}
We will use the above three cases explicitly in our proof.

\smallskip\noindent{\bf The counter-strategy $\sigma_2$ given ${\sigma_1}$.}
Fix an arbitrary finite-memory strategy $\sigma_1$ for player~1.
Let the finite set of memories used by $\sigma_1$ be $\Mem$. 
A counter-strategy $\sigma_2$ given ${\sigma_1}$ is defined as follows:
given the current state $s$ of the game, and current memory state
$m \in \Mem$, let $\dist_1$ be the distribution played by 
$\sigma_1$.
The strategy $\sigma_2$ for player~2 plays an action $a_2^{s,\dist_1}$ 
(if there are more than one option for $a_2^{s,\dist_1}$, pick one arbitrarily) 
with probability one. If $\sigma_1$ uses memory set $\Mem$, then $\sigma_2$
also uses the memory set $\Mem$ and has the same memory update function.

\smallskip\noindent{\bf Upper bound on value ensured by $\sigma_1$.}
We will show that given $\sigma_1$ and the counter-strategy $\sigma_2$ 
the mean-payoff value is at most $1-\frac{c^n}{n}$ for all 
starting states in $\ov{W}^*$.
Also note that the upper bound on the value is independent of the size of
the memory, 
and this shows that in the complement of $W^*$ the values achievable 
by finite-memory strategies is strictly bounded below~1.

\smallskip\noindent{\bf The game $G\times \Mem$.}
Consider the game $G$ and a product with any deterministic automaton $A$ with state
space $Q$. Every state in $\ov{W}^* \times Q$ in the synchronous product 
game belongs to the set $\ov{W}^*$ computed in the product game and the ranks
also coincide (by the properties of $\mu$-calculus formulae).
Consider the synchronous product game $G\times \Mem$ of $G$ and the memories of $\sigma_1$ and $\sigma_2$,  
where states corresponds to pairs in $(S,\Mem)$ and where $\trans((t,m),a,b)((t',m'))=\trans(s,a,b)(t)$ where 
$\sigma_1^u(t,a,b,m)=m'$ and hence also $\sigma_2^u(t,a,b,m)=m'$. In this game the strategy corresponding to $\sigma_1$ can be interpreted as a 
stationary strategy $\sigma_1'$. Also the strategy corresponding to $\sigma_2$ can be interpreted as a 
positional strategy $\sigma_2'$ in $G \times \Mem$.
Hence given the strategies $\sigma_1$ and $\sigma_2$ we can obtain a Markov chain on 
$G \times \Mem$, considering the stationary strategies $\sigma_1'$ and $\sigma_2'$ on the
product game.
Also for all states $t\in \overline{W}^*$ in $G$, all the corresponding states $(t,m)$ in 
$G\times \Mem$ belong to $\ov{W}^*$ computed in the product game and has the same 
rank as $t$ in $G$.
%reachable from the starting 
%state the corresponding state $t'$ is in $\overline{W}^*$ in $G\times \Mem$.
%Let $P^s_{\Mem}$ be a play in $\Mem$ that starts in some state $s$, when player~1 follows $\sigma_1'$ and player~2 follows $\sigma_2'$. Since $\sigma_1'$ is stationary and $\sigma_2'$ is positional, for any set of states $T$ in $G\times \Mem$, with probability~1, after a bounded number of steps, either $P^s_{\Mem}$ reaches a state in $T$ or a state $t'$ from which no state in $T$ can be reached. Hence, all states that are reached, are reached the first time after a bounded number of steps with probability~1.

\smallskip\noindent{\bf Upper bound on value ensured by $\sigma_1$.}
We show that given $\sigma_1$ and the counter-strategy $\sigma_2$ 
the mean-payoff value is at most $1-\frac{c^n}{n}$ for all 
starting states in $\ov{W}^*$.
The proof is split in the following cases, and the basic intuitive
arguments are as follows:
\begin{compactenum}
\item Consider a play that starts in $\ov{X}_{1,1}$.
We show that the play always stays in $\ov{X}_{1,1}$ and Case~(1) is 
satisfied always. 
Thus we show that from every state there is a path of length at 
most $n$  where reward~0 occurs at least once.

\item For a play that starts in $\ov{W}_1 \setminus \ov{X}_{1,1}$, we always
satisfy either Case~(1) or Case~(2). 
First we establish that the event of Case~(2) being satisfied infinitely 
often has probability~0.
Hence from some point on Case~(1) is always satisfied, and then the argument 
is similar to the previous case.

\item Finally we consider a play that starts in $\ov{W}^*\setminus \ov{W}_1$.
Whenever Case~(3) is satisfied, and if the current state is $\ov{W}_j$, for $j> 1$,
then $\ov{W}_{j-1}$ is reached with positive probability in one-step. 
We establish that either (i)~we are similar to the previous case 
or (ii)~reach $W$ or $\ov{W}_1$ and the probability to reach $\ov{W}_1$
is at least $c^n$.
\end{compactenum}
Intuitively, in the first two cases above, we reach a recurrent class
that consists of states satisfying Case~(1) only, and in such recurrent 
classes the mean-payoff value is at most $1-c^n$.
In the last case, either we reach a recurrent class of the above type,
or whenever we satisfy Case~(3) with positive probability $c>0$ 
we make progress to a recurrent class of the above type.
The above case analysis establish the proof. 
We now present the formal proof.

\begin{lemma}\label{lemm:final_val2}
Fix an arbitrary finite-memory strategy $\sigma_1$ and consider
the counter-strategy $\sigma_2$ given $\sigma_1$.  
For all states in $\ov{W}^*$ we have that $\Exp_s^{\sigma_1,\sigma_2}[\LimSupAvg] \leq 1-\frac{c^n}{n}$.
\end{lemma}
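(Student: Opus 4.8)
The plan is to fix the finite-memory strategy $\sigma_1$ together with its memory set $\Mem$ and the counter-strategy $\sigma_2$, and to pass to the synchronous product game $G\times\Mem$ already described, in which $\sigma_1$ is a stationary strategy $\sigma_1'$ and $\sigma_2$ a positional strategy $\sigma_2'$; fixing both gives a Markov chain $\mathcal M$, and every product state $(t,m)$ with $t\in\ov{W}^*$ lies in the $\ov{W}^*$ of the product game with the \emph{same} rank $\ov{\rank}(t)=(i,j,k)$. Since the mean-payoff depends only on the recurrent classes (Markov property~\ref{markov property recurrent class}) and is prefix-independent, $\LimSupAvg$ equals a constant $\mathrm{mp}(L)$ almost surely on each recurrent class $L$, so $\Exp_s^{\sigma_1,\sigma_2}[\LimSupAvg]=\sum_L\Pr_s(\Reach(L))\cdot\mathrm{mp}(L)$. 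Because all ranks are inherited from $G$, everything below is independent of $|\Mem|$, and it suffices to establish two quantitative facts: (i) every recurrent class $L$ with $L\subseteq\ov{W}^*$ has $\mathrm{mp}(L)\le 1-\frac{c^n}{n}$; and (ii) from every $s\in\ov{W}^*$ such a \emph{bad} class is reached with probability at least $c^n$. Granting (i) and (ii) and bounding $\mathrm{mp}(L)\le 1$ on the remaining classes, I get $\Exp_s^{\sigma_1,\sigma_2}[\LimSupAvg]\le 1-\frac{c^{2n}}{n}$, and re-scaling the constant $c$ gives the stated bound.

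For (i) I would first show that a recurrent class $L\subseteq\ov{W}^*$ is \emph{trapped} at a single level: the $i$-component of the rank cannot increase without passing through $W_i$ (only Case~$(3)$ allows $\trans(\cdot)(W_i)>0$), the $j$-component decreases exactly when Case~$(2)$ fires ($\trans(\cdot)(X_{i,j-1})<1$), and since all these rank-decreases are one-way while $L$ is communicating, $(i,j)$ is constant on $L$ and every state of $L$ satisfies Case~$(1)$. For such a Case~$(1)$ state either $\ExpRew<1-c$, so a reward-$0$ transition is taken with probability $>c$ in one step, or $\trans(\cdot)(Z_{i,j,k-1})<1-c$, so with probability $>c$ the play moves to a state of strictly smaller $k$-rank inside $L$. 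Iterating the second alternative strictly lowers $k$, which ranges over at most $n$ values and bottoms out at $Z_{i,j,0}=S$ where only the first alternative survives; hence from every state of $L$ there is a path of length at most $n$ and probability at least $c^n$ ending in a reward-$0$ transition. A block argument over windows of $n$ steps then forces the stationary frequency of reward-$0$ in $L$ to be at least $c^n/n$, whence $\mathrm{mp}(L)\le 1-\frac{c^n}{n}$.

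For (ii) I would run the three-region argument sketched before the lemma. In $\ov{X}_{1,1}$ (so $i=j=1$) Case~$(3)$ is impossible since $\ov{W}_0=\emptyset$ and Case~$(2)$ is impossible since $X_{1,0}=S$, so only Case~$(1)$ occurs, the play never leaves $\ov{X}_{1,1}$, and a bad class is reached with probability $1$; in $\ov{W}_1\setminus\ov{X}_{1,1}$ (still $i=1$) Case~$(3)$ remains impossible, Case~$(2)$ lowers $j$ only finitely often, so again a bad class inside $\ov{W}_1$ is reached with probability~$1$. Thus reaching $\ov{W}_1$ at all traps the play in a bad class. For $s\in\ov{W}^*\setminus\ov{W}_1$ ($i\ge 2$) the only escape from the bad region is upward through $W_i$, and whenever Case~$(3)$ holds the inequality $c\cdot\trans(\cdot)(W_i)\le\trans(\cdot)(\ov{W}_{i-1})$ guarantees that, conditioned on leaving level $i$, the play descends to $\ov{W}_{i-1}$ with probability at least $c/(1+c)$. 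Applying Markov property~\ref{markov property rs3} across the at most $n$ levels (a biased gambler's-ruin estimate) lower-bounds the probability of reaching $\ov{W}_1$, and hence a bad recurrent class, by $c^n$, which is exactly~(ii).

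The main obstacle is the structural step inside (i): proving that a recurrent class meeting $\ov{W}^*$ is confined to one $(i,j)$-level and consists only of Case~$(1)$ states, i.e. correctly exploiting the \emph{one-way} nature of the rank-decreases produced by the negated $\EXP$ conditions and ruling out that Case~$(2)$ or Case~$(3)$ states (which could re-enter higher levels or even $W^*$) sit in a recurrent class contained in $\ov{W}^*$. The secondary difficulty is bookkeeping the constants through the nested descent so that the two exponential losses compose into a single $c^n$ rather than something worse, while the window length $n$ supplies the $1/n$ factor; both are quantitative but delicate, since $\sigma_1$'s own action probabilities may be arbitrarily small and must be absorbed into the transition probabilities of $\mathcal M$ rather than into $c$.
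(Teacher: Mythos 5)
Your proposal is correct in substance and reuses the paper's core machinery: the product chain $G\times\Mem$ with rank inheritance, the trichotomy of Cases~(1)--(3), descent paths of length at most $n$ in which each step has probability at least $c$ and a reward-$0$ step occurs, and the Case-(3) inequality giving a conditional descent probability of roughly $c$ per level (your $\frac{c}{1+c}$ is in fact the accurate conditional bound; the paper states ``at least $c$'', which washes out in the constants). Where you genuinely diverge is the final accounting. You separate the argument into (i) all recurrent classes inside $\ov{W}^*$ are bad and (ii) a bad class is reached with probability at least $c^n$, and then multiply, obtaining $1-\frac{c^{2n}}{n}$ and rescaling $c$. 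The paper instead interleaves the two exponential losses: each level-descent (probability at least $c$, repeated $k$ times) permanently retires at least one state from the rank budget, so the bad class reached after $k$ descents satisfies $n-k\geq \ov{\rank}(i')$ and has mean-payoff at most $1-\frac{c^{n-k}}{n-k}$; the product $c^k\cdot\frac{c^{n-k}}{n-k}\geq\frac{c^n}{n}$ then gives the stated bound with the original $c$ and no rescaling. Your rescaling is nevertheless legitimate: the Case conditions are downward-closed in $c$ (if they hold for $c$ they hold for every $0<c'\leq c$), the enclosing claim of Section~\ref{sec:incl2} is existential in $c$, and for the only downstream use (Lemma~\ref{lemm:incl2}) any memory-independent bound strictly below $1$ suffices. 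Note also that your claim~(i) is stronger than what the paper establishes --- the paper only analyses the classes actually reached, via the sets $\RS$ and $\ov{R}_i$ --- but it is true and arguably a cleaner organization.

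One sub-step of your (i) needs repair as written. It is not true that the rank-decreases are ``one-way'': a Case-(2) state at rank $(i,j)$ is only forced to put positive mass on $\ov{X}_{i,j-1}$, and its residual mass may go to strictly larger $j$-ranks at the same level; similarly Case~(3) moves mass upward through $W_i$ \emph{without} leaving $\ov{W}^*$, since $W_i\cap\ov{W}^*$ contains all states of level greater than $i$, so ``passing through $W_i$'' does not by itself exit the bad region. Your conclusion survives via a minimal-rank argument: Case~(3) is impossible at level~$1$ (because $\ov{W}_0=\emptyset$) and Case~(2) is impossible at $j$-rank~$1$ (because $X_{i,0}=S$), so the states of minimal rank present in a recurrent class $L\subseteq\ov{W}^*$ must be Case-(1) states, whose one-step successors stay at that minimal rank; closedness of $L$ together with communication then forces all of $L$ to sit at that single $(i,j)$ and consist of Case-(1) states, after which your $k$-descent and window argument goes through unchanged.
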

\begin{proof}
In game $G\times \Mem$, let $\ov{\CC}_i$ be the set of states where Case~$(i)$ 
is satisfied\footnote{Note that $\ov{\CC}_i\neq (S\setminus \CC_i)$, for $i\in \set{1,2,3}$, in general, 
where $\CC_i$ is the set defined in Subsection~\ref{sec:incl1}, but this notation is used because 
$\ov{\CC}_1,\ov{\CC}_2,\ov{\CC}_3$ serve similar roles for properties of $\ov{W}^*$ as $\CC_1,\CC_2,\CC_3$ 
did for properties of $W^*$}.
That is $\ov{\CC}_1,\ov{\CC}_2,$ and $\ov{\CC}_3$ satisfy 
Case~(1), Case~(2), and Case~(3), respectively. 
We consider the Markov chain given $\sigma_1$ and $\sigma_2$, and consider a 
play $P^s$ starting from state $s$.
We will consider three cases to establish the result.

\begin{enumerate}
%Consider the probability space $\omega$ induced by $\sigma_1$ and $\sigma_2$, when the play starts in $s\in \ov{W}^*$. Let $P^s$ be any play in $\omega$.  

\item \noindent{\bf Plays starting in $s\in \overline{X}_{1,1}$.} 
Recall that $\ov{X}_{1,1}$ is the complement of $X_{1,1}$.
Consider state $s$ in $\overline{Z}_{1,1,k}$, for some $k\geq 1$ 
(that is: states in $\overline{X}_{1,1}$). 
Since $W_0=X_{1,0}=S$, we have that the play corresponding to $P^s$ in $G\times \Mem$ 
is always in $\ov{\CC}_1$ (note that only in Case~(1) do we have probability~0 to
go to $\ov{W}_0$ and $\ov{X}_{1,0}$). 
Hence the play $P^s$ always stays in $\overline{X}_{1,1}$.
Hence, from states in $\overline{Z}_{1,1,k}$, if player~1 plays 
according to $\sigma_1$ and player~2 plays $\sigma_2$, with probability $c$ 
we either $(i)$~reach a state in $\overline{Z}_{1,1,k-1}$, or 
$(ii)$~get a reward of $0$. 
%In both cases we have probability~0 to reach either $W_1$ or $X_{1,1}$. 
Since $Z_{1,1,0}=S$ we must get a reward of $0$ with at least probability $c$ when in $\overline{Z}_{1,1,1}$. 
Hence, for all states in $\overline{X}_{1,1}$, given player~1 follows $\sigma_1$ and player~2 follows $\sigma_2$, 
there is a path of play of length at most $\ov{\rank}(1,1)>\ov{\rank}(1)$ 
where each step happens with probability at least $c$ and the reward~0 happens at 
least once. 
Thus, for any state $s$ in $\overline{X}_{1,1}$, the play $P^s$ stays in $\overline{X}_{1,1}$ 
and gives a expected average reward of at most $1-\frac{c^j}{j}$, with probability~1, where $j=\ov{\rank}(1)$.
In other words, we have established the following property: in the Markov chain
all recurrent classes that intersect with $(\ov{X}_{1,1}\times \Mem)$ 
are contained in  $(\ov{X}_{1,1}\times \Mem)$ and have mean-payoff at most 
$1-\frac{c^n}{n}$.  

\item \noindent{\bf Plays starting in $s\in (\overline{W}_1\setminus \overline{X}_{1,1})$.} 
Consider now state $s$ in $(\overline{W}_1\setminus \overline{X}_{1,1})$. 
Since $W_0=S$, we have that the play $P_{\Mem}^s$, corresponding to $P^s$ 
in $G\times \Mem$, is always in $(\ov{\CC}_1\cup \ov{\CC}_2)$ 
(note that in Case~(3) we have positive probability to goto $W_0$). This is the only property of $(\overline{W}_1\setminus \overline{X}_{1,1})$ we will use.
Notice that this ensures that $P^s$ always stays in $\overline{W}_1$. 
Let $\RS$ be the set of states  from which no state in $\ov{\CC}_2$ can be reached. 
There are now two cases, either $P_{\Mem}^s$ reaches a state in $\RS$ or it does not.
\begin{itemize}
\item {\bf The play $P_{\Mem}^s$ reaches a state in $\RS$.} 
Let $j=\ov{\rank}(1)$.
Then the mean-payoff is at most $1-\frac{c^j}{j}$ after reaching $\RS$, 
by a argument similar to the one for states in $\overline{X}_{1,1}$. 
Therefore, in this case, the mean-payoff of $P^s$ is at most 
$1-\frac{c^j}{j}$, 
%because $\RS$ is reached in finite time with probability~1 (by reachability in a Markov chain), 
since the mean-payoff is independent of the finite-prefix.
%%as we have previously seen.
\item {\bf The play $P_{\Mem}^s$ does not reach a state in $\RS$.}
In this case, we must visit states in $\ov{\CC}_2$ infinitely often with 
probability~1, by Markov property~\ref{markov property RSinf}.
Whenever we are in a state $s'$ in $\ov{\CC}_2\cap 
((\overline{X}_{1,j} \times \Mem)\setminus (\overline{X}_{1,j-1} \times \Mem))$, 
we have probability at least $p\cdot \delta_{\min}$ to reach $(\overline{X}_{1,j-1}\times\Mem)$ in one-step 
where $\frac{1}{p}$ is the maximum patience of any distribution played by $\sigma_1$.
Whenever we are in a state $s'$ in $\ov{\CC}_1\cap ((\overline{X}_{1,j}\times \Mem)\setminus 
(\overline{X}_{1,j-1}\times \Mem))$, we have probability~0 to leave $((\overline{X}_{1,j}\times\Mem)\setminus (\overline{X}_{1,j-1}\times\Mem))$ 
in one-step.
Therefore we must reach $(\ov{X}_{1,1}\times \Mem)$ in a finite number of steps with probability~1 and 
from $(\ov{X}_{1,1}\times \Mem)$ we get a mean-payoff of at most $1-\frac{c^j}{j}$, where $j=\ov{\rank}(1)$, 
as we have already established  in the first item\footnote{In fact, alternatively we can prove this case using contradiction, 
since $(\ov{X}_{1,1}\times \Mem)\subseteq \ov{\CC}_1$ and therefore $(\ov{X}_{1,1}\times \Mem)\subseteq \RS$, since $(\ov{X}_{1,1}\times \Mem)$ cannot be left
in the Markov chain}.
%Also, by Markov property~\ref{markov property finite step reachability}, since we reach $(\ov{X}_{1,1}\times \Mem)$, we do so in a finite number of steps with probability~1. 
%Therefore, by ignoring the finite prefix (since mean-payoff is independent of finite
%prefixes) before reaching $(\ov{X}_{1,1}\times \Mem)$, we see that the mean-payoff is 
%at most $1-\frac{c^n}{n}$.
\end{itemize} 

Therefore, in both cases we get a mean-payoff of at most $1-\frac{c^j}{j}$ with probability~1, where $j=\ov{\rank}(1)$,
i.e., all recurrent classes have mean-payoff of at most $1-\frac{c^j}{j}$.

\item \noindent{\bf Plays starting in $s\in (\overline{W}^*\setminus \overline{W}_1)$.} 
Consider now state $s$ in $(\overline{W}^*\setminus \overline{W}_1)$. 
Consider the play $P^s$ in $G$ and the corresponding play $P^s_{\Mem}$ in $G\times \Mem$. 
For $i \geq 1$, let $L_i = W_{i} \cup \overline{W}_{i-1}$ and note that $\ov{L}_i= \overline{W}_{i} \setminus \overline{W}_{i-1}$.
Let $\ov{R}_i$ be the set of states in $\ov{L}_i$ from which no state in $\ov{\CC}_3 \cap \ov{L}_i$ 
is reachable; (note that $\ov{R}_i \subseteq \ov{L}_i \cap (\ov{\CC}_1 \cup \ov{\CC}_2)$).
Note that from $\ov{L}_i$, the set $\ov{L}_i$ can be left only from states in $\ov{\CC}_3 \cap \ov{L}_i$.
We now consider two sub-cases.
\begin{itemize}

\item We first consider the case where we reach $\ov{R}_i$.
Let $j=\ov{\rank}(i)$.
In this case, the mean-payoff is at most $1-\frac{c^j}{j}$ by an argument similar to the argument 
for $s$ in $\overline{W}_1\setminus \overline{X}_{1,1}$.
The argument for $s$ in $\overline{W}_1\setminus \overline{X}_{1,1}$ only uses that 
states in  $\ov{\CC}_1\cup \ov{\CC}_2$ are visited. 
Once $\ov{R}_i$ is reached we are guaranteed that only states in $\ov{R}_i$ are 
visited, and hence the recurrent classes in $\ov{R}_i$ has mean-payoff 
of at most  $1-\frac{c^n}{n}$.

\item If $\ov{R}_i$ is not reached, then since from every state $\ov{\CC}_3 \cap \ov{L}_i$ 
we have positive transition probability to $L_i$, it follows that $L_i$ is reached
with probability~1, by Markov property~\ref{markov property rs2new2}.
But if we reach either $W_{i}$ or $\overline{W}_{i-1}$, we have a probability of at least $c$ 
that it will be $\overline{W}_{i-1}$ (since it can only be done whenever $P^s_{\Mem}$ is in 
$\ov{\CC}_3 \cap L_i$, which ensures so). 

\end{itemize} 
Each time we repeat the second case, all states in $\ov{L}_i$, will never be visited again, in the worst case. Since each set $\ov{L}_i$ must contain atleast one state, we see that, if we repeat the second case $k$ times and thereafter enter $\ov{R}_{i'}$ (and are thus in the first case), then $n-k\geq \ov{\rank}(i')$. We have a probability of $c^k$ to follow such a play and we then get value at most $1-\frac{c^{n-k}}{n-k}$. Even if we got mean-payoff 1 with the remaining probability of $1-c^k$, we still have a expected mean-payoff of at most $1-\frac{c^n}{n-k}$. Thus, we see that in the worst case $k=0$ with probability 1, in which case we get mean-payoff at most $1-\frac{c^n}{n}$.
\end{enumerate}
%%Since $\sigma_1$ is an arbitrary finite-memory strategy, the desired result follows for all finite-memory strategies.
The desired result follows.
\end{proof}

Lemma~\ref{lemm:final_val2} implies the following inclusion.
\begin{lemma}\label{lemm:incl2}
We have $\val_1(\LimSupAvg(\cost),\bigstra_1^F)\subseteq W^*$.
\end{lemma}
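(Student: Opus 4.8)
The plan is to derive the inclusion as an immediate corollary of Lemma~\ref{lemm:final_val2}, by unwinding the definition of the value-1 set and arguing by contrapositive. Recall that $\val(\LimSupAvg(\cost),\bigstra_1^F)(s) = \sup_{\sigma_1 \in \bigstra_1^F} \inf_{\sigma_2 \in \bigstra_2} \Exp_s^{\sigma_1,\sigma_2}[\LimSupAvg]$ and that $\val_1(\LimSupAvg(\cost),\bigstra_1^F)$ is by definition the set of states where this value equals $1$. Since $\ov{W}^* = S \setminus W^*$, it suffices to show that every state $s \in \ov{W}^*$ satisfies $\val(\LimSupAvg(\cost),\bigstra_1^F)(s) < 1$, i.e.\ $s \notin \val_1(\LimSupAvg(\cost),\bigstra_1^F)$; this is exactly the contrapositive of the desired inclusion.

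First I would fix an arbitrary state $s \in \ov{W}^*$ and an arbitrary finite-memory strategy $\sigma_1 \in \bigstra_1^F$, and instantiate the counter-strategy $\sigma_2$ given $\sigma_1$ that was constructed just before Lemma~\ref{lemm:final_val2}. This $\sigma_2$ reuses the memory set $\Mem$ of $\sigma_1$ together with the same memory update function, so it is itself a finite-memory strategy and in particular a legitimate member of $\bigstra_2$. Applying Lemma~\ref{lemm:final_val2} gives $\Exp_s^{\sigma_1,\sigma_2}[\LimSupAvg] \le 1 - \frac{c^n}{n}$. Because the inner infimum ranges over \emph{all} player-2 strategies, it is bounded above by the value achieved against this particular $\sigma_2$, so $\inf_{\sigma_2' \in \bigstra_2} \Exp_s^{\sigma_1,\sigma_2'}[\LimSupAvg] \le 1 - \frac{c^n}{n}$.

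The one point that must be emphasized — and the only place where care is needed — is the uniformity of the bound: the constant $c>0$ depends only on the game $G$ and on $n = |S|$, and is \emph{independent} of the choice of $\sigma_1$ and, crucially, of the size of its memory $\Mem$, as Lemma~\ref{lemm:final_val2} explicitly guarantees. This is what allows the bound to survive the outer supremum over all finite-memory strategies (whose memory may be arbitrarily large): taking $\sup$ over $\sigma_1 \in \bigstra_1^F$ preserves the inequality, yielding $\val(\LimSupAvg(\cost),\bigstra_1^F)(s) \le 1 - \frac{c^n}{n} < 1$. Hence $s \notin \val_1(\LimSupAvg(\cost),\bigstra_1^F)$, and since $s \in \ov{W}^*$ was arbitrary we conclude $\val_1(\LimSupAvg(\cost),\bigstra_1^F) \subseteq W^*$. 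There is no genuinely hard step remaining here, all the substantive work having been carried out in Lemma~\ref{lemm:final_val2}; combined with Lemma~\ref{lemm:incl1} and the trivial inclusions $\bigstra_1^S \subseteq \bigstra_1^F$ and $\val_1(\LimInfAvg) \subseteq \val_1(\LimSupAvg)$, this closes the chain establishing $W^* = \val_1(\LimInfAvg(\cost),\bigstra_1^F)$.
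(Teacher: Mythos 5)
Your proof is correct and follows exactly the paper's route: the paper derives Lemma~\ref{lemm:incl2} as an immediate consequence of Lemma~\ref{lemm:final_val2}, using the same counter-strategy construction and the same observation that the bound $1-\frac{c^n}{n}$ is uniform in $\sigma_1$ and its memory size. You merely make explicit the unwinding of $\sup$/$\inf$ and the contrapositive that the paper leaves implicit.
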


\section{Improved Rank-Based Algorithm}
In this section we present an improved rank-based algorithm, which is 
based on the same principle as the small-progress measure algorithm~\cite{Jur00}
(for parity games).
While the naive computation of the $\mu$-calculus formula for the value~1 
set requires $O(n^4)$ iterations, the improved algorithm will require $O(n^2)$ 
iterations.

\smallskip\noindent{\bf Basic idea.}
The basic idea of the algorithm is to consider the ranking function $\rank$ from Section~\ref{sec:incl1} 
and use that to obtain an algorithm. 
Notice that $\rank(s)$ for $s\in W^*$ is always a pair $(i,j)$ such that $2\leq i+j\leq n+1$ and where $1\leq i,j \leq n$. 
We see that for any number $k$ there are $k-1$ pairs $(i,j)$ such that $i+j=k$ and such that $1\leq i,j\leq k-1$. 
Hence, there are $\sum_{k=1}^n k=\frac{n(n+1)}{2}$ such pairs $(i,j)$ such that $2\leq i+j\leq n+1$ and where $1\leq i,j\leq n$. 
Furthermore we also have a special rank $\top$ for not being in $W^*$. 
The ranks are lexicographically ordered as follows
\[
(1,1) < (1,2) < \dots < (1,n) < (2,1) < \dots < (n,1) < \top \enspace .
\]
We will thus say that $(i,j) < \top$ for all $i,j$ and 
$(i,j)<(i',j')$ if $i<i'$ or $i=i'$ and $j< j'$; 
(and for $(i,j)\leq(i',j')$ we change $j \leq j'$). 
To distinguish with the ranking function in Section~\ref{sec:incl1},
we denote the ranking function of the improved algorithm as 
$\rank'(s)$.

\smallskip\noindent{\bf Definition of matrix.}
Consider a given assignment of ranks to states.
Let $s$ be some state of rank $\rank'(s)\neq \top$ and therefore of rank $(i,j)$ for some $i$ and $j$; and 
also consider a state $s'$ of rank $(i',j')$. 
We define some sets, $U_s,Y_s,Z_s,X_s,W_s$ as follows: 
\begin{enumerate}
\item The state $s'$ is in $U_s$, if $i>i'$.
\item The state $s'$ is in $Y_s$, if $i>i'$ or $i'=i$ and $j>j'$.
\item The state $s'$ is in $Z_s$, if $i> i'$ or $i'=i$ and $j\geq j'$.
\item The state $s'$ is in $X_s$, if $i\geq i'$.
\item The state $s'$ is in $W_s$ independent of $s$.
\end{enumerate}
Also if a state $s''$ has rank $\top$, then it is in the set $\ov{W}_s$. 
This set also does not depend on $s$.
Let $M^s_{a_1,a_2}\in \set{\ov{W}_s,U_s,W_s,Y_s,X_s,Z_s^1,Z_s^0}$, for $a_1 \in \mov_1(s)$ and $a_2 \in \mov_2(s)$, 
be the matrix similar to the matrix $M$ from Section~\ref{sec:one-step}, 
except that instead of set $\ov{W}$ use $\ov{W}_s$ and similar for $U$, $Y$, $Z$, $X$ and $W$. 

\smallskip\noindent{\bf The \algo\ algorithm.}
We will refer to our algorithm as \algo\ and the description is as follows:
\begin{enumerate} 
\item For each state $s$ set $\rank'(s)\leftarrow (1,1)$
\item Let $i\leftarrow 0$ and $S^0\leftarrow S$.
\item (Iteration) While $S^i$ is not the empty set:
\begin{enumerate}
\item Let $Q^{i} = S^i\cup \set{s \mid \exists a_1 \in \mov_1(s), \exists a_2 \in \mov_2(s).\ \dest(s,a_1,a_2) \cap S^i \neq \emptyset}$
be the set of states in $S^{i}$ and their predecessors.
\item For each state $s \in Q^{i}$ such that $\rank'(s)\neq \top$, run \algopred\ on $M^{s}$ 
(if $M^s$ has not changed since the last time \algopred\ was run on $M^s$, then use the result from the last time instead of rerunning \algopred). 
Let $S^{i+1}$ be the set of states which \algopred\ rejected. 
\item Increment the rank (according to the lexicographic ordering) of all states in $S^{i+1}$.
\item Let $i\leftarrow i+1$.
\end{enumerate}
\item Return the set of states which does not have rank $\top$.
\end{enumerate}

\subsection{Running time of algorithm \algo}
We now analyze the running time of the algorithm.
We first analyze the work done for updating matrices $M^s$ and 
then analyze the work done for \algopred\ computation.
%a state, given a particular 
%rank, and then analyze the overall running time.
\begin{itemize}
\item \emph{Work to update matrix.} 
For a state $s$ of rank $(i,j)$, notice that we do not need to recalculate the entire $M^s$ whenever some successor $s'$ of $s$ changes rank, 
but only the entries $(a_1,a_2)$ such that $s'\in \dest(s,a_1,a_2)$. 
Also notice that we do not need to change $M^s$ at all whenever $s'$ changes rank to ranks other than in  $\set{(i,1),(i,j),(i,j+1),(i+1,1),\top}$. 
Hence, as long as $s$ has some rank $(i,j)$, we can do all updates of $M^s$ in time $O(\sum_{a\in \mov_1(s),b\in \mov_2(s)}|\supp(s,a,b)|)$.
We also recalculate $M^s$ whenever $s$ changes rank, and since each state has at most $O(n^2)$ different ranks 
therefore we use $O(n^2 \cdot \sum_{s\in S}\sum_{a\in \mov_1(s),b\in \mov_2(s)}|\supp(s,a,b)|)$ 
time to do all updates of $M^s$ for all states $s$. 
\item \emph{Work of \algopred.} 
%We use the same amount of time to calculate if a successor has changed rank to some rank in $\set{(i,1),(i,j),(i,j+1),(i+1,1),\top}$, since there are $|\supp(s,a,b)|$ possible successors of $s$ when player~1 plays $a$ and player~2 plays $b$. 
%%Each state has $O(n^2)$ different ranks. By summing over all we get the result.
Note that each entry of $M^s$ can take at most~7 different values, and as long as $s$ has a fixed rank 
each update makes some entry worse than before.
Hence as long as $s$ has some fixed rank $(i,j)$ we can do no more than $6\cdot |\mov_1(s)|\cdot |\mov_2(s)|$ updates of $M^s$.
% because each update makes some entry worse and each entry can only be made worse at most 5 times. 
Hence we run \algopred\ at most $\frac{n(n+1)}{2}\cdot 6\cdot |\mov_1(s)|\cdot |\mov_2(s)|$ times for a fixed $s$. 
\end{itemize}
Therefore, we get a total running time of $O(n^2\cdot \sum_{s\in S}(|\mov_1(s)|^3\cdot |\mov_2(s)|^3+\sum_{a_1\in \mov_1(s),a_2\in \mov_2(s)}|\supp(s,a_1,a_2)|))$, 
using Lemma~\ref{lemm: distribution algorithm time}.

\subsection{Proof of correctness of algorithm \algo}
The correctness proof is similar to the results of~\cite{Jur00}.
The proof of~\cite{Jur00} shows the equivalence of $\mu$-calculus
formula and a rank-based algorithm (called small-progress measure 
algorithm) for parity games; and the crucial argument of the correctness
was based on the fact that the predecessor operator is 
monotonic. 
Our correctness proof is similar and uses that $\EXP$ is monotonic.
We just present the proof of one inclusion and the other inclusion 
is similar.
For simplicity we will say that the rank of $s$ is $\rank(s)=\top$ 
if $s\in \ov{W}^*$. Let $\widetilde{W}^*$ be the output of the algorithm. 
We show that $\widetilde{W}^*=W^*$. %%To do so we will show that $\rank'(s)=\rank(s)$, thus the statement follows. 

\smallskip\noindent{\bf $\widetilde{W}^* \subseteq W^*: \rank'(s)\leq \rank(s)$.} 
We only need to show the statement for $\rank(s)\neq \top$ 
since otherwise the statement follows by definition. Hence,
assume towards contradiction that $\rank'(s)> \rank(s)$ and let  $\rank(s)=(i,j)$.
Also, we can WLOG assume that $s$ gets assigned a rank higher than $\rank(s)$ in the first iteration for which any state $s'$ gets assigned rank higher than $\rank(s')$ by the algorithm. Therefore in that iteration all states $s'$ are such that the rank assigned by the algorithm is at most $\rank(s')$ and $s$ has rank $\rank(s)$ assigned. Therefore $W^*\subseteq W_s$, $U_{i-1}\subseteq U_s$, $U_{i}\subseteq X_s$, $Y_{i,j-1}\subseteq Y_s$, $Y_{i,j}\subseteq Z_s$. But $s$ is in $\EXP(W^*,U_{i-1},U_{i},Y_{i,j-1},Y_{i,j})$ by definition since $s$ is such that $\rank(s)=(i,j)$. By monotonicity of $\EXP$ we have that 
$s$ is also in $\EXP(W_s,U_s,X_s,Y_s,Z_s)$, contradicting that $s$ changes rank.

\begin{lemma}\label{lemm:algo}
The algorithm \algo\ correctly computes the set $\val_1(\LimInfAvg(\cost),\bigstra_1^F)$ of states 
in time $O(n^2\cdot \sum_{s\in S}(|\mov_1(s)|^3\cdot |\mov_2(s)|^3+\sum_{a_1\in \mov_1(s),a_2\in \mov_2(s)}|\supp(s,a_1,a_2)|))$.
\end{lemma}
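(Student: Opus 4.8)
The plan is to prove Lemma~\ref{lemm:algo} by assembling the three pieces that the section has already set up: correctness of the rank assignment, the running time analysis, and the equivalence of $W^*$ with the value~1 set established in the previous subsections. First I would note that Lemma~\ref{lemm:incl1} and Lemma~\ref{lemm:incl2} together give $W^* = \val_1(\LimInfAvg(\cost),\bigstra_1^F) = \val_1(\LimSupAvg(\cost),\bigstra_1^F)$, so it suffices to show that the algorithm \algo\ outputs exactly $W^*$, i.e.\ $\widetilde{W}^* = W^*$, and that it runs within the stated time bound.

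For the correctness part, the strategy is to show the two inclusions $\widetilde{W}^* \subseteq W^*$ and $W^* \subseteq \widetilde{W}^*$ via the two rank inequalities $\rank'(s) \leq \rank(s)$ and $\rank'(s) \geq \rank(s)$ (where $\rank(s) = \top$ iff $s \in \ov{W}^*$). The excerpt has already carried out the first inclusion in full: assuming toward contradiction that some state receives a rank strictly exceeding its true $\mu$-calculus rank, one takes the \emph{first} iteration at which this happens, uses that at that moment all the algorithm's sets $W_s, U_s, X_s, Y_s, Z_s$ contain the corresponding true fixpoint sets $W^*, U_{i-1}, U_i, Y_{i,j-1}, Y_{i,j}$, and then invokes \emph{monotonicity of $\EXP$} together with Lemma~\ref{lemm: distribution algorithm accept}/Lemma~\ref{lemm: distribution algorithm reject} (which guarantee \algopred\ correctly decides membership in $\EXP$) to conclude $s \in \EXP(W_s,U_s,X_s,Y_s,Z_s)$, so \algopred\ would not have rejected $s$ --- a contradiction. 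The second inclusion is symmetric, exactly as in Jurdzi\'nski's small-progress-measure correctness argument~\cite{Jur00}: one again exploits monotonicity of $\EXP$ to show that the algorithm cannot assign a rank strictly below the true one, so I would simply state that the other direction is analogous, mirroring the displayed argument with the roles of ``too high'' and ``too low'' exchanged.

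For the running time, I would directly collect the two bullet-point bounds already derived. The matrix-update work totals $O(n^2 \cdot \sum_{s\in S}\sum_{a_1,a_2}|\supp(s,a_1,a_2)|)$, since each state passes through $O(n^2)$ ranks and each fixed-rank phase costs $O(\sum_{a_1,a_2}|\supp(s,a_1,a_2)|)$ for incremental updates. The \algopred-invocation work is bounded by observing that for a fixed rank each $M^s$ entry can only worsen monotonically through its at most $7$ values, giving at most $6\,|\mov_1(s)|\,|\mov_2(s)|$ invocations per rank and hence $O(n^2\,|\mov_1(s)|\,|\mov_2(s)|)$ invocations over all ranks; multiplying by the per-call cost $O(|\mov_1(s)|^2|\mov_2(s)|^2)$ from Lemma~\ref{lemm: distribution algorithm time} yields $O(n^2\,|\mov_1(s)|^3|\mov_2(s)|^3)$ per state. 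Summing over $s$ and combining with the matrix-update term gives the claimed $O(n^2\cdot \sum_{s\in S}(|\mov_1(s)|^3\cdot |\mov_2(s)|^3 + \sum_{a_1,a_2}|\supp(s,a_1,a_2)|))$ bound.

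The main obstacle I anticipate is not any single calculation but pinning down the subtle bookkeeping in the correctness argument: precisely justifying that at the first offending iteration the algorithm's current sets sandwich the true fixpoint sets in the correct direction, and that the nested structure of the lexicographic rank $(i,j)$ faithfully tracks the outer-$\mu$/inner-$\mu$ structure of $W^* = \nu W.\,\mu U.\,\nu X.\,\mu Y.\,\nu Z.\,\EXP(W,U,X,Y,Z)$. In particular I would need the caching optimization (re-using a stored \algopred\ result when $M^s$ is unchanged) to be verified as semantics-preserving, and I would need the monotonicity of $\EXP$ --- which follows because each of Equation~\ref{exp:limit reach}, Equation~\ref{exp:almost reach}, Equation~\ref{exp:get 1} is monotone in its set arguments in the appropriate (increasing or decreasing) sense dictated by the $\mu$/$\nu$ alternation --- to hold in the exact polarity each of the five arguments requires. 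Since the excerpt explicitly reduces to ``the other inclusion is similar'' and supplies the first inclusion, the remaining work is to cite these pieces cleanly and invoke Lemma~\ref{lemm:incl1} and Lemma~\ref{lemm:incl2} to close the equivalence $\widetilde{W}^* = W^* = \val_1(\LimInfAvg(\cost),\bigstra_1^F)$.
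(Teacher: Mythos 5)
Your proposal is correct and takes essentially the same route as the paper: Lemmas~\ref{lemm:incl1} and~\ref{lemm:incl2} reduce the claim to $\widetilde{W}^*=W^*$, one rank inequality is established by the same first-offending-iteration argument using monotonicity of $\EXP$ together with the correctness of \algopred\ (Lemmas~\ref{lemm: distribution algorithm accept} and~\ref{lemm: distribution algorithm reject}), the other inclusion is deferred as ``similar'' in the style of~\cite{Jur00} exactly as the paper does, and the running time is assembled from the identical two bounds (matrix updates over $O(n^2)$ ranks, and at most $6\cdot|\mov_1(s)|\cdot|\mov_2(s)|$ \algopred\ calls per fixed rank at cost $O(|\mov_1(s)|^2\cdot|\mov_2(s)|^2)$ each by Lemma~\ref{lemm: distribution algorithm time}). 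There is no substantive deviation from the paper's own proof.
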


\section{Main result and Concluding Remarks}
We now summarize the main result, and conclude with an open question.

\begin{theorem}
The following assertions hold for concurrent mean-payoff games.
\begin{enumerate}

\item \emph{(Value~1 set characterization).}
Let $W^*= \nu W. \mu U. \nu X. \mu Y. \nu Z. \EXP(W,U,X,Y,Z)$, then  
we have 
\begin{align*}
W^* &=\val_1(\LimSupAvg(\cost),\bigstra_1^S) = 
\val_1(\LimSupAvg(\cost),\bigstra_1^F) \\[1.5ex]
&=\val_1(\LimInfAvg(\cost),\bigstra_1^S) =
\val_1(\LimInfAvg(\cost),\bigstra_1^F)
\end{align*}

\item \emph{(Running time).} The value~1 sets 
$\val_1(\LimSupAvg(\cost),\bigstra_1^S)=\val_1(\LimSupAvg(\cost),\bigstra_1^F)$
can be computed in time $O(n^2\cdot\sum_{s\in S}(|\mov_1(s)|^3\cdot |\mov_2(s)|^3+\sum_{a_1\in \mov_1(s),a_2\in \mov_2(s)}|\supp(s,a_1,a_2)|))$.

\item \emph{(Optimal patience).}
For all $\epsilon>0$, there exist stationary $\epsilon$-optimal strategies 
in the set $\val_1(\LimSupAvg(\cost),\bigstra_1^S)$ with patience at most 
$\left(\frac{\epsilon\cdot\delta_{\min}}{4}\right)^{-(2m)^{n}}$.
\end{enumerate}
\end{theorem}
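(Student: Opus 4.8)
The plan is to derive all three assertions as consequences of the inclusion lemmas (Lemma~\ref{lemm:incl1} and Lemma~\ref{lemm:incl2}), the running-time lemma (Lemma~\ref{lemm:algo}), and the patience lemmas (Lemma~\ref{lemm:final_val1} and Lemma~\ref{lemm:patience2}) established above; the theorem itself carries no new mathematical content beyond stitching these together with two elementary monotonicity observations.

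For the value~1 characterization (assertion~1), I would first record two trivial facts. Since $\bigstra_1^S \subseteq \bigstra_1^F$, taking the supremum over the larger class can only enlarge the value-1 set, so $\val_1(\Phi,\bigstra_1^S) \subseteq \val_1(\Phi,\bigstra_1^F)$ for every objective $\Phi$. Since $\LimInfAvg(\pat) \le \LimSupAvg(\pat)$ holds pointwise on every play, for every fixed strategy pair the corresponding expectations satisfy $\Exp[\LimInfAvg] \le \Exp[\LimSupAvg]$, whence $\val_1(\LimInfAvg,\C) \subseteq \val_1(\LimSupAvg,\C)$ for any class $\C$. I would then close the cycle of inclusions
\[
W^* \subseteq \val_1(\LimInfAvg,\bigstra_1^S) \subseteq \val_1(\LimInfAvg,\bigstra_1^F) \subseteq \val_1(\LimSupAvg,\bigstra_1^F) \subseteq W^*,
\]
where the first inclusion is Lemma~\ref{lemm:incl1}, the second is strategy-class monotonicity, the third is the pointwise inequality, and the last is Lemma~\ref{lemm:incl2}. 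Since the chain begins and ends at $W^*$, every set on it equals $W^*$. Finally, $\val_1(\LimSupAvg,\bigstra_1^S)$ is squeezed by $W^* = \val_1(\LimInfAvg,\bigstra_1^S) \subseteq \val_1(\LimSupAvg,\bigstra_1^S) \subseteq \val_1(\LimSupAvg,\bigstra_1^F) = W^*$ (the first inclusion already being part of Lemma~\ref{lemm:incl1}), so it too equals $W^*$, establishing all four equalities.

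Assertion~2 is then immediate: by assertion~1 the common value-1 set equals $W^*$, and Lemma~\ref{lemm:algo} states that \algo\ computes $W^*$ within the claimed time bound. For assertion~3, Lemma~\ref{lemm:final_val1} shows that the constructed stationary strategy $\sigma_1^\epsilon$ guarantees $\Exp_s^{\sigma_1^\epsilon,\sigma_2}[\LimInfAvg] \ge 1-\epsilon$ against every player-2 strategy from every $s \in W^*$, so $\sigma_1^\epsilon$ is an $\epsilon$-optimal stationary strategy on the value-1 set; Lemma~\ref{lemm:patience2} bounds its patience by $\left(\frac{\epsilon\cdot\delta_{\min}}{4}\right)^{-(2m)^{n}}$, which is exactly the stated bound.

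Because the heavy lifting is already done in the lemmas, there is no genuine obstacle here; the only thing to watch is the bookkeeping, namely arranging the inclusions so that $W^*$ sits at both ends of the cycle and verifying that each of the four named value-1 sets actually lies on it. I would also remark that assertion~3 is stated as an upper bound only because its \emph{optimality} (the matching double-exponential lower bound) follows from~\cite{HIM11,HKM09} for the reachability special case, and so requires no further argument here.
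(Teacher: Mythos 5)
Your proposal is correct and takes essentially the same route as the paper, whose own proof simply invokes Lemma~\ref{lemm:incl1} and Lemma~\ref{lemm:incl2} for the characterization, Lemma~\ref{lemm:algo} for the running time, and Lemma~\ref{lemm:patience2} for the patience bound. The only difference is that you spell out the two trivial monotonicity inclusions (strategy-class containment $\bigstra_1^S\subseteq\bigstra_1^F$ and the pointwise inequality $\LimInfAvg\leq\LimSupAvg$) and the resulting squeeze for $\val_1(\LimSupAvg(\cost),\bigstra_1^S)$, which the paper leaves implicit --- a harmless, indeed slightly more careful, presentation of the identical argument.
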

\begin{proof}
The first item follows from Lemma~\ref{lemm:incl2} together with 
Lemma~\ref{lemm:incl1}.
The second item comes from Lemma~\ref{lemm:algo}.
The third item follows from Lemma~\ref{lemm:patience2}.
\end{proof}

Notice that the patience closely matches the patience obtained for the concurrent reachability game Purgatory, by Hansen, Ibsen-Jensen and Miltersen~\cite[Theorem~10]{HIM11} (the bound for $m=2$ is also in~\cite{HKM09}). Concurrent reachability games is a subclass of concurrent mean-payoff games and always have $\epsilon$-optimal stationary strategies, for all $\epsilon>0$, and all states in Purgatory have value~1. Thus the example provides a closely matching lower bound for patience.

\smallskip\noindent{\bf Robustness.} 
Our results show that the value~1 set computation can be achieved
by an iterative algorithm with the $\EXP$ operator. 
Our algorithm for the $\EXP$ operator computation is based on the 
matrix construction $M$, and observe that the entries in the matrix
depends only on the support set, but not the precise probabilities.
It follows that given two concurrent games where the support sets of the 
transition functions match, but the precise transition probabilities may 
differ, the value~1 set remains unchanged.

\smallskip\noindent{\bf Concluding remarks.} 
In this work we considered concurrent mean-payoff games 
and presented a polynomial-time algorithm to compute the 
value~1 set for finite-memory strategies for player~1. 
An interesting open question is whether the value~1 set 
with infinite-memory strategies can also be computed in 
polynomial time.

\smallskip\noindent{\em Acknowledgement.} 
The research was partly supported by FWF Grant No P 23499-N23,  
FWF NFN Grant No S11407-N23 (RiSE), ERC Start grant (279307: Graph Games), 
and Microsoft faculty fellows award.

\bibliographystyle{plain}
\bibliography{diss}
\pagebreak
\section{Appendix --- Expanded mu-calculus formula\label{sec:Appendix}}

\smallskip\noindent{\bf Description of algorithm.}
Note that we established that if  
\[
W^*=\nu W. \mu U. \nu X. \mu Y. \nu Z. \EXP(W,U,X,Y,Z);
\]
then $W^*= \set{s \in S \mid \val(\LimInfAvg(\cost),\bigstra_1^F)(s) =1}$.
The $\mu$-calculus formula is a very succinct description of an 
algorithm.
The expanded iterative algorithm is presented as Algorithm~\ref{algo:mu-calc}.

\begin{algorithm}%[H]
\small
%\SetAlgoNoLine
%%\DontPrintSemicolon
\caption{Naive $\mu$-calculus Algorithm}\label{algo:mu-calc}
\KwIn{A concurrent mean-payoff game $G$ over the set of states $S$}
\KwOut{The set of states $W^*$} %%=\set{s \in S \mid \val(\LimInfAvg(\cost),\bigstra_1^F)(s) =1}$}
\BlankLine

$W\leftarrow S$\\
\Repeat{$W=W'$}{
 $W'\leftarrow W$\\
 $U\leftarrow \emptyset$\\
\Repeat{$U=U'$}{
 $U'\leftarrow U$\\
 $X\leftarrow W$\\
\Repeat{$X=X'$}{
 $X'\leftarrow X$\\
 $Y\leftarrow U$\\
\Repeat{$Y=Y'$}{
 $Y'\leftarrow Y$\\
 $Z\leftarrow X$\\
\Repeat{$Z=Z'$}{
 $Z'\leftarrow Z$\\
 $Z\!\leftarrow\!\text{\algopred}(W,U,X,Y,Z)$\\
}
 $Y\leftarrow Z$\\
}
 $X\leftarrow Y$\\
}
 $U\leftarrow X$\\
}
 $W\leftarrow U$\\
}
\Return $W$
\end{algorithm}

\pagebreak
\clearpage
\section{Technical appendix --- Computation of $\LimitReach$} 
We now present the details of the computation of 
%For a given CMPG $G$, we will in this section consider the computation of 
$\LimitReach(s,W,U,A_1,A_2)$. 
We will establish the Reject property and Accept properties~a---d of 
$\LimitReach$.
We first recall the properties:

\smallskip\noindent\emph{(Accept properties of $\LimitReach$).}
Accepts and returns the set $A_3 \subseteq A_2$ and 
a parametrized distribution $\dist_1^\epsilon$, for $0<\epsilon<\frac{1}{2}$, with support $\supp(\dist_1^\epsilon)\subseteq A_1$, such that the following properties
hold:
\begin{itemize}
\item (Accept property~a). For all $a_2 \in A_3$, the distribution 
$\dist_1^\epsilon$ satisfies Equation~\ref{exp:limit reach} for $a_2$.

\item (Accept property~b). For all $a_2 \in (A_2 \setminus A_3)$, 
we have $\dest(s,\dist_1^\epsilon,a_2) \cap \ov{W}=\emptyset$ and 
$\dest(s,\dist_1^\epsilon,a_2) \cap U=\emptyset$.

\item (Accept property~c). For all 
$a_1 \in (A_1\setminus \supp(\dist_1^\epsilon))$, there exists an action $a_2$ 
in $(A_2 \setminus A_3)$ such that $\dest(s,a_1,a_2)\cap\ov{W}\neq \emptyset$.

\item (Accept property~d). The set $A_3$ is largest in the sense that for all
$a_2 \in (A_2 \setminus A_3)$ and for all parametrized distributions $\dist_1^{\epsilon}$ 
over $A_1$, the Equation~\ref{exp:limit reach} cannot be satisfied,
while satisfying actions in $A_2$ using Equation~\ref{exp:limit reach}, 
or Equation~\ref{exp:almost reach}, or Equation~\ref{exp:get 1}, for any $X,Y,Z$ 
such that $U \subseteq Y \subseteq Z \subseteq X \subseteq W$. 

\end{itemize}

The computation of $\LimitReach(s,W,U,A_1,A_2)$ will be done similar to the 
computation of the similar named $\LimitReach(s,W,U)$ in~\cite{dAHK98,CdAH08}, and 
we will follow notations from~\cite{CdAH08}. 
We will use the two methods $\stay$ and $\cover$, defined as follows:
\begin{align*}
\stay(s,W,A_1,A_2,A) & = \set{a_1\in A_1\mid \forall a_2\in (A_2 \setminus A). 
\bigl[ (\dest(s,a_1,a_2) \cap \ov{W}) = \emptyset \bigr]}  \\[2ex]
\cover(s,U,A_1,A_2,A)& = \set{a_2\in A_2\mid \exists a_1\in (A_1\cap A). \bigl[ (\dest(s,a_1,a_2) \cap U)\neq \emptyset \bigr]}
\end{align*}

The algorithm $\LimitReach(s,W,U,A_1,A_2)$ is then as follows:
\begin{enumerate}
\item Let $A^*\leftarrow \mu A. \bigl[ \stay(s,W,A_1,A_2,A) \cup 
\cover(s,U,A_1,A_2,A) \bigr]$ 
and for all $a_1\in (A^*\cap A_1)$ let $\ell(a_1)$ be the level of $a_1$ 
in the formula.
\item If $(A^*\cap A_1)$ is empty, return reject. 
Otherwise, return accept and $(A^*\cap A_2,\dist_1^\epsilon)$, where 
$\dist_1^\epsilon$ is the parametrized distribution, with support 
$(A^*\cap A_1)$, and the ranking function of $a_1\in (A^*\cap A_1)$ 
is $\frac{\ell(a_1)-1}{2}$.
\end{enumerate}

The algorithm for $\LimitReach(s,W,U)$ of~\cite{dAHK98,CdAH08} can be obtained
as a special case of our description above
as follows: 
\begin{enumerate}
\item Let $(A_3,\dist_1^\epsilon)\leftarrow \LimitReach(s,W,U,\mov_1(s),\mov_2(s))$. 
If either (i)~$\LimitReach(s,W,U,\mov_1(s),\mov_2(s))$ rejects; or 
(ii)~$A_3\neq \mov_2(s)$, then return reject, otherwise return accept 
and $\dist_1^\epsilon$.
\end{enumerate}

We will now show that $\LimitReach(s,W,U,A_1,A_2)$ satisfies the desired
properties.

\begin{lemma}
The algorithm $\LimitReach(s,W,U,A_1,A_2)$ satisfies the Reject property of $\LimitReach$ and Accept properties a---d. 
Also, the patience of $\dist_1^\epsilon$ is at most $\left(\frac{\epsilon\cdot \delta_{\min}}{2}\right)^{|A_1|-1}$.
\end{lemma}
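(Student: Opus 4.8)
The plan is to read everything off the structure of the least fixpoint $A^* = \mu A.[\stay(s,W,A_1,A_2,A) \cup \cover(s,U,A_1,A_2,A)]$ together with its level function $\ell$. First I would record the \emph{staircase} invariant of the iteration $A_0 = \emptyset$, $A_{t+1} = \stay(s,W,A_1,A_2,A_t) \cup \cover(s,U,A_1,A_2,A_t)$. Since $A_1 \subseteq \mov_1(s)$ and $A_2 \subseteq \mov_2(s)$ are disjoint, $\stay$ only returns player-1 actions and $\cover$ only player-2 actions, and $\cover$ needs a witnessing player-1 action already present; checking one step shows the levels alternate, with player-1 actions appearing at odd levels $2k+1$ (rank $k$) and player-2 actions at even levels $2k+2$. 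The two facts I would extract are: (i) if $a_2 \in A^*\cap A_2$ enters at level $2k+2$, it is covered by some $a_1 \in A^*\cap A_1$ of rank at most $k$, i.e. $\dest(s,a_1,a_2)\cap U \neq \emptyset$; and (ii) if a support action $a_1'$ reaches $\ov{W}$ against this $a_2$, then the $\stay$ test that admitted $a_1'$ (at level $2j+1$, safe against $A_2\setminus A_{2j}$) forces $a_2 \in A_{2j}$, hence $\rank(a_1') = j \geq k+1$. This one-rank separation between the covering action and every leaking action is the engine of the whole lemma.

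Next I would dispatch Reject and Accept properties a--c. For Reject: $A^*\cap A_1 = \emptyset$ forces the level-1 set $\stay(s,W,A_1,A_2,\emptyset)$ to be empty (with no player-1 action ever present, $\cover$ adds nothing and $\stay$ never grows), which says exactly that every $a_1 \in A_1$ leaks to $\ov{W}$ against some $a_2 \in A_2$. For Accept property~a I plug the separation into the probabilities: writing the normalisation $Z = \sum_{a_1 \in A^*\cap A_1}\beta^{\rank(a_1)}$ with $1 \le Z \le |A_1|$ and base $\beta$ tied to $\epsilon$, the covering action gives $\trans(s,\dist_1^\epsilon,a_2)(U) \ge \delta_{\min}\beta^{k}/Z$, while every leaking action carries mass at most $\beta^{k+1}/Z$, so $\trans(s,\dist_1^\epsilon,a_2)(\ov{W}) \le |A_1|\beta^{k+1}/Z$; the extra factor $\beta$, together with the freedom to shrink $\beta$ by a constant (the operator quantifies over all $0<\epsilon<\tfrac12$, so a constant blow-up of the parameter is harmless), yields $\epsilon\cdot\trans(s,\dist_1^\epsilon,a_2)(U) > \trans(s,\dist_1^\epsilon,a_2)(\ov{W})$, i.e. Equation~\ref{exp:limit reach}. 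Accept property~b is then immediate: for $a_2 \in A_2\setminus A_3$ and any support action $a_1$ entering at level $2k+1$, since $a_2 \notin A^* \supseteq A_{2k}$ the $\stay$ admission gives $\dest(s,a_1,a_2)\cap\ov{W}=\emptyset$, and $a_2 \notin \cover$ gives $\dest(s,a_1,a_2)\cap U=\emptyset$. Accept property~c is the dual: an action $a_1 \in A_1\setminus\supp(\dist_1^\epsilon)$ satisfies $a_1 \notin A^* = \stay(s,W,A_1,A_2,A^*)\cup\cover(\dots)$, so it fails the $\stay$ test at the fixpoint, which exhibits an uncovered $a_2 \in A_2\setminus A_3$ with $\dest(s,a_1,a_2)\cap\ov{W}\neq\emptyset$.

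The main obstacle is Accept property~d, the maximality claim: no parametrized distribution over $A_1$ can satisfy an action $a_2^\star \in A_2\setminus A_3$ through Equation~\ref{exp:limit reach} while simultaneously satisfying every action of $A_2$ by one of the three equations, for any admissible $U \subseteq Y \subseteq Z \subseteq X \subseteq W$. The observation I would exploit is that satisfying Equation~\ref{exp:almost reach} or Equation~\ref{exp:get 1} against an action forces the play to remain inside $X \subseteq W$ against it, while Equation~\ref{exp:limit reach} forces $\trans(s,\dist_1^\epsilon,a_2)(\ov{W}) < \epsilon\cdot\trans(s,\dist_1^\epsilon,a_2)(U) \to 0$; hence as $\epsilon \to 0$ the witness mass concentrates on actions safe against the actions they must handle. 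I then run an induction on the fixpoint level showing that the chain ``$a_2^\star$ is satisfied $\Rightarrow$ some support action reaches $U$ against it and is itself safe $\Rightarrow$ that action lies in $A^*$ $\Rightarrow$ $a_2^\star \in \cover$'' closes up and places $a_2^\star$ in $A^*\cap A_2 = A_3$, the required contradiction. This closure/monotonicity argument mirrors the maximality proof for $\LimitReach$ in~\cite{dAHK98,CdAH08}, and the care lies in making it uniform over the whole family $\{\dist_1^\epsilon\}_{0<\epsilon<1/2}$ and over all admissible $U\subseteq Y\subseteq Z\subseteq X\subseteq W$. Finally the patience bound is a counting statement: at most $|A_1|$ support actions occupy distinct ranks $0,\dots,|A_1|-1$, so the smallest mass is at least $\beta^{|A_1|-1}/Z$, and fixing $\beta = \tfrac{\epsilon\delta_{\min}}{2}$ gives patience at most $\left(\tfrac{\epsilon\cdot\delta_{\min}}{2}\right)^{-(|A_1|-1)}$.
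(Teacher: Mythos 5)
Your staircase reading of the least fixpoint is correct: player-1 actions enter $\stay$ at odd levels, player-2 actions enter $\cover$ at even levels, and the one-rank separation (the covering action of an $a_2$ entering at level $2k+2$ has rank at most $k$, while any support action leaking to $\ov{W}$ against that $a_2$ has rank at least $k+1$) is exactly the structural fact underlying the construction. Your treatments of the Reject property and of Accept properties~b and~c coincide with the paper's, which reads them off the fixpoint in the same way. The divergence is elsewhere: the paper does \emph{not} re-derive Accept property~a or the patience bound --- it observes that restricting player~1 to $A^*\cap A_1$ and player~2 to $A_3$ reduces to the classical operator and invokes \cite[Lemma~4]{CdAH08} wholesale --- and for Accept property~d it uses a short two-case contradiction rather than your level induction.

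Your direct derivation of Equation~\ref{exp:limit reach} has a genuine quantitative gap. Multiple leaking actions can share rank $k+1$ (nothing in the iteration forces distinct levels), so your estimate requires $\epsilon\cdot\delta_{\min}\cdot\beta^{k} > (|A_1|-1)\cdot\beta^{k+1}$, i.e.\ $\beta < \epsilon\cdot\delta_{\min}/(|A_1|-1)$. That is a shrink by a factor depending on $|A_1|$, not ``a constant,'' and it is inconsistent with your final sentence fixing $\beta=\frac{\epsilon\cdot\delta_{\min}}{2}$ to claim patience $\left(\frac{\epsilon\cdot\delta_{\min}}{2}\right)^{-(|A_1|-1)}$: as written your argument only yields patience of order $\left(\frac{\epsilon\cdot\delta_{\min}}{2(|A_1|-1)}\right)^{-(|A_1|-1)}$. (This weaker constant would still support the paper's double-exponential patience theorem, but it is not the stated bound; the paper sidesteps exactly this bookkeeping by citation.)

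For Accept property~d, your closure induction leaves its pivotal step unproved. From Equation~\ref{exp:limit reach} holding for $a_2^\star$ you obtain \emph{some} support action reaching $U$ against $a_2^\star$, but to place it in $A^*$ via $\stay$ you must show it is safe against \emph{all} of $A_2\setminus A_3$ --- and the $U$-reaching action may be a low-mass action that leaks against other columns handled by Equation~\ref{exp:limit reach}; ``mass concentrates on safe actions'' bounds the total leaking mass per column, not the safety of that particular action. The paper closes this with a different device: fix a single $\epsilon<\delta_{\min}/|A_1|$ and a single distribution $\dist_1$, and split on whether $\supp(\dist_1)\subseteq\supp(\dist_1^\epsilon)$. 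In the first case Accept property~b makes $U$ unreachable against every $a_2'\in(A_2\setminus A_3)$, so Equation~\ref{exp:limit reach} fails outright; in the second, take the \emph{maximum-mass} action $a_1$ outside $\supp(\dist_1^\epsilon)$, use Accept property~c to produce a column $a_2'\in(A_2\setminus A_3)$ it leaks against, and compare $\trans(s,\dist_1,a_2')(\ov{W})\geq\delta_{\min}\cdot\dist_1(a_1)$ with $\trans(s,\dist_1,a_2')(U)\leq(|A_1|-1)\cdot\dist_1(a_1)$ (every heavier action lies in $\supp(\dist_1^\epsilon)$ and cannot reach $U$ by property~b), while Equations~\ref{exp:almost reach} and~\ref{exp:get 1} are excluded since leaking $W$ means leaking $X$. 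That max-mass pivot, applied at one fixed small $\epsilon$ and uniformly in all admissible $U\subseteq Y\subseteq Z\subseteq X\subseteq W$, is precisely what your sketch is missing; you would need either to adopt it or to spell out your induction in full.
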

\begin{proof}
We establish the desired properties.

\smallskip\noindent{\bf The reject property of $\LimitReach$.} 
We see that $\LimitReach(s,W,U,A_1,A_2)$ only rejects if $(A^*\cap A_1)$ is empty. 
By definition of $\stay(s,W,A_1,A_2,A)$ we have $(A^*\cap A_1)$ is empty iff for all 
$a_1\in A_1$ there exists $a_2\in (A_2\setminus A^*)$ such that $(\dest(s,a_1,a_2)\cap \ov{W})\neq \emptyset$. 
We also see the reverse, since we see that also $(A_2\cap A^*)$ is empty if $(A^*\cap A_1)$ is empty 
by definition of $\cover(s,U,A_1,A_2,A)$. This implies that the empty set is a fixpoint of 
$\mu A. \bigl[ \stay(s,W,A_1,A_2,A) \cup \cover(s,U,A_1,A_2,A) \bigr]$ and 
thus must be $A^*$.
Since $A^*$ is empty, it follows that for all $a_1\in A_1$ there exists $a_2\in (A_2\setminus A^*)=A_2$ 
such that $(\dest(s,a_1,a_2)\cap \ov{W})\neq \emptyset$. 
Hence, if $\LimitReach(s,W,U,A_1,A_2)$ rejects, then the reject property of $\LimitReach$ is satisfied.

\smallskip\noindent{\bf Properties of the set $A^*$.} 
We have that if $\LimitReach(s,W,U,A_1,A_2)$ returns $(A_3,\dist_1^\epsilon)$, then  $A^*=(\supp(\dist_1^\epsilon)\cup A_3)$ and $A^*$ is a fixpoint of 
$\mu A. \bigl[ \stay(s,W,A_1,A_2,A) \cup \cover(s,U,A_1,A_2,A) \bigr]$.

\smallskip\noindent{\bf Accept property a.} 
We note that if we restrict the set of actions of player~1 to $A^* \cap A_1$ and actions of 
player~2 to $A_3$, then $\LimitReach(s,W,U)$ would return accept and the same parametrized
distribution, and then the proof of~\cite[Lemma~4]{CdAH08} ensures Accept property~a and
the desired patience.

\smallskip\noindent{\bf Accept property b.}  
We see that for an action $a_1$ to be in $(A^*\cap A_1)=\supp(\dist_1^\epsilon)$, 
by definition of $\stay(s,W,A_1,A_2,A^*)$, for all $a_2$ in $(A^* \cap A_2)=A_3$ we have that  
$(\dest(s,a_1,a_2) \cap \ov{W}) = \emptyset$ (or equivalently that $(\dest(s,\dist_1^\epsilon,a_2) \cap \ov{W}) = \emptyset$). 
This establishes the first half of Accept property~b. Also, we see that if an an action $a_2$ is in $(A_2\setminus A^*)=(A_2\setminus A_3)$, then by definition of $\cover(s,U,A_1,A_2,A^*)$ 
for all $a_1$ in $(A^*\cap A_1)=\supp(\dist_1^\epsilon)$ we have that $(\dest(s,a_1,a_2)\cap U)=\emptyset$ (or equivalently that $(\dest(s,\dist_1^\epsilon,a_2) \cap U) = \emptyset$). 
This establishes the second half of Accept property~b.

\smallskip\noindent{\bf Accept property c.}  For $A^*$ to be a fixpoint we must have, by definition of $\stay(s,W,A_1,A_2,A^*)$, that for each action $a_1\in (A_1 \setminus A^*)=(A_1\setminus \supp(\dist_1^\epsilon))$ 
that the condition to be in $\stay(s,W,A_1,A_2,A^*)$ must be violated and thus, there exists $a_2\in (A_2\setminus A^*)=(A_2\setminus A_3)$ 
such that $(\dest(s,a_1,a_2) \cap \ov{W}) \neq \emptyset$. 
This establishes Accept property~c.

\smallskip\noindent{\bf Accept property d.}
Along with $U$ and $W$ consider any $X,Y,Z$ such that $U \subseteq Y \subseteq Z \subseteq X \subseteq W$.
Consider a real number $0<\epsilon<\frac{\delta_{\min}}{|A_1|}$ and a distribution 
$\dist_1$ over $A_1$. 
We will show that if Equation~\ref{exp:limit reach} is satisfied by $\dist_1$ for some action 
$a_2 \in (A_2 \setminus A_3)$, then there is some action $a_2'\in A_2$ which is not satisfied by either (i)~Equation~\ref{exp:limit reach}; or (ii)~Equation~\ref{exp:almost reach}; or (iii)~Equation~\ref{exp:get 1}. 
The proof will be by contradiction and assume towards contradiction that such an action $a_2$ exists. 
Let $A_4\subseteq A_2$ be the set of actions which does satisfy Equation~\ref{exp:limit reach} by $\dist_1$ and let the remaining actions be satisfied by either Equation~\ref{exp:almost reach} or Equation~\ref{exp:get 1}. Notice that $A_4\not\subseteq A_3$, since $a_2\in A_4$ and $a_2\not\in A_3$. 

We consider two cases depending on whether or not $\supp(\dist_1)\subseteq\supp(\dist_1^\epsilon)$ 
to establish the result. 
\begin{itemize}
\item We first consider the case, where $\supp(\dist_1)\subseteq\supp(\dist_1^\epsilon)$. 
Then Equation~\ref{exp:limit reach} is violated for all $a_2'\in (A_2\setminus A_3)$, 
since $U$ cannot be reached by Accept property~b. In particular, it must be violated for $a_2$. 
That is a contradiction.

\item We next consider the case, where $\supp(\dist_1)\not \subseteq \supp(\dist_1^\epsilon)$.
Let $a_1\in (\supp(\dist_1)\setminus \supp(\dist_1^\epsilon))$ be an action, 
such that $a_1\in \arg\max_{a_1'\in (\supp(\dist_1)\setminus \supp(\dist_1^\epsilon))}\dist_1(a_1')$. 
By Accept property c, there exists an action $a_2' \in (A_2 \setminus A_3)$ such that $\dest(s,a_1,a_2')\cap \ov{W}\neq \emptyset$, 
since $a_1\in(\supp(\dist_1)\setminus \supp(\dist_1^\epsilon))\subseteq (A_1\setminus \supp(\dist_1^{\epsilon}))$.
We again split into two cases. Either $a_2'$ is in $A_4$ or not.
\begin{itemize}
\item We first consider the case then $a_2'\in A_4$. We will show that we go to $\ov{W}$ with too high probability, compared to the probability with which we go to $U$. We see that $\delta(s,\dist_1,a_2')(\ov{W})\geq \delta_{\min}\cdot \dist_1(a_1)$, by definition of $a_2'$. Each action $a_1'$ in $\supp(\dist_1^{\epsilon})$ ensures that $\dest(s,a_1',a_2')\cap U=\emptyset$ by Accept property~b, since $a_2'\not\in A_3$. It follows that $\delta(s,\dist_1,a_2')(U)\leq \dist_1(a_1) \cdot (|A_1|-1)$. This is because each action $a_1'$ such that $\dist(a_1')>\dist(a_1)$ are in $\supp(\dist_1^{\epsilon})$ by definition of $a_1$ and there are at most $|A_1|-1$ actions in $(\supp(\dist_1)\setminus\supp(\dist_1^{\epsilon}))$
(since $\dist_1$ and $\dist_1^\epsilon$ are distributions over $A_1$ and $|\supp(\dist_1^{\epsilon})|\geq 1$). But then $\delta(s,\dist_1,a_2')(U)\cdot \epsilon<\delta(s,\dist_1,a_2')(\ov{W})$ and 
thus Equation~\ref{exp:limit reach} is violated by $\dist_1$ and $a_2'$. This contradicts either  that $a_2'\in A_4$ or the definition of $A_4$.

\item We next consider the case then $a_2'\in (A_2\setminus A_4)$. Recall that $\dest(s,\dist_1,a_2') \cap \ov{W}\neq \emptyset$. 
Hence, Equation~\ref{exp:almost reach} and Equation~\ref{exp:get 1} are violated, since $\dest(s,\dist_1,a_2')\cap \ov{X} \neq \emptyset$ 
(because $X\subseteq W$ and if $\ov{W}$ is reached with positive probability, then $\ov{X}$ is reached with positive probability). 
Moreover, Equation~\ref{exp:limit reach} cannot be satisfied either, since $a_2'\not\in A_4$. Thus we have a contradiction.
\end{itemize}
\end{itemize}
Thus, in all cases we reach contradiction and, hence Accept property~d is satisfied.

The desired result follows.
\end{proof}

\end{document}